\def\input@path{{styles/}{../styles/}}
    \newcommand{\DCGVer}[1]{}%
    \newcommand{\RegVer}[1]{#1}%
    \def\UseBibLatex{1}
    \newcommand{\DCGVer}[1]{#1}%
    \newcommand{\RegVer}[1]{}%
\newcommand{\SarielComp}[1]{}
\newcommand{\NotSarielComp}[1]{#1}%
\newcommand{\SarielComp}[1]{#1}%
\newcommand{\NotSarielComp}[1]{}%
   \definecolor{DarkBlue}{rgb}{0,0,0.50}
   \definecolor[named]{DarkBlue}{HTML}{3D7391}%
\providecommand{\TPDF}[2]{\texorpdfstring{#1}{#2}}
\newcommand{\UsePackage}[1]{%
  \IfFileExists{../styles/#1.sty}{%
      \usepackage{../styles/#1}%
   }{%
      \IfFileExists{./styles/#1.sty}{%
         \usepackage{styles/#1}%
      }{%
         \usepackage{#1}%
      }%
   }%
}
\providecommand{\BibLatexMode}[1]{}
\providecommand{\BibTexMode}[1]{#1}
  \renewcommand{\BibLatexMode}[1]{}
  \renewcommand{\BibTexMode}[1]{#1}
  \renewcommand{\BibLatexMode}[1]{#1}
  \renewcommand{\BibTexMode}[1]{}
\theoremstyle{thmstyleone}%
\newtheorem{theorem}{Theorem}%
\newtheorem{proposition}[theorem]{Proposition}%
\newtheorem{lemma}[theorem]{Lemma}%
\newtheorem{corollary}[theorem]{Corollary}%
\theoremstyle{thmstyletwo}%
\newtheorem{example}{Example}%
\newtheorem{remark}{Remark}%
\theoremstyle{thmstylethree}%
\newtheorem{definition}{Definition}%
\theoremstyle{plain}%
\newtheorem{theorem}{Theorem}[section]
\newtheorem{lemma}[theorem]{Lemma}
\newtheorem{corollary}[theorem]{Corollary}
\theoremstyle{plain}%
\newtheorem*{remark:unnumbered}[theorem]{Remark}%
\newtheorem{remark}[theorem]{Remark}%
\newtheorem{definition}[theorem]{Definition}
\newcommand{\myqedsymbol}{\rule{2mm}{2mm}}
\theoremstyle{nonumberplain}%
\newtheorem{proof}{Proof:}%
\newtheorem{proof:sketch}{Proof (sketch):}%
\definecolor{blue25emph}{rgb}{0, 0, 11}
\providecommand{\emphic}[2]{%
   \textcolor{blue25emph}{%
      \textbf{\emph{#1}}}%
   \index{#2}}
\providecommand{\emphi}[1]{\emphic{#1}{#1}}
\definecolor{almostblack}{rgb}{0, 0, 0.3}
\providecommand{\emphw}[1]{{\textcolor{almostblack}{\emph{#1}}}}%
\newcommand{\atgen}{\symbol{'100}}
\newcommand{\SarielThanks}[1]{\thanks{Department of Computer Science;
      University of Illinois; 201 N. Goodwin Avenue; Urbana, IL,
      61801, USA; {\tt sariel\atgen{}illinois.edu};%
      {\tt\url{https://sarielhp.org}}. #1}}
\newcommand{\HLink}[2]{\hyperref[#2]{#1~\ref*{#2}}}
\newcommand{\HLinkSuffix}[3]{\hyperref[#2]{#1\ref*{#2}{#3}}}
\newcommand{\figlab}[1]{\label{fig:#1}}
\newcommand{\figref}[1]{\HLink{Figure}{fig:#1}}
\newcommand{\seclab}[1]{{\label{sec:#1}}}
\newcommand{\secref}[1]{\HLink{Section}{sec:#1}}
\newcommand{\thmlab}[1]{{\label{theo:#1}}}
\newcommand{\thmref}[1]{\HLink{Theorem}{theo:#1}}
\newcommand{\corlab}[1]{\label{cor:#1}}
\newcommand{\corref}[1]{\HLink{Corollary}{cor:#1}}%
\providecommand{\deflab}[1]{\label{def:#1}}
\newcommand{\defref}[1]{\HLink{Definition}{def:#1}}
\newcommand{\steplab}[1]{\label{step:#1}}
\newcommand{\stepref}[1]{\HLinkSuffix{}{step:#1}{}}
\newcommand{\lemlab}[1]{\label{lemma:#1}}
\newcommand{\lemref}[1]{\HLink{Lemma}{lemma:#1}}%
\newcommand{\tbllab}[1]{\label{table:#1}}
\newcommand{\tblref}[1]{\HLink{Table}{table:#1}}
\providecommand{\eqlab}[1]{}%
\renewcommand{\eqlab}[1]{\label{equation:#1}}
\newcommand{\Eqref}[1]{\HLinkSuffix{Eq.~(}{equation:#1}{)}}
\providecommand{\remove}[1]{}%
\newcommand{\Set}[2]{\left\{ #1 \;\middle\vert\; #2 \right\}}
\newcommand{\floor}[1]{\left\lfloor {#1} \right\rfloor}
\renewcommand{\th}{th\xspace}
\renewcommand{\Re}{\mathbb{R}}%
\newcommand{\reals}{\Re}%
\newlist{compactenumA}{enumerate}{5}%
\setlist[compactenumA]{topsep=0pt,itemsep=-1ex,partopsep=1ex,parsep=1ex,%
   label=(\Alph*)}%
\newlist{compactenumA*}{enumerate*}{5}%
\setlist[compactenumA*]{topsep=0pt,itemsep=-1ex,partopsep=1ex,parsep=1ex,%
   label=(\Alph*)}%
\newlist{compactenuma}{enumerate}{5}%
\setlist[compactenuma]{topsep=0pt,itemsep=-1ex,partopsep=1ex,parsep=1ex,%
   label=(\alph*)}%
\newlist{compactenumI}{enumerate}{5}%
\setlist[compactenumI]{topsep=0pt,itemsep=-1ex,partopsep=1ex,parsep=1ex,%
   label=(\Roman*)}%
\newlist{compactenumi}{enumerate}{5}%
\setlist[compactenumi]{topsep=0pt,itemsep=-1ex,partopsep=1ex,parsep=1ex,%
   label=(\roman*)}%
\newlist{compactitem}{itemize}{5}%
\setlist[compactitem]{topsep=0pt,itemsep=-1ex,partopsep=1ex,parsep=1ex,%
   label=\bullet}%
\DeclareFontFamily{U}{BOONDOX-calo}{\skewchar\font=45 }
\DeclareFontShape{U}{BOONDOX-calo}{m}{n}{
  <-> s*[1.05] BOONDOX-r-calo}{}
\DeclareFontShape{U}{BOONDOX-calo}{b}{n}{
  <-> s*[1.05] BOONDOX-b-calo}{}
\DeclareMathAlphabet{\mathcalb}{U}{BOONDOX-calo}{m}{n}
\SetMathAlphabet{\mathcalb}{bold}{U}{BOONDOX-calo}{b}{n}
\DeclareMathAlphabet{\mathbcalb}{U}{BOONDOX-calo}{b}{n}
\numberwithin{figure}{section}%
\numberwithin{table}{section}%
\numberwithin{equation}{section}%
\newcommand{\SaveContent}[2]{%
   \expandafter\newcommand{#1}{#2}%
}
\DeclareFontFamily{U}{BOONDOX-calo}{\skewchar\font=45 }
\DeclareFontShape{U}{BOONDOX-calo}{m}{n}{
  <-> s*[1.05] BOONDOX-r-calo}{}
\DeclareFontShape{U}{BOONDOX-calo}{b}{n}{
  <-> s*[1.05] BOONDOX-b-calo}{}
\DeclareMathAlphabet{\mathcalb}{U}{BOONDOX-calo}{m}{n}
\SetMathAlphabet{\mathcalb}{bold}{U}{BOONDOX-calo}{b}{n}
\DeclareMathAlphabet{\mathbcalb}{U}{BOONDOX-calo}{b}{n}
\newcommand{\UZ}[1]{\Mh{\Uparrow}_{#1}}%
\newcommand{\LX}[1]{\Mh{\Downarrow}\pth{#1}}%
\newcommand{\LY}[2]{\Mh{\Downarrow}_{#1}\pth{#2}}%
\newcommand{\eY}[2]{\Mh{\Updownarrow}_{#1}\pth{#2}}%
\newcommand{\uX}[1]{\Mh{\uparrow}\pth{#1}}%
\newcommand{\uY}[2]{\Mh{\uparrow}_{#1}\pth{#2}}%
\newcommand{\uZ}[1]{\Mh{\uparrow}_{#1}}%
\newcommand{\lX}[1]{\Mh{\downarrow}\pth{#1}}%
\newcommand{\lY}[2]{\Mh{\downarrow}_{#1}\pth{#2}}%
\newcommand{\lZ}[1]{\Mh{\downarrow}_{#1}}%
\newcommand{\opt}{\Mh{\mathsf{k}}}%
\newcommand{\optX}[1]{\Mh{\mathsf{k}}_{#1}}%
\newcommand{\optY}[2]{\Mh{\mathsf{k}}_{#1}(#2)}%
\newcommand{\optWX}[1]{\Mh{\mathsf{k}}^{\mathsf{w}}_{#1}}%
\newcommand{\optWY}[2]{\Mh{\mathsf{k}}^{\mathsf{w}}_{#1}(#2)}%
\newcommand{\optHX}[1]{\Mh{\mathsf{k}}^{\mathsf{h}}_{#1}}%
\newcommand{\etal}{\textit{et~al.}\xspace}
\newcommand{\etalNP}{\textit{et~al.}\xspace}
\newcommand{\sphere}{\ensuremath{\mathbb{S}}}
\newcommand{\HH}{\Mh{\mathcal{H}}}
\newcommand{\PP}{\Mh{\mathcal{P}}}
\newcommand{\SP}{\Mh{\mathcal{Z}}}
\newcommand{\ICS}{\Mh{\Xi}}%
\newcommand{\IS}{\Mh{\mathcal{I}}}%
\newcommand{\JS}{\Mh{\mathcal{J}}}%
\providecommand{\SS}{\Mh{S}}%
\renewcommand{\SS}{\Mh{S}}%
\newcommand{\LS}{\Mh{L}}%
\newcommand{\LSA}{\Mh{M}}%
\newcommand{\seg}{\Mh{s}}%
\newcommand{\intervalX}[1]{\Mh{\sphericalangle}{#1}}%
\newcommand{\OS}{\Mh{{K}}}%
\newcommand{\GS}{\Mh{G}}%
\newcommand{\QS}{\Mh{Q}}%
\newcommand{\segfn}{\Mh{\mathcalb{s}}}%
\newcommand{\segfnY}[2]{\segfn\pth{#1, #2}}
\newcommand{\shootfn}{\Mh{\mathcalb{f}}}%
\newcommand{\shoot}[2]{\shootfn\pth{#1,#2}}
\newcommand{\Line}{\Mh{\mathcalb{l}}}%
\newcommand{\Arr}{\Mh{\mathcal{A}}}%
\newcommand{\ArrX}[1]{\Arr\pth{#1}}%
\newcommand{\AIX}[1]{\sphericalangle{#1}}
\newcommand{\dualX}[1]{#1^{\star}}%
\newcommand{\polarX}[1]{#1^{\odot}}%
\newcommand{\dirs}{\Mh{\sphere}}
\newcommand{\sPntY}[2]{\overline{#1}_{#2}}
\newcommand{\hnSY}[2]{\hn_{#1,#2}}
\newcommand{\hpSY}[2]{\hp_{#1,#2}}
\newcommand{\LineSY}[2]{\Line_{#1,#2}}
\newcommand{\titleX}[1]{\textsc{#1}\xspace}
\newcommand{\CTY}[2]{#1 \sqcap #2}
\newcommand{\hp}{\overline{\hn}}%
\newcommand{\VX}[1]{\Mh{V}\pth{#1}}
\newcommand{\face}{\Mh{\mathsf{F}}}
\newcommand{\faceo}{\face_{\!\!\origin}}
\newcommand{\faceoX}[1]{\faceo\pth{#1}}
\providecommand{\IntRange}[1]{\mleft\llbracket #1 \mright\rrbracket}
\newcommand{\IRX}[1]{\IntRange{#1}}%
\newcommand{\extX}[1]{\pa_{#1}}%
\newcommand{\extLX}[1]{\Line_{#1}}%
\newcommand{\hn}{\Mh{\mathcalb{h}}}%
\newcommand{\hnX}[1]{\hn\pth{#1}}%
\newcommand{\hpX}[1]{\hp\pth{#1}}%
\newcommand{\hnDirX}[1]{\hn_{#1}}%
\newcommand{\hpDirX}[1]{\hp_{#1}}%
\newcommand{\gnDirX}[1]{\Mh{\mathcalb{g}}_{#1}}%
\newcommand{\hnDirY}[2]{\hn_{#1}\pth{#2}}%
\newcommand{\hnDirMY}[2]{\hn_{#1} \ominus { #2 }}%
\newcommand{\hpDirMY}[2]{\hp_{#1} \ominus { #2 }}%
\providecommand{\Mh}[1]{#1}%
\newcommand{\PS}{\Mh{P}}%
\newcommand{\PSA}{\Mh{Q}}%
\newcommand{\CS}{\Mh{C}}%
\newcommand{\eps}{\varepsilon}%
\newcommand{\ProjWidthC}{\Mh{\overline{\omega}}}%
\newcommand{\pwY}[2]{\ProjWidthC\pth{#1, #2}}
\newcommand{\pwIY}[2]{\Mh{I}_{#1}\pth{ #2}}
\newcommand{\pwIX}[1]{\Mh{I}_{#1}}
\newcommand{\DotProd}[2]{\permut{{#1},{#2}}}
\newcommand{\permut}[1]{\left\langle {#1} \right\rangle}
\providecommand{\si}[1]{#1}
\newcommand{\InPoly}{\Mh{U}}%
\newcommand{\coreY}[2]{\Mh{\mathrm{core}}\pth{#1,#2}}%
\newcommand{\abs}[1]{{\lvert #1 \rvert}}%
\newcommand{\vabs}[1]{{\left\lvert #1 \right\rvert}}%
\newcommand{\cardin}[1]{\lvert #1 \rvert}%
\newcommand{\lenX}[1]{\lvert #1 \rvert}%
\newcommand{\slabY}[2]{\Mh{\mathrm{slab}}\pth{#1, #2}}%
\newcommand{\normX}[1]{\left\| {#1} \right\|}
\newcommand{\CH}{\Mh{\mathsf{ch}}}
\newcommand{\CHX}[1]{\CH\pth{#1}}
\newcommand{\CC}{\Mh{\mathcal{C}}}%
\newcommand{\interX}[1]{\mathrm{int}\pth{#1}}
\newcommand{\pa}{\Mh{p}}%
\newcommand{\pb}{\Mh{q}}%
\newcommand{\pc}{\Mh{s}}%
\newcommand{\pd}{\Mh{t}}%
\newcommand{\ellipse}{\Mh{\mathcal{E}}}%
\newcommand{\origin}{\Mh{\mathsf{o}}}%
\newcommand{\diskY}[2]{\Mh{\mathsf{d}}\pth{#1, #2}}%
\newcommand{\dY}[2]{\|#1-#2\|}
\newcommand{\cwX}[1]{\scalebox{0.8}{\Interval}\pth{#1}}%
\newcommand{\diamX}[1]{\mathrm{diam}\pth{#1}}%
\newcommand{\ndiag}{\Mh{\mathcal{N}}}%
\newcommand{\clockwiseX}[1]{\Mh{\mathrm{cw}}\pth{#1}}%
\newcommand{\pth}[2][\!]{\mleft({#2}\mright)}%
\newcommand{\tgamma}{\Mh{\xi}}%
\newcommand{\TG}{\Mh{\Xi}}%
\newcommand{\tI}{\Mh{J}}
\newcommand{\UX}[1]{\Mh{\Uparrow}\pth{#1}}%
\newcommand{\UXX}[1]{\Mh{\Uparrow}_{#1}}%
\newcommand{\UY}[2]{\UXX{#1}\pth{#2}}%
\begin{document}

\RegVer{%
   \title{Computing Instance-Optimal Kernels in Two Dimensions%
   }%

   \author{%
      Pankaj K. Agarwal\thanks{Department of Computer Science, Duke
         University, Durham NC 27708. Work on this paper was supported
         by NSF grants \si{IIS-18-14493} and CCF-20-07556.}  \and%
      Sariel Har-Peled\SarielThanks{Work on this paper was partially
         supported by a NSF AF award
         CCF-1907400.  %
   }%
}%
}%

\DCGVer{%
   \title{%
      Computing Instance-Optimal Kernels in Two
      Dimensions%
   }%

      \author[1]{\fnm{Pankaj} \sur{K. Agarwal}
      \email{pankaj@cs.duke.edu}}

   \author[2]{\fnm{Sariel} \sur{Har-Peled}
      \email{sariel@illinois.edu}}

   \affil[1]{%
      \orgdiv{Department of Computer Science}, %
      \orgname{Duke University}, %
      \orgaddress{%
         \street{Levine Science Research Center D315}, %
         \city{Durham}, %
         \postcode{27708}, %
         \state{NC}, %
         \country{USA}}%
   }%
   \affil[2]{%
      \orgdiv{Department of Computer Science}, %
      \orgname{University of Illinois Urbana-Champaign}, %
      \orgaddress{%
         \street{201 N. Goodwin Avenue}, %
         \city{Urbana}, %
         \postcode{61801}, %
         \state{IL}, %
         \country{USA}}%
   }%
}

\RegVer{\date{\today}}%

\RegVer{\maketitle}%

\RegVer{%
   \begin{abstract}
       Let $\PS$ be a set of $n$ points in $\reals^2$.  For a
       parameter $\eps\in (0,1)$, a subset $\CS\subseteq \PS$ is an
       \emph{$\eps$-kernel} of $\PS$ if the projection of the convex
       hull of $\CS$ approximates that of $\PS$ within
       $(1-\eps)$-factor in every direction. The set $\CS$ is a
       \emph{weak $\eps$-kernel} of $\PS$ if its directional width
       approximates that of $\PS$ in every direction.  Let
       $\optY{\eps}{\PS}$ (resp.\ $\optWY{\eps}{\PS}$) denote the
       minimum-size of an $\eps$-kernel (resp. weak $\eps$-kernel) of
       $\PS$. We present an $O(n\optY{\eps}{\PS}\log n)$-time
       algorithm for computing an $\eps$-kernel of $\PS$ of size
       $\optY{\eps}{\PS}$, and an $O(n^2\log n)$-time algorithm for
       computing a weak $\eps$-kernel of $\PS$ of size
       $\optWY{\eps}{\PS}$.  We also present a fast algorithm for the
       Hausdorff variant of this problem.

       In addition, we introduce the notion of \emph{$\eps$-core}, a
       convex polygon lying inside $\CHX{\PS}$, prove that it is a
       good approximation of the optimal $\eps$-kernel, present an
       efficient algorithm for computing it, and use it to compute an
       $\eps$-kernel of small size.
   \end{abstract}%
}

\DCGVer{%
   \abstract{ Let $\PS$ be a set of $n$ points in $\reals^2$.  For a
      parameter $\eps\in (0,1)$, a subset $\CS\subseteq \PS$ is an
      \emph{$\eps$-kernel} of $\PS$ if the projection of the convex
      hull of $\CS$ approximates that of $\PS$ within
      $(1-\eps)$-factor in every direction. The set $\CS$ is a
      \emph{weak $\eps$-kernel} of $\PS$ if its directional width
      approximates that of $\PS$ in every direction.  Let
      $\optY{\eps}{\PS}$ (resp.\ $\optWY{\eps}{\PS}$) denote the
      minimum-size of an $\eps$-kernel (resp. weak $\eps$-kernel) of
      $\PS$. We present an $O(n\optY{\eps}{\PS}\log n)$-time algorithm
      for computing an $\eps$-kernel of $\PS$ of size
      $\optY{\eps}{\PS}$, and an $O(n^2\log n)$-time algorithm for
      computing a weak $\eps$-kernel of $\PS$ of size
      $\optWY{\eps}{\PS}$.  We also present a fast algorithm for the
      Hausdorff variant of this problem.

      \smallskip%
      In addition, we introduce the notion of \emph{$\eps$-core}, a
      convex polygon lying inside $\CHX{\PS}$, prove that it is a good
      approximation of the optimal $\eps$-kernel, present an efficient
      algorithm for computing it, and use it to compute an
      $\eps$-kernel of small size.  %

      \bigskip A preliminary version of this paper to appear in SoCG
      2023.%

 } }%

\maketitle{}%

\section{Introduction}
\seclab{intro}

Coresets have been successfully used as geometric summaries to develop
fast approximation algorithms for a wide range of geometric
optimization problems.  Agarwal~\etal~\cite{ahv-aemp-04} introduced
the notions of $\eps$-kernels/coresets for approximating the convex
hull of a point set $\PS$ in $\reals^d$: For an interval $J=[a,b]$,
let
\begin{math}
    (1-\eps)J = [a+(\eps/2)\lenX{J},b-(\eps/2)\lenX{J}]
\end{math}
be its scaling down by a factor of $1-\eps$ around its center.  For a
direction $v\in\dirs$, let $\pwIY{v}{\PS}$ denote the projection of
$\CHX{\PS}$ in direction $v$, which is an interval.  A subset
$\CS \subseteq \PS$ is an $\eps$-kernel if
$\pwIY{v}{\CS}\supseteq (1-\eps)\pwIY{v}{\PS}$ for all directions
$v\in\dirs$, see \defref{approx:and:kernel}.  The \emph{weak
   $\eps$-kernels} impose a weaker requirement that
$\lenX{\pwIY{v}{\CS}} \ge (1-\eps)\lenX{\pwIY{v}{\PS}}$ for all
$v\in\dirs$, see \defref{weak:kernel} and \figref{weak:strong}.

\begin{figure}[ht]
    \phantom{}
    \hfill%
    \includegraphics[page=2]{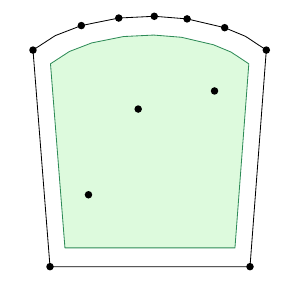}%
    \hfill%
    \includegraphics[page=3]{figs/weak_vs_strong}
    \hfill%
    \phantom{}
    \caption{Somewhat oversimplifying the difference, a regular kernel
       has to conceptually include a ``shrunken'' middle portion
       (left), while the weak kernel (right) only has to approximate
       the projections. Specifically, on the left, the projection
       interval of the approximation has to include the projection
       interval of the green region. On the right,
       the approximation projection interval needs to be sufficiently long
       but it does not have the inclusion constraint.}

    \figlab{weak:strong}
\end{figure}

It is known that there exists an $\eps$-kernel (as well as a weak
$\eps$-kernel) of $\PS$ of size $O(\eps^{-(d-1)/2})$ and that it can
be computed efficiently~\cite{ahv-aemp-04}.  However there may exist
an $\eps$-kernel of $\PS$ of much smaller size, as is often the case
in practice, see, e.g.~\cite{yapv-pmsfk-08}. Let $\optY{\eps}{\PS}$ be
the minimum size of an $\eps$-kernel of $\PS$.  An interesting
question is whether an $\eps$-kernel of $\PS$ of size $\optX{\eps}$
can be computed efficiently, i.e., computing an
\emph{instance-optimal} $\eps$-kernel.  A similar question can be
asked for weak $\eps$-kernels.  These problems are known to be NP-Hard
for $d\ge 3$.  Although it is generally believed that an
instance-optimal $\eps$-kernel or weak $\eps$-kernel in the plane can
be computed in polynomial time using dynamic programming, we are
unaware of any paper that presents such an algorithm. See below for
related work on this problem.  In this paper, we settle this question
by presenting \emph{fast} algorithms for computing instance-optimal
$\eps$-kernels and weak $\eps$-kernels for $d=2$.

\RegVer{\paragraph*{Related work.}}
\DCGVer{\subparagraph*{Related work.}}

As mentioned above, Agarwal \etal \cite{ahv-aemp-04} proved the
existence of an $\eps$-kernel of size $O(\eps^{-(d-1)/2})$ for any set
of points in $\reals^d$ and presented fast algorithms for computing
such an $\eps$-kernel. These algorithms were subsequently improved and
generalized, see \cite{c-fcscds-06,ay-sodsa-07,ahy-rsfpg-08}.  Yu
\etal~\cite{yapv-pmsfk-08} studied practical algorithms for computing
coresets/kernels, and suggested an incremental algorithm that seems to
provide a good approximation to the optimal kernel.

The NP-Hardness of computing an instance-optimal kernel in $\reals^3$
follows from that of polytope approximation~\cite{dg-cop3d-97}, see
also~\cite{akss-eakrm-17,clwww-krmse-17}.  Clarkson \cite{c-apca-93}
studied the problem of polytope approximation as a hitting-set
problem, providing a logarithmic approximation in the optimal size,
that can be used for approximating the optimal kernel.  For $d=3$, the
approximation factor can be improved to $O(1)$~\cite{bg-aoscf-95}.
Using a greedy approach, Blum \etal \cite{bhr-sagps-19} studied the
problem of approximating optimal kernels in high dimensions, and
presented polynomial-time algorithms for computing an $\eps$-kernel of
size $O(d\optX{\eps}\log \optX{\eps})$ or an
$(\eps+8\eps^{1/3})$-kernel of size $O(\optX{\eps}\eps^{-2/3})$.

More recently, there has been some work on computing variants of
$\eps$-kernels of minimum size, though \emph{none} of them compute an
instance-optimal $\eps$-kernel.  Wang \etal \cite{wmklt-mcmrm-21} use
a different definition of kernel, so comparing the results of this
paper to their work is somewhat confusing.  Specifically, Wang \etal
\cite{wmklt-mcmrm-21} presented a cubic-time algorithm that computes a
minimum-size subset $\QS$ of $\PS$ with the property that
\begin{math}
    \max_{p \in \PS} (1-\eps) \DotProd{v}{p} \le \max_{q \in \QS}
    \DotProd{v}{q},
\end{math}
assuming that $\PS$ is $\alpha$-fat for some constant $\alpha$; they
refer to such a subset as a $\eps$-core-set of $\PS$.  A shortcoming
of this definition is that it is neither translation nor
non-uniform-scaling invariant. However, it can be shown that their
algorithm computes an $\eps$-kernel of size at most $\optX{\eps/3}$
(observe that $\optX{\eps/3}$ can be much larger than $\optX{\eps}$).
Klimenko and Raichel \cite{kr-fechs-21} provided an $O(n^{2.53})$ time
algorithm for computing a minimum-size subset $\QS$ such that
$H(\CHX{\PS},\CHX{\QS})$, the Hausdorff distance between $\CHX{\PS}$
and $\CHX{\QS}$, is at most $\eps$.\footnote{%
   Recall that for two sets $A, B \in \reals^2$,
   $H(A,B) = \max\{h(A,B), h(B,A)\}$, where
   $h(X,Y) = \max_{x\in X}\min_{y\in Y} \dY{x}{y}$.}  They also tackle
the case when $\PS$ is convex, which they solve in $O(n \log^2 n)$
time.  The standard approach for computing small kernels, is to apply
an affine transformation to the point set to make it ``fat'', then
apply an algorithm for Hausdorff approximation, with parameter
$\eps/c$ where $c$ depends on the fatness of the mapped point set and
its diameter. Using the algorithm in~\cite{kr-fechs-21}, an
$\eps$-kernel of size at most $\optX{\eps/2}$ can be computed in
$O(n^{2.53})$ time. We note that since $\eps$ is an absolute error,
the size of Hausdorff-approximation can be $\Omega(n)$ in the worst
case.  If we set the error parameter to be $\eps \cdot \diamX{\PS}$,
then there exists an $\eps$-Hausdorff approximation $\QS$ of size
$O(\eps^{-(d-1)/2})$ but $\QS$ may not be an $\eps$-kernel since for a
direction $v\in\dirs$, $\lenX{\pwIY{v}{\QS}}$ maybe as small as
$\lenX{\pwIY{v}{\PS}}-\eps\diamX{\PS}$, while $\eps$-kernel requires
$\pwIY{v}{\QS} \supseteq (1-\eps)\pwIY{v}{\PS}$.  As such while the
width or minimum-enclosing-box of an $\eps$-kernel approximates that
of $\PS$, a Hausdorff approximation does not offer such a guarantee
and thus not always suitable for approximating extent measures of
$\PS$.

There is also some connection between our problem and minimum-link
distance and polygon approximation, see
\cite{gm-oacmn-90, ghms-apsml-93,  mp-mpe-08, ms-sapo-95, wc-fmvvd-86, w-fmnp-91} for some
relevant results.

\RegVer{%
\begin{table}[t]
    \centering
    \begin{tabular}{|c|c|l|l|}
      \hline%
      Kernel size
      &
        Running time
      &
        Ref
      &
        Remark
      \\
      \hline
      \hline
      $O(1/\sqrt{\eps}) \Bigr.$
      &
        $O(n + 1/\eps^{3/2})$
      &
        Agarwal \etal \cite{ahv-aemp-04}
      &
      \\
      \hline
      $O(1/\sqrt{\eps}) \Bigr.$
      &
        $O(n + 1/\eps^{1/2})$
      &
        Chan \cite{c-fcscds-06}
      &
      \\
      \hline
      $\leq \displaystyle\optX{\eps/3}$
      &
        $O(n^3)\Bigr.$
      &
        Wang \etal
        \cite{wmklt-mcmrm-21}
      &
      \\
      \hline
      $\leq \displaystyle\optX{\eps/2}$
      &
        $O(n^{2.53})\Bigr.$
      &
        Klimenko and Raichel
        \cite{kr-fechs-21}
      &
      \\
      \hline
      $ \optX{\eps}$
      &
        $O(\optX{\eps} n \log n)\Bigr.$
      &
        \thmref{opt-kernel}%
      &
      \\
      \hline
      $\leq \optX{\eps/4}$
      &
        $O(n\log n)\Bigr.$
      &
        \thmref{eps-kernel1}%
      &
      \\
      \hline
      \hline
      $\optWX{\eps}$
      &
        $O(n^2\log n)\Bigr.$
      &
        \thmref{weak:kernel}%
      &
        \emph{weak} $\eps$-kernel
      \\
      \hline
      \hline
      \multicolumn{4}{c}{Hausdorff distance at most $\eps$ $\Bigr.$}\\
      \hline
      $\optHX{\eps}$
      &%
        $O( n^{2.53} )\Bigr.$
      &
        Klimenko and Raichel
        \cite{kr-fechs-21}
      &
      \\
      \hline
      $2\optHX{\eps}$
      &%
        $O( n \log^2 n )\Bigr.$
      &
        Klimenko and Raichel
        \cite{kr-fechs-21}
      &
      \\
      \hline
      $\optHX{\eps}$
      &%
        $O( n \log^2 n )\Bigr.$
      &
        Klimenko and Raichel
        \cite{kr-fechs-21}
      &
        Input in convex position
      \\
      \hline
      $\optHX{\eps}$
      &%
        $O(\optHX{\eps} n \log n)\Bigr.$
      &
        \thmref{hausdorff:opt}%
      &
      \\
      \hline
      \hline
    \end{tabular}
    \medskip%
    \caption{Results on computing/approximating optimal kernel. Here
       $\optX{\eps} := \optX{\eps,\PS}$ is the size of the instance-optimal
       $\eps$-kernel of $\PS$. Similarly, $\optWX{\eps}$ is the size of the
       instance-optimal weak $\eps$-kernel of $\PS$ (which is potentially smaller).
       Finally, $\optHX{\eps}$ is the size of the instance-optimal
       Hausdorff approximation with distance $\eps$ -- it can be as
       large as $n$.  }
    \tbllab{results}
\end{table}%
}
\RegVer{\paragraph*{Our results.}}
\DCGVer{\subparagraph*{Our results.}}
Let $\PS$ be a set of $n$ points in $\reals^2$, and let $\eps>0$ be a
parameter. There are three main results in this paper:

\begin{compactenumA}
    \smallskip%
    \item \titleX{Optimal kernel.} %
    We present (in \secref{opt-kernel}) an
    $O(\optX{\eps} n\log n)$-time algorithm for computing an
    $\eps$-kernel of $\PS$ of size $\optX{\eps} := \optY{\eps}{\PS}$;
    recall that $\optX{\eps} = O(\eps^{-1/2})$.

    \smallskip%
    \item \titleX{Optimal weak kernel.}  We present (in
    \secref{weak:opt:kernel}) an $O(n^2\log n)$-time algorithm for
    computing a weak $\eps$-kernel of $\PS$ of size
    $\optWY{\eps}{\PS}$, the minimum size of a weak $\eps$-kernel of
    $\PS$. %
\end{compactenumA}

Our algorithm for computing the optimal kernel can be adapted to
computing an optimal Hausdorff approximation of $\CHX{\PS}$:
\smallskip%
\begin{compactenumA}[resume]
    \smallskip%
   \item \titleX{Optimal Hausdorff approximation.} %
   We present (in \secref{hausdorff}) an
   $O(\optHX{\eps} n\log n)$-time algorithm for computing a set
   $\QS \subseteq \PS$ of size $\optHX{\eps}$ such that
   $H(\CHX{\PS}, \CHX{\QS}) \le \eps$, where $\optHX{\eps}$ is the
   size of the minimum such subset.
\end{compactenumA}
\smallskip%
We obtain these results by reducing the computation of (weak) optimal
kernel to the following two covering problems, which are of
independent interest:
\smallskip%
\begin{compactenumI}%
    \item \titleX{Optimal arc cover.} Given a set $\Xi$ of $n$ arcs of
    the unit circle $\dirs$, compute its smallest subset that covers
    $\dirs$. Lee and Lee~\cite{ll-ccmp-84} had presented an
    $O(n\log n)$-time algorithm for this problem, which is optimal in
    the worst case.  Here we present a somewhat simpler algorithm with
    the same running time (see \secref{c:arc:cover}), which is more
    intuitive and which we adapt to the computation of weak kernels.

    \smallskip%

    \item \titleX{Optimal star cover.} %
    Given a polygon $\PP$ that is star shaped with respect to the
    origin $\origin$ and a set of lines $\LS$, compute a smallest
    subset of lines (i.e., cuts) in $\LS$ that separate $\origin$ from
    $\partial \PP$. Alternatively, this can be interpreted as covering
    $\partial\PP$ by the (outer) halfplanes defined by the lines of
    $\LS$.  We reduce this problem to the above arc-cover problem, but
    the number of candidate arcs can be quadratic.  We use a greedy
    algorithm to prune the number of candidate arcs to $O( \opt n)$,
    in $O( \opt n \log n)$ time, where $\opt$ is the size of the
    optimal solution, and then compute an arc cover in
    $O( \opt n \log n)$ time using the above algorithm.  We reduce the
    computation of $\eps$-kernel to this covering problem by using the
    polarity transform (see \secref{opt-kernel}).
\end{compactenumI}%
\medskip%
Finally, we introduce (in \secref{core}) the concept of core of a
point set, prove its properties, and describe an algorithm for
computing it.  A convex body $C$ can be represented as the
intersection of all the minimal slabs that contains it. The
\emphw{$\eps$-core} is the result of intersecting all these slabs
after shrinking them by a factor of $1-\eps$.  It induces an
affine-invariant inner approximation of $C$.  For a point set $\PS$,
its $\eps$-core is a convex polygon lying inside $\CHX{\PS}$.  We
describe an $O(n\log n)$-time algorithm for computing the $\eps$-core
of $\PS$.

We show that the convex hull of any $\eps$-kernel of $\PS$ contains
the $\eps$-core of $\PS$, and that any subset $\CS \subseteq \PS$
whose convex hull contains the $\eps$-core is a $4\eps$-kernel of
$\PS$, see \lemref{core:4:approx}. Thus the $\eps$-core is an
approximation to the optimal $\eps$-kernel, which has the benefit of
being well defined for any bounded convex shape.  We believe this
notion of $\eps$-core is new, and is of independent interest.  We
present an $O(n\log n)$-time algorithm for computing the smallest
subset of $\PS$ such that its convex-hull contains the $\eps/4$-core
of $\PS$, which yields an $\eps$-kernel of $\PS$ of size at most
$\opt_{\eps/4}$.

\RegVer{%
   \smallskip%
   \noindent
   Our results are summarized in \tblref{results}.
}

\DCGVer{\subparagraph{Summary}}%
\RegVer{\paragraph{Summary.}}%
We provide a near-linear-time algorithm for computing an
instance-optimal kernel in two dimensions, which as far as we know is
the first such algorithm. Previously only super-quadratic bicriteria
approximation algorithms were known for computing an instance-optimal
$\eps$-kernel (the only exception is the algorithm of Klimenko and
Raichel \cite{kr-fechs-21} that works for the special case that the
points are in convex position for Hausdorff distance). On the way, we
visit several interesting new problems, as detailed above, and present
efficient algorithms for them and reductions between them.

\section{Preliminaries}
\seclab{preliminaries}

Let $\PS$ be a set of $n$ points in $\reals^2$, and let
$\eps \in (0,1)$ be a parameter.  Without loss of generality assume
that the origin $\origin$ lies in the interior of $\CHX{\PS}$, where
$\CHX{\PS}$ denotes the convex-hull of $\PS$ (if
$\origin\not\in\CHX{\PS}$, one can choose three arbitrary points of
$\PS$ and translate $\PS$ so that their centroid becomes $\origin$).

A direction in $\reals^2$ can be represented as a unit vector in
$\reals^2$.  The set of unit vectors (directions) in $\reals^2$ is
denoted by $\dirs = \Set{ \pa \in \reals^2}{\!\normX{\pa}=1\bigr.}$.

\begin{definition} \deflab{h:n}%
    For a line $\Line$ not passing through the origin, let
    $\hn = \hnX{\Line}$ (resp. $\hp =\hpX{\Line}$) be the (closed)
    halfplane bounded by $\Line$ and containing (resp. not containing)
    the origin.

    For a direction $v\in \dirs$ and a point $\pb\in\reals^2$, let
    $\hnDirY{v}{\pb}$ be the halfplane that is bounded by the line
    normal to direction $v$ and passing through $\pb$, and that
    contains $\origin$.
\end{definition}

\begin{definition}[Extremal point, supporting line]
    \deflab{supporting:line}%
    For a direction $v\in\dirs$, let $\extX{v}$ be the
    \emphi{extremal point} of $\PS$ in the direction $v$. That is
    \begin{math}
        \extX{v} = \arg\max_{\pa\in\PS} \DotProd{v}{\pa}.
    \end{math}
    The point $\extX{v}$ is unique if $v$ is not the outer normal of
    an edge of $\CHX{\PS}$.  Similarly, let $\extLX{v}$ be the
    \emphw{supporting line} of $\CHX{\PS}$ normal to $v$ and passing
    through $\extX{v}$. Let $\hnDirX{v} = \hnX{\extLX{v}}$ and
    $\hpDirX{v} = \hpX{\extLX{v}}$.  Observe that
    $\CHX{\PS} \subset \hnDirX{v}$.

    For a real number $\psi$, let $\hnDirMY{v}{\psi}$ and
    $\hpDirMY{v}{\psi}$ be the halfplanes formed by translating
    $\hnDirX{v}$ and $\hpDirX{v}$, respectively, towards the origin by
    distance $\psi$.
\end{definition}

\begin{definition}
    \deflab{refinement}%
    The \emphi{normal diagram} of $\PS$ is the partition of $\dirs$
    into maximal intervals so that the extremal point $\extX{v}$
    remains the same for all directions within an interval. The
    endpoints of these intervals correspond to the outer normals of
    the edges of $\CHX{\PS}$.
    The normal diagram can be further refined so that for all
    directions $v$ within each interval, both $\extX{v}$ and
    $\pa_{-v}$ remain the same.  Such a pair of points are
    \emphw{antipodal pairs}. Let $\ndiag = \ndiag(\PS)$ denote this
    \emphi{refinement} of the normal diagram, and observe that
    $\cardin{\ndiag} \leq 2n$. See \figref{kernel}.
\end{definition}

\begin{figure}[ht]
    \begin{tabular}{*{3}{c}}
    \begin{minipage}{0.3\linewidth}
        \includegraphics[page=1,width=0.94\linewidth]%
        {figs/normal_diagram}%
        \hfill%
    \end{minipage}
		&
    \begin{minipage}{0.3\linewidth}
        \includegraphics[page=2,width=0.94\linewidth]{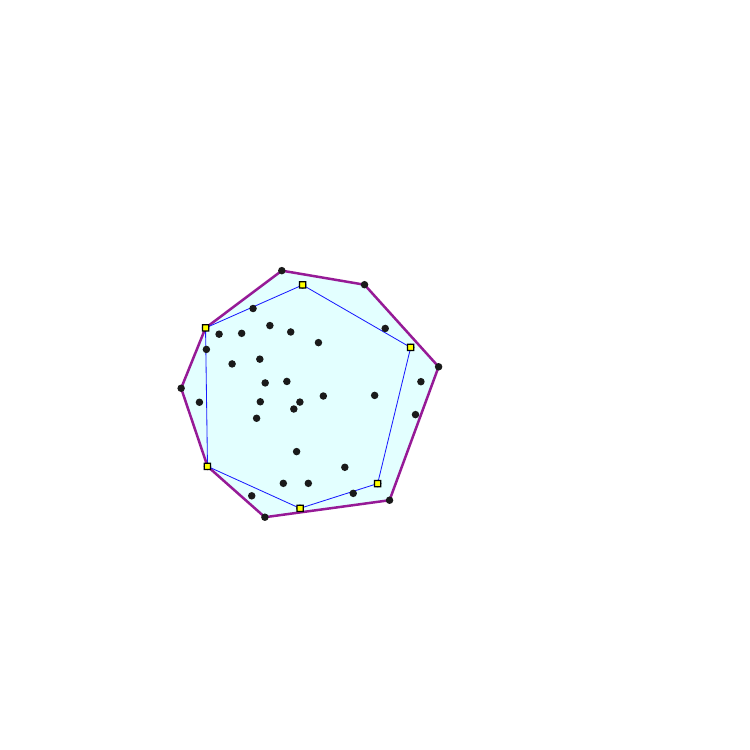}%
    \end{minipage}
		&
    \begin{minipage}{0.3\linewidth}
        \includegraphics[page=3,width=0.44\linewidth]{figs/normal_diagram}
        \includegraphics[page=4,width=0.44\linewidth]{figs/normal_diagram}
    \end{minipage}\hfill
    \\
      (A)&(B)&(C)
    \end{tabular}
    \caption{(A) Point set $\PS$, $\CHX{\PS}$, and $\eps$-kernel of
       $\PS$ (say, for $\eps=0.2$). (B) Directions in which a point is
       extremal. (C) Normal diagram of $P$ and its refinement
       $\ndiag(\PS)$, $\pwIX{v}$.}
    \figlab{kernel}
\end{figure}

\DCGVer{\subparagraph*{Directional width and $\eps$-kernel.}}
\RegVer{\paragraph*{Directional width and $\eps$-kernel.}}
For a direction $v \in \dirs$, let
\begin{equation*}
    \pwIY{v}{\PS}%
    =%
    \Bigl[%
    \min_{p \in \PS} \DotProd{v}{p} ,
    \,
    \max_{p \in \PS} \DotProd{v}{p}
    \Bigr]
\end{equation*}
denote the \emphi{projection interval} of $\PS$ in direction $v$. Its
length $\pwY{v}{\PS} = \normX{ \pwIX{v} }$ is the \emphi{directional
   width} of $\PS$ in the direction of $v$. Note that
$\pwIX{v}=-\pwIX{-v}$ and $\pwY{v}{\PS}=\pwY{-v}{\PS}$.  For an
$\eps \in (0,1)$ and an interval $J = [x,y]$, let $(1-\eps)J$ be the
shrinking of $J$ by a factor of $(1-\eps)$, i.e.,
$(1-\eps)J = [x+(\eps/2)\lenX{J},y-(\eps/2)\lenX{J}]$.

\begin{definition}
    \deflab{approx:and:kernel}%
    A set $X \subseteq \CHX{\PS}$ is an \emphi{$\eps$-approximation}
    of $\PS$ if $\pwIY{u}{X} \supseteq (1-\eps)\pwIY{u}{\PS}$ for all
    directions $u\in\dirs$.  A subset $\CS \subseteq \PS$ is a
    ``strong'' \emphi{$\eps$-kernel} of $\PS$ if it is an
    $\eps$-approximation of $\PS$. Let $\opt_\eps(\PS)$ denote the
    minimum size of an $\eps$-kernel of $\PS$.  See
    \figref{kernel:example} for an example.
\end{definition}

\begin{figure}[h]
    \includegraphics[page=1,width=0.3\linewidth]{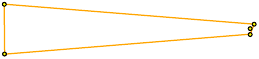}
    \hfill%
    \includegraphics[page=2,width=0.3\linewidth]{figs/opt_kernel}
    \hfill%
    \includegraphics[page=3,width=0.3\linewidth]{figs/opt_kernel}

    \phantom{}\hfill(A) \hfill (B) \hfill (C)\hfill \phantom{}

    \caption{ (A) A point set and its convex hull. (B) Its
       $0.2$-core. (C) Its optimal $0.2$-kernel -- observe that it
       contains points that are not on the convex-hull.}
    \figlab{kernel:example}
\end{figure}

We emphasize that the shrinking here is done for every direction
individually around the center of the projection interval -- in
particular, there is no center point of the $\CHX{\PS}$ around which
we do the scaling -- to some extent this gives rise to most of the
technical difficulties in constructing and approximating an optimal
kernel.  The following property of $\eps$-approximation will be useful
later on.
\begin{lemma}[\cite{ahv-aemp-04}]
    \lemlab{affine}%
    Let $\PS$ be a point set in $\reals^d$, $X\subseteq \CHX{\PS}$,
    and $T$ an affine map in $\reals^d$.  $X$ is an
    $\eps$-approximation for $\PS$ $\iff$ $T(X)$ is an
    $\eps$-approximation of $T(\PS)$.
\end{lemma}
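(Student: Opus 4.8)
The plan is to track how projection intervals and their $(1-\eps)$-shrinkings behave under an affine map, and to exploit the fact that, although the directional width itself is \emph{not} affine-invariant, the containment relation in \defref{approx:and:kernel} is. Throughout I assume, as is standard, that $T$ is a non-degenerate affine map, $T(x) = Ax + c$ with $A$ a nonsingular $d\times d$ matrix and $c\in\reals^d$; for a singular $T$ the forward implication still goes through, but the converse can fail (the image $T(\PS)$ only ``sees'' directions in the range of $A^{\top}$).

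First I would establish the basic transformation identity. For any direction $u\in\dirs$, any set $Y\subseteq\reals^d$, and any $y\in Y$, we have $\DotProd{u}{Ay+c} = \DotProd{A^{\top}u}{y} + \DotProd{u}{c}$. Setting $\alpha_u = \normX{A^{\top}u} > 0$ and $w_u = A^{\top}u/\alpha_u \in \dirs$, this yields
\[
\pwIY{u}{T(Y)} \;=\; \phi_u\pth{\pwIY{w_u}{Y}}, \qquad \phi_u(t) = \alpha_u t + \DotProd{u}{c},
\]
where $\phi_u$ is an affine bijection of $\reals$ depending only on $u$ (and on $A,c$), not on $Y$. The crucial point is that the spurious scaling factor $\alpha_u$ — which appears precisely because an affine map sends the unit vector $u$ to the possibly non-unit vector $A^{\top}u$ — is identical for $Y = X$ and for $Y = \PS$.

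Next I would record the two elementary facts about a one-dimensional affine bijection $\phi(t)=\alpha t+\beta$ with $\alpha\neq 0$: (i) $\phi$ preserves midpoints and scales lengths by $\cardin{\alpha}$, hence $\phi\pth{(1-\eps)J} = (1-\eps)\phi(J)$ for every interval $J$; and (ii) since $\phi$ is injective, $J_1\subseteq J_2 \iff \phi(J_1)\subseteq\phi(J_2)$. Applying these with $\phi=\phi_u$ and the identity above, applied to both $X$ and $\PS$, gives, for each direction $u$,
\[
\pwIY{u}{T(X)} \supseteq (1-\eps)\,\pwIY{u}{T(\PS)} \quad\iff\quad \pwIY{w_u}{X} \supseteq (1-\eps)\,\pwIY{w_u}{\PS}.
\]
As $u$ ranges over $\dirs$, so does $w_u = A^{\top}u/\normX{A^{\top}u}$ (since $A^{\top}$ is nonsingular), so the family of conditions on the left, over all $u$, is equivalent to the family on the right, over all $w$. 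Combining this with the observation that $T$, being an affine bijection, satisfies $T\pth{\CHX{\PS}} = \CHX{T(\PS)}$, and hence $X\subseteq\CHX{\PS} \iff T(X)\subseteq\CHX{T(\PS)}$, yields the claimed equivalence.

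The only real obstacle is conceptual rather than technical: one must resist the (false) intuition that affine maps preserve directional width, and instead see that both the shrinking operation and the containment are insensitive to the rescaling $u\mapsto A^{\top}u$ of directions, so the value of $\eps$ in ``$(1-\eps)$-approximation'' is untouched. Everything else is bookkeeping.
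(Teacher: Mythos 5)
Your proof is correct. The paper itself does not prove this lemma --- it is imported verbatim from Agarwal \etal \cite{ahv-aemp-04} --- and your argument is exactly the standard one behind that citation: projections transform via $\DotProd{u}{Ay+c} = \DotProd{A^{\top}u}{y} + \DotProd{u}{c}$, the per-direction scaling $\normX{A^{\top}u}$ cancels because it is the same for $X$ and $\PS$, and both the midpoint-shrinking $(1-\eps)J$ and interval containment are invariant under one-dimensional affine bijections. Your caveat that the ``if and only if'' genuinely needs $T$ to be nonsingular (the forward direction survives degeneracy, the converse does not) is a fair sharpening of the statement; in every use the paper makes of the lemma, $T$ is a nonsingular ``fattening'' transform, so nothing is lost.
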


A slightly weaker notion of $\eps$-kernel was used by Agarwal \etal
\cite{ahv-aemp-04}, that is potentially (significantly) smaller than
their ``strong'' counterparts but somewhat harder to compute.
\begin{definition}
    \deflab{weak:kernel}%
    A subset $\CS \subseteq \PS$ is a \emphi{weak $\eps$-kernel} of $\PS$
    if $\pwY{u}{\CS}\ge (1-\eps)\pwY{u}{\PS}$ for all
    $u\in\dirs$.
\end{definition}
This weaker definition was sufficient for the purposes of
Agarwal \etalNP. However, it is less intuitive than the
stronger variant, and it is harder to compute the optimal weak kernel.

\subsection{Computing optimal circular arc cover}
\seclab{c:arc:cover}

Let $\ICS$ denote a set of $n$ circular arcs on $\dirs$, each of
length less than $\pi$, that cover $\dirs$.\footnote{By computing the
   union of arcs in $\ICS$, we can decide, in $O(n\log n)$ time,
   whether $\ICS$ covers $\dirs$.}  As mentioned in the introduction,
an $O(n\log n)$-time algorithm for computing the smallest subset of
$\ICS$ that cover $\dirs$ was proposed in~\cite{ll-ccmp-84}.  Below we
present an alternative $O(n\log n)$ time algorithm for computing the
smallest-size arc cover from $\ICS$, which we believe is simpler and
more intuitive -- the indifferent reader is encouraged to skip the
rest of this section.

\medskip

The basic idea is to use the greedy algorithm. Picking a start arc,
and then going counterclockwise as far one can adding arcs in a greedy
fashion results in a cover of size $\opt+1$, where $\opt$ is the
optimal size.  After an $O(n\log n)$ preprocessing, the greedy
algorithm can be executed in $O(\opt)$ time.  To reduce the size of
the solution to $\opt$, one has to guess a starting arc that belongs
to the optimal solution. We show that the least covered point on the
circle is covered by $O(n/\opt)$ intervals.  This implies that one has
to try only $O( n/\opt)$ starting arcs and thus run the greedy
algorithm $O(n/\opt)$ times. The overall running time is thus
$O(n\log n)$.

\paragraph{Setup.} %
As a reminder, $\ICS$ is a set of $n$ circular arcs on $\dirs$, each
of length less than $\pi$, that cover $\dirs$.  For an arc
$I \in \ICS$, we refer to the clockwise endpoint of $I$ as its
\emphw{right} endpoint, denoted by $\clockwiseX{I}$, and its other
endpoint as its \emphw{left} endpoint.  For a pair of intersecting
arcs $I,J \in \ICS$, let $I \preceq J$ if the right endpoint of
$I\cap J$ is that of $I$ (the two arcs may have the same right
endpoint).

The endpoints of arcs in $\ICS$ partition $\dirs$ into a set $\Gamma$
of $2n$ atomic intervals. All points in an interval $\gamma\in\Gamma$
lie on the same subset of arcs of $\ICS$, denoted by
$\ICS_\gamma$. For each interval $\gamma\in \Gamma$, set
$n_\gamma = \cardin{\ICS_\gamma}$, and let
$\cwX{\gamma} := \arg \max_\preceq \ICS_\gamma$ denote the arc of
$\ICS_\gamma$ whose right endpoint is the most clockwise from
$\gamma$. For each arc $I\in\ICS$, we set $\rho(\gamma)$ to be the
atomic interval whose right endpoint is the right endpoint of $I$.

\paragraph*{The algorithm.}
The algorithm works as follows: %
\begin{compactenumI}
    \item \steplab{i:c:a} Sort the endpoints of $\ICS$ and compute
    $\Gamma$.
    \item Doing an angular sweep along $\dirs$ and maintaining the
    subset of $\ICS$ that contains the sweeping point, compute
    $n_\gamma$ and $I_\gamma$ for each interval
    $\gamma\in\Gamma$. (Since $\ICS$ covers $\dirs$, $n_\gamma\ge0$
    for all $\gamma\in\Gamma$.)
    \item \steplab{i:c:c} Compute
    $\gamma_0 = \arg\min_{\gamma\in\Gamma} n_\gamma$, the atomic
    interval that lies in the fewest input arcs.
    \item For each arc $I \in \ICS_{\gamma_0}$, compute a circular-arc
    cover $\CC_I$, using a greedy algorithm, starting from $I$, as
    follows: Set $I_0 := I$. At the $j$\th step, let
    $\gamma_j = \rho(I_{j-1})$ be the atomic interval whose right
    endpoint is the right endpoint of $I_{j-1}$.  Set
    $I_j := \cwX{\gamma_j}$ and $\CC_I := \CC_I \cup \{I_j\}$, and
    continue until arrive at an arc that contains the left endpoint of
    $I_0$.
    \item Among all the covers $\CC_I$, $I\in \ICS_{\gamma_0}$,
    computed in the previous step, return the smallest one.
\end{compactenumI}

\paragraph*{Correctness.}
The correctness of the algorithm follows from the following lemma,
which is well known, see e.g.~\cite{ii-oaapl-86}.

\begin{lemma}
    \lemlab{greedy}%
    Let $\ICS$ be a set of $n$ circular arcs on $\dirs$, each of
    length at most $\pi$, that cover $\dirs$. Let $\opt$ be the
    minimum size of a subset of $\ICS$ that covers $\dirs$. For each
    $I\in\ICS_{\gamma_0}$, $\cardin{\CC_I} \leq \opt+1$.  Furthermore,
    if $I$ belongs to an optimal solution, then
    $\cardin{\CC_I} = \opt$.
\end{lemma}
\begin{proof}
    It is well known that the greedy algorithm computes an optimal
    solution for covering an open interval from a given set of
    intervals~\cite{clrs-ia-09}.

    We note that $X=\dirs\setminus I_0$ is an open interval, and
    therefore the greedy algorithm computes an optimal cover of
    $X$. Obviously, $X$ has a cover of size at most $\opt$, so the
    greedy algorithm terminates after at most $\opt$ steps. Hence, it
    computes a cover of $\dirs$ of size at most $\opt+1$. Furthermore,
    if $I$ is in an optimal solution, then $X$ can be covered by
    $\opt-1$ arcs of $\ICS$ and thus the greedy algorithm computes a
    cover of $\dirs$ of size $\opt$ in this case.
\end{proof}

Since any cover of $\dirs$ has to contain an arc of $\ICS_{\gamma_0}$,
\lemref{greedy} implies that the above algorithm computes an optimal
cover of $\dirs$.  We now analyze the running time of the algorithm.
Steps \stepref{i:c:a}--\stepref{i:c:c} take $O(n\log n)$ time. By
\lemref{greedy}, each execution of the greedy algorithm takes
$O(\opt)$ time, so the total running time is
$O(n\log n+ n_{\gamma_0}\opt)$. It suffices to bound the value of
$n_{\gamma_0}$.

\begin{lemma}
    \lemlab{shallow:point}%
    Let $\ICS$ be a set of circular arcs of $\dirs$ that cover it, and
    let $\opt$ be the size of the minimum cover of $\dirs$ by the arcs
    of $\ICS$. Then there is a point $\pa \in \dirs$ that is contained
    only in $O( n/\opt)$ arcs of $\ICS$.
\end{lemma}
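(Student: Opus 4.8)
The plan is to charge the ``depth'' of $\dirs$ (the number of arcs of $\ICS$ through a point) against an \emph{optimal} cover $\OS \subseteq \ICS$, $\cardin{\OS} = \opt$, by a double-counting argument over a carefully chosen set of $\opt$ witness points. Since $\OS$ is a minimum-size cover it is in particular irredundant: for every arc $A \in \OS$, the set $\OS \setminus \{A\}$ fails to cover $\dirs$, so there is a point $q_A \in A$ lying in no other arc of $\OS$. Fix one such witness $q_A$ for each $A \in \OS$; these $\opt$ points are distinct (each lies in a unique arc of $\OS$), and I relabel $\OS = \{A_1, \dots, A_\opt\}$ so that the witnesses $q_1, \dots, q_\opt$ appear in this cyclic order along $\dirs$.

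The heart of the argument is the claim that every arc $I \in \ICS$ contains at most three of the witness points, which I would prove by a local exchange. The points of $\dirs$ lying in $I$ form a single sub-arc, so the witnesses contained in $I$ form a contiguous cyclic block, say $q_i, q_{i+1}, \dots, q_{i+k}$, and $I$ then contains the whole sub-arc from $q_i$ to $q_{i+k}$ through the intermediate witnesses. Consider $\OS' = (\OS \setminus \{A_{i+1}, \dots, A_{i+k-1}\}) \cup \{I\}$. I claim $\OS'$ still covers $\dirs$: if some $\theta$ were uncovered, then (as $\OS$ covers $\dirs$ and $\theta \notin I$) $\theta$ would lie in some removed arc $A_j$ with $i < j < i+k$; but $A_j$ contains neither $q_i$ nor $q_{i+k}$ (these lie only in $A_i$ and $A_{i+k}$) while it contains $q_j$, so $A_j$, being a connected arc that avoids $\{q_i, q_{i+k}\}$, lies entirely in the sub-arc from $q_i$ to $q_{i+k}$ through $q_j$, hence inside $I$ — contradicting $\theta \notin I$. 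Since $\cardin{\OS'} \le \opt - (k-1) + 1$ and $\OS$ is optimal, we must have $\opt \le \opt - k + 2$, i.e. $k \le 2$; thus $I$ meets $\{q_1, \dots, q_\opt\}$ in at most $k+1 \le 3$ points.

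Granting the claim, counting incidences between $\ICS$ and $\{q_1, \dots, q_\opt\}$ in two ways gives $\sum_{i=1}^{\opt}\bigl(\text{number of arcs of }\ICS\text{ through }q_i\bigr) = \sum_{I \in \ICS} \cardin{I \cap \{q_1, \dots, q_\opt\}} \le 3n$, so some witness $q_i$ lies in at most $3n/\opt$ arcs of $\ICS$. Taking $\pa = q_i$ proves the lemma.

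I expect the main obstacle to be making the exchange argument fully rigorous — in particular, checking cleanly that the witnesses inside an arc form a contiguous cyclic block and that each removed arc $A_j$ is confined to the sub-arc between the two extreme retained witnesses $q_i$ and $q_{i+k}$; once this combinatorial picture is pinned down, the rest is routine. A minor secondary point is the degenerate regime (e.g. $\opt$ a constant, or $I$ already belonging to $\OS$), which is absorbed by the same inequality $\cardin{\OS'} \ge \opt$.
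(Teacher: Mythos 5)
Your proof is correct, but it takes a genuinely different route from the paper's. The paper argues via the greedy cover itself: it takes the arcs $I_1,\dots,I_t$ produced by the greedy algorithm, selects the right endpoints $\clockwiseX{I_{2i-1}}$ of every other greedy arc (about $\opt/2$ points), and uses the greedy choice rule to argue that no arc of $\ICS$ can cover two consecutive selected points, so each arc covers at most one of them; pigeonholing over the $\Omega(\opt)$ points then gives a point of depth $O(n/\opt)$. You instead work with an optimal cover $\OS$: minimality gives a private witness point for each arc of $\OS$, and your exchange argument (replace the arcs strictly interior to the witness block of $I$ by $I$ itself) shows each arc of $\ICS$ contains at most three witnesses, after which the same double counting finishes. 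Both proofs share the skeleton "exhibit $\Omega(\opt)$ points each covered $O(1)$ times, then average," but your construction is purely structural and independent of the greedy algorithm (at the cost of the constant $3$ instead of $1$), whereas the paper's piggybacks on the greedy run it has already analyzed. Your exchange step is sound: the witnesses in a connected arc $I$ do form a contiguous cyclic block (two witnesses in $I$ force one of the two arcs between them, hence all witnesses on it, to lie in $I$), each interior removed arc $A_j$ avoids the private witnesses $q_i,q_{i+k}$ and so is trapped in the component of $\dirs\setminus\{q_i,q_{i+k}\}$ containing $q_j$, which lies inside $I$. The only case you gloss over is when $I$ contains \emph{all} the witnesses, where the block has no canonical endpoints; choosing $q_i$ and $q_{i+k}$ as the first and last witnesses along $I$ makes the same exchange give $\opt\le 3$, and for $\opt=O(1)$ the lemma is trivial anyway, so this is a cosmetic fix rather than a gap.
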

\begin{proof}
    The claim is immediate if $\opt \leq 10$. So assume $\opt >10$,
    and let $I_1, I_2, \ldots, I_{t}$ be the $t$ arcs computed by the
    greedy algorithm (here $t \leq \opt+1$). Let
    $\pa_i = \clockwiseX{I_{2i-1}}$, for
    $i=1, \ldots, \alpha = \floor{(t-2)/2}$. The greedy algorithm
    ensures that no arc of $\ICS$ that covers $\pa_i$ can cover
    $\pa_{i+1}$. Since $\alpha \geq (\opt-5)/2$, it follows that there
    is an index $j$, such that $\pa_j$ lies on at most
    $n/\alpha = O(n /\opt)$ arcs, which implies the claim.
\end{proof}

By \lemref{shallow:point}, $n_{\gamma_0} = O(n/\opt)$, and thus the
overall running time of the algorithm is $O(n\log n)$. Putting
everything together, implies the following:

\begin{theorem}
    \thmlab{arc-cover}%
    Let $\ICS$ be a set of $n$ circular arcs on $\dirs$.  The optimal
    cover of $\dirs$ by the arcs of $\ICS$, if there exists one, can
    be computed in $O(n \log n)$ time.
\end{theorem}

\section{Covering a star polygon by halfplanes}
\seclab{star-cover}

The input is a set of $\LS$ of $n$ lines and a polygon $\SP$ with
$O(n)$ vertices that is star-shaped with respect to the origin
$\origin$ (i.e., for every point $\pa \in \SP$,
$\origin \pa \subseteq \SP$).  Formally, the task at hand is to
compute a minimum set of lines $\CS \subseteq \LS$, such that for any
point $\pa \in \partial \SP$, $\interX{\origin \pa}$ intersects a line
of $\CS$. Geometrically,
$\faceoX{\CS} := \bigcap_{\Line\in\CS} \hnX{\Line}$, the intersection
of inner halfplanes bounded by lines in $\CS$, is contained in $\SP$.
An alternative interpretation of this problem is that $\partial\SP \subset \bigcup_{\Line\in\CS}\hpX{\Line}$.

\subsection{Reduction to  arc cover}
\seclab{arc-reduction}

$\partial \SP$ can be viewed as the image of a function
$\SP: \dirs \rightarrow \reals^2$.  Specifically, for a direction
$u \in \dirs$, $\SP(u)$ is the intersection point of $\partial\SP$
with the ray from the origin in direction $u$.
A line $\Line$ \emphi{blocks} the direction $u$ if $\Line$ intersects
the segment $\origin\SP(u)$. A subset $\GS\subseteq\LS$ is a
\emphi{blocking set} of $\SP$ if each direction in $\dirs$ is blocked
by at least one line of $\GS$ (i.e., $\faceoX{\GS}\subset\SP$).

Fix a line $\Line\in\LS$. Let $\CTY{ \Line}{ \SP}$ denote the set of
connected components (i.e., segments) of $\Line\cap \SP$. For a
segment $\seg \in \CTY{ \Line}{ \SP}$, let
\begin{math}
    \intervalX{\seg}=\Set{
       \smash{{\origin\pa}/{\normX{\origin\pa}}}\in\dirs }{ \pa\in \seg}
\end{math}
be the circular arc induced by $\seg$.  All directions in
$\intervalX{\seg}$ are blocked by $\Line$.  Let
$\intervalX{\Line}=\Set{\intervalX{\seg}}{\seg \in \CTY{\Line}{\SP}}$
be the set of all circular arcs that are induced by blocking segments
of $\Line$.  Let $\Xi= \bigcup_{\Line\in\LS} \intervalX{\Line}$ be the
set of all circular arcs defined by the lines of $\LS$.  For a subset
$\Gamma\subseteq\Xi$, let
$\LS(\Gamma) = \{ \Line \in \LS \mid \gamma\in\intervalX{\Line},
\gamma\in\Gamma\}$ be the original subset of lines of $\LS$ supporting
the arcs of $\Gamma$.

\begin{lemma}
    \lemlab{kernel-arc}%
    (i) If $\Gamma \subseteq\Xi$ is an arc cover, i.e.,
    $\bigcup \Gamma = \dirs$, then $\LS(\Gamma)$ is a blocking set.

    (ii) There is an arc cover $\Gamma \subseteq \Xi$ of size $k$ if
    and only if there is a blocking set $\GS\subseteq\LS$ of size $k$.
\end{lemma}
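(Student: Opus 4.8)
The plan is to unwind the definitions and argue both directions of the equivalence by passing between a blocking line and the arc it induces through the ray-from-origin parametrization of $\partial\SP$. For part (i), suppose $\Gamma\subseteq\Xi$ is an arc cover. Fix an arbitrary direction $u\in\dirs$; since $\bigcup\Gamma=\dirs$, there is some $\gamma\in\Gamma$ with $u\in\gamma$. By construction $\gamma=\intervalX{\seg}$ for some blocking segment $\seg\in\CTY{\Line}{\SP}$ of a line $\Line\in\LS(\Gamma)$. I would then observe that $u\in\intervalX{\seg}$ means there is a point $\pa\in\seg$ with $\pa/\|\origin\pa\|=u$, i.e.\ $\pa$ lies on the ray from $\origin$ in direction $u$. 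Since $\SP$ is star-shaped with respect to $\origin$ and $\SP(u)$ is the unique boundary point of $\SP$ on that ray, the point $\pa$ (being on $\Line\cap\SP$) lies on the segment $\origin\SP(u)$; hence $\Line$ blocks $u$. As $u$ was arbitrary, $\LS(\Gamma)$ is a blocking set. The one subtlety to check here is that $\pa$ genuinely lies on the closed segment $\origin\SP(u)$ and not beyond $\SP(u)$ — this is exactly where star-shapedness is used, since it forces $\origin\SP(u)\subseteq\SP$ and the ray to leave $\SP$ at $\SP(u)$.

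For part (ii), the ``if'' direction is the more substantive one and the ``only if'' direction follows from part (i). Suppose $\GS\subseteq\LS$ is a blocking set of size $k$. For each $u\in\dirs$ pick a line $\Line_u\in\GS$ blocking $u$, so $\Line_u$ meets the interior of $\origin\SP(u)$; the component of $\Line_u\cap\SP$ containing that intersection point is a segment $\seg_u\in\CTY{\Line_u}{\SP}$ whose induced arc $\intervalX{\seg_u}$ contains $u$. Thus the arcs $\{\intervalX{\seg}\mid \seg\in\CTY{\Line}{\SP},\ \Line\in\GS\}\subseteq\Xi$ cover $\dirs$. This uses at most $|\GS|=k$ lines, but a priori a single line contributes several arcs, so this collection may have more than $k$ arcs — I would extract a subcollection with exactly one arc per line of $\GS$ as follows. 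The key geometric fact is that for a fixed line $\Line$ not through $\origin$, the halfplane $\hnX{\Line}$ is convex and contains $\origin$, so the set of directions $u$ with $\SP(u)\notin\hnX{\Line}$ — equivalently, the directions blocked by $\Line$ — forms a single arc of $\dirs$ of length less than $\pi$ (it is the set of directions in which the ray exits $\hnX{\Line}$). Consequently the arcs of $\intervalX{\Line}$, though possibly several as subsets of $\partial\SP\cap\Line$, have union equal to this single arc $A_\Line\subseteq\dirs$; replacing, for each $\Line\in\GS$, all of its contributed arcs by... — here I must be careful, since the problem statement insists arcs come from $\Xi$, and $A_\Line$ itself need not be a single element of $\Xi$.

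The cleanest fix, and the step I expect to be the main obstacle, is to handle the possibility that $\Line\cap\SP$ is disconnected. I would argue that although $\Line\cap\SP$ can split into several segments, only the segments that actually separate $\origin$ from part of $\partial\SP$ matter, and for the covering statement what we need is: the union of the arcs in $\intervalX{\Line}$ equals the full arc $A_\Line$ of directions blocked by $\Line$. Granting that, given a blocking set $\GS=\{\Line_1,\dots,\Line_k\}$, the arcs $A_{\Line_1},\dots,A_{\Line_k}$ cover $\dirs$; each $A_{\Line_i}$ is a union of arcs from $\Xi$ supported on $\Line_i$; so $\bigcup_{i}\intervalX{\Line_i}$ is an arc cover drawn from $\Xi$ using lines only from $\GS$, and since $|\GS|=k$ we may discard redundant arcs to keep the number of \emph{distinct supporting lines} at most $k$ — but to get an arc cover of \emph{size} exactly $k$ as the statement demands, I would invoke that an arc cover using $\le k$ distinct lines yields, after the arc-cover computation of \thmref{arc-cover} applied to $\bigcup_{i}\intervalX{\Line_i}$, no help; instead, more directly: since $\{A_{\Line_1},\dots,A_{\Line_k}\}$ are $k$ arcs covering $\dirs$, and each $A_{\Line_i}$ is itself covered by (hence equals, as a set, the union of) the finitely many $\Xi$-arcs on $\Line_i$, a standard interval-cover argument picks one $\Xi$-arc per $\Line_i$ whose union still covers $\dirs$ — because on the circle, if $A_{\Line_i}$ is one of $k$ arcs in an irredundant cover, the portion of $\dirs$ that \emph{only} $A_{\Line_i}$ covers is itself an arc, contained in a single component of $\Line_i\cap\SP$'s induced arcs. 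Running this over $i=1,\dots,k$ produces an arc cover of size $k$ from $\Xi$, and its supporting lines form a blocking set of size $k$ by part (i). The ``only if'' direction of (ii) is immediate from (i): an arc cover $\Gamma$ of size $k$ gives the blocking set $\LS(\Gamma)$, which has size at most $k$, and if it were strictly smaller we could repeat the argument above to contradict minimality — so in fact $|\LS(\Gamma)|$ can be taken to be exactly $k$ after removing redundant arcs first. I would streamline the write-up by proving (i), then deriving both implications of (ii) from (i) together with the single-arc-per-line observation above.
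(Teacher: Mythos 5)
Your part (i) and the ``arc cover $\Rightarrow$ blocking set'' half of (ii) are correct and essentially identical to the paper's argument. The gap is in the other half of (ii), which you build on the ``key geometric fact'' that, for a fixed line $\Line$, the set of blocked directions $A_\Line=\Set{u\in\dirs}{\SP(u)\in\hpX{\Line}}$ is a single arc. That claim is false: $\SP$ is only star-shaped, not convex, so $\partial\SP$ can cross $\Line$ many times, and $A_\Line$ is the union of the pairwise-disjoint arcs $\intervalX{\seg}$, $\seg\in\CTY{\Line}{\SP}$, which can have $\Theta(n)$ components. (What is a single half-circle is the set of directions whose ray eventually exits $\hnX{\Line}$; the blocked set is the subset of directions where that exit happens before the ray leaves $\SP$, and this subset can be badly disconnected. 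This is exactly why the paper remarks that $|\Xi|=\Theta(n^2)$ in the worst case, and why the quadratic lower bound of \lemref{lower-bound} exists; if your fact were true, one arc per line would essentially suffice and the pruning machinery of Section~3 would be pointless.) Your subsequent ``exclusive region'' patch inherits the same assumption -- it treats each $A_{\Line_i}$ as an arc in an irredundant circular cover -- so it does not close the gap: when the $A_{\Line_i}$ are unions of arcs, the portion covered only by line $\Line_i$ need not lie inside a single component, and your argument does not show that one can choose one segment per line whose arcs still cover $\dirs$.

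That selection is precisely the content of the paper's proof, and it is done geometrically rather than combinatorially: discard from $\GS$ any line not appearing on the boundary of the face $\face$ of $\ArrX{\GS}$ containing $\origin$, and for each remaining $\Line\in\GS$ take the unique segment $\seg_\Line\in\CTY{\Line}{\SP}$ that contains the edge of $\face$ lying on $\Line$ (well defined since $\face\subseteq\SP$). Because $\origin\in\face\subseteq\SP$, every segment $\origin\SP(u)$ must cross an edge of $\face$, at a point lying on some $\seg_\Line$; hence the at most $k$ arcs $\intervalX{\seg_\Line}$ already cover $\dirs$. Replacing your single-arc claim by this face-based selection repairs the proof; the rest of your write-up, including deriving the forward direction of (ii) from (i), then goes through. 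Finally, your closing remark about ``contradicting minimality'' in the only-if direction is unnecessary -- producing a blocking set of size at most $k$ is all that is needed there.
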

\begin{proof}
    (i) If $\Gamma$ is an arc cover, then for every direction
    $u\in\dirs$, there is an arc $\intervalX{\seg} \in \Gamma$ that
    blocks the direction $u$. If
    $\intervalX{\seg} \in \intervalX{\Line}$, for a line
    $\Line \in \LS(\Gamma)$, then the segment $\origin\SP(u)$
    intersects $\Line$.  Since this condition holds for all directions
    in $\dirs$, it follows $\LS(\Gamma) \subseteq \LS$ is a blocking
    set.

    (ii) If there is an arc cover $\Gamma\subseteq\Xi$ of size $k$,
    then by part (i), $\LS(\Gamma)$ is a blocking set of size at most
    $k$. Conversely, let $\GS$ be a blocking set for $\SP$.  Without
    loss of generality, we can assume that each line of $\GS$ appears
    as an edge on the boundary of the face $\face$ of $\ArrX{\GS}$
    that contains the origin, because otherwise we can remove the line
    from $\GS$. For each line $\Line \in \GS$, let
    $\seg_\Line \in \CTY{ \Line}{ \SP}$ be the segment that contains
    the edge of $\face$ lying on $\Line$.  Since $\face \subseteq\SP$,
    the segment $\origin\SP(u)$ intersects an edge of $\face$ for
    every $u\in\dirs$. Hence,
    $\Set{\intervalX{\seg_\Line} }{ \Line\in \GS}$ is an arc cover of
    size at most $\cardin{\GS}$.
\end{proof}

By \lemref{kernel-arc}, it suffices to compute the smallest-size arc
cover from $\Xi$. But $\cardin{\Xi} = \Theta(n^2)$ in the worst
case. Therefore computing $\Xi$ explicitly and then using
\thmref{arc-cover} to compute an arc cover take $O(n^2\log n)$
time. In the following, we show how to improve the running time to
$O(n\opt \log n)$, where $\opt$ is the optimal solution size.

\subsection{Computing an almost-optimal blocking set}
\seclab{almost-optimal}

We extend the greedy algorithm used in the circular arc cover (see
\secref{c:arc:cover}) to compute an arc cover in $\Xi$ without
computing $\Xi$ explicitly. For clarity, we describe the greedy
algorithm in terms of computing a blocking set.

For a pair of directions $u, v\in \dirs$, let $\SP(u,v]\subseteq\SP$
be the semiopen subchain of $\SP$ from $\SP(u)$ to $\SP(v)$ in the
counterclockwise direction, which contains the endpoint $\SP(v)$ but
not $\SP(u)$. As such, we have $\SP(u,u] =\SP$.

We define a (partial) function
$\segfn: \dirs\times\LS \rightarrow \Re^4$, as follows.  For a pair
$u\in\dirs$ and a line $\Line\in\LS$, if $\Line$ does not intersect
the segment $\origin\SP(u)$, then $\segfnY{u}{\Line}$ is not
defined. Otherwise, it is the segment of $\CTY{\Line}{\SP}$ that
intersects $\origin \SP(u)$. Similarly, we define a (partial) function
$\shootfn: \dirs\times\LS \rightarrow\dirs$, that is the first point
of $\segfnY{u}{\Line}$ in the counter-clockwise direction after
$\SP(u)$ (note, that $\Line$ might intersect the boundary $\SP$ many
times).  Set $\lambda(u) = \arg\max_{\Line\in\LS} \shoot{u}{\Line}$,
i.e., among the feasible segments that intersect $\origin\SP(u)$,
$\lambda(u)$ is the last one to exit $\SP$ in the counterclockwise
direction.

The algorithm consists of the following steps: Set $v_0:=(1,0)$,
$\Line_0:=\lambda(v_0)$, $\GS:=\{\Line_0\}$, and $i:=1$. In the $i$\th
iteration, the algorithm does the following: it sets
$v_i =\shoot{v_{i-1}}{\Line_{i-1}}$, $\Line_i = \lambda (v_{i})$, and
$\GS = \GS \cup \{\Line_i\}$. The algorithm then continues to the next
iteration till $\faceoX{\GS} \subseteq \interX{\SP}$.  Let $v'_1$ be
the first intersection point of $\Line_0$ with $\SP$ in the clockwise
direction from $v_0$, i.e., the segment $\SP(v_1)\SP(v'_1)$ lies
inside $\SP$. Then the terminating condition is the same as
$\shoot{v_i}{\Line_i}$ lying after $v'_1$ (from $v_i$) in the
counterclockwise direction.  By construction,
$\faceoX{\GS}\subset \SP$. Since this is a greedy algorithm for
computing an arc cover, $\cardin{\GS} \leq \opt+1$. The polygon $\SP$
can be preprocessed, in $O(n\log n)$ time, into a data structure of
linear size so that for a pair $u\in\dirs$ and a line $\Line\in\LS$,
$\shoot{u}{\Line}$ can be computed in $O(\log n)$
time~\cite{cg-vip-89,hs-parss-95}.  The algorithm performs $O(n\opt)$
such queries, so the total running time is $O(n\opt\log n)$.

\begin{lemma} %
    \lemlab{opt-hs-1}%
    Let $\LS$ be a set of $n$ lines, $\SP$ be a polygon with $O(n)$
    vertices that is star shaped with respect to $\origin$ and that
    contains $\faceoX{\LS}$, and let $\opt$ be the size of the
    smallest blocking set in $\LS$ for $\SP$.  A blocking set
    $\GS \subseteq \LS$ of size at most $\opt+1$ can be computed in
    $O( \opt n \log n )$ time.
\end{lemma}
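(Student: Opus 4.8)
The plan is to verify that the greedy procedure already described in the paragraph preceding the lemma is correct and meets the stated bounds; essentially everything is in place and the task is to assemble it into a clean argument. First I would observe that the whole construction is a faithful simulation of the greedy arc-cover algorithm of \apndref{c:arc:cover} applied to the (implicit) instance $\Xi$ from \secref{arc-reduction}: at each step the current ``frontier'' direction $v_{i-1}$ plays the role of the last covered point, the arc $\intervalX{\segfnY{v_{i-1}}{\Line_{i-1}}}$ is the arc we just added, and $\lambda(v_i) = \arg\max_{\Line\in\LS}\shoot{v_i}{\Line}$ picks precisely the arc of $\Xi$ containing $v_i$ that extends furthest counterclockwise — which is exactly the greedy choice. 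Hence by the analysis of the greedy arc-cover algorithm (which yields a cover of size at most $\opt+1$ once a valid starting arc is fixed), the set $\GS$ produced has $|\GS|\le \opt+1$, where $\opt$ here is the size of the smallest arc cover from $\Xi$, which by \lemref{kernel-arc}(ii) equals the size of the smallest blocking set for $\SP$ in $\LS$. I would also note that the hypothesis $\faceoX{\LS}\subseteq\SP$ guarantees a blocking set exists (equivalently, $\Xi$ covers $\dirs$), so the algorithm terminates.

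Next I would justify termination and correctness of the stopping rule. By induction the face $\faceoX{\GS_i}$ after $i$ steps contains the origin and is ``pinned down'' in all directions in the counterclockwise sweep from $v_0$ up to $v_i$; each step strictly advances $v_i$ counterclockwise (since $\shoot{v_{i-1}}{\Line_{i-1}}$ lies strictly after $v_{i-1}$, because the chosen segment genuinely crosses $\origin\SP(v_{i-1})$ and has length less than $\pi$ in angular extent — here I'd invoke, or briefly argue, that each arc $\intervalX{\seg}$ spans less than $\pi$ since $\SP$ is star-shaped and $\seg$ is a chord of a line not through $\origin$). Thus after finitely many steps the sweep wraps past the clockwise reach $v'_1$ of the first line $\Line_0$, and at that moment $\faceoX{\GS}\subseteq\interX{\SP}$, i.e., $\GS$ is a genuine blocking set; this is the terminating condition stated in the text. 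Since the greedy cover of a circle needs at most $\opt+1$ arcs, the loop runs $O(\opt)$ times.

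For the running time I would cite the point-location / ray-shooting data structure of \cite{cg-vip-89,hs-parss-95}: after $O(n\log n)$ preprocessing of the star-shaped polygon $\SP$ into a linear-size structure, a single query $\shoot{u}{\Line}$ — find where line $\Line$, restricted to the component of $\Line\cap\SP$ crossing $\origin\SP(u)$, next exits $\partial\SP$ going counterclockwise — costs $O(\log n)$. Computing $\lambda(v_i)$ requires one such query for each of the $n$ lines, so $O(n\log n)$ per iteration, and over $O(\opt)$ iterations the total is $O(\opt\, n\log n)$, dominating the preprocessing. Combining the three parts gives a blocking set of size at most $\opt+1$ in $O(\opt\, n\log n)$ time, as claimed.

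The main obstacle I anticipate is not any single estimate but the careful bookkeeping that the discrete greedy simulation really tracks the continuous arc-cover greedy — in particular that $\lambda(v_i)$ selects the globally-furthest-reaching arc through $v_i$ (and not merely the furthest-reaching segment on some fixed line), and that the subtle endpoint conventions (semiopen chains $\SP(u,v]$, the role of $v'_1$, arcs closed versus open at their endpoints) line up so that no direction is left uncovered at the seam. Getting the stopping criterion to coincide exactly with ``$\faceoX{\GS}\subseteq\interX{\SP}$'' is the delicate point; the rest is a direct application of \lemref{kernel-arc} and the arc-cover analysis.
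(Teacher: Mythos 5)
Your proposal is correct and follows essentially the same route as the paper: simulate the greedy arc-cover algorithm on the implicit arc set $\Xi$ via $\lambda(\cdot)$ and $\shootfn$, invoke the $\opt+1$ guarantee of the greedy analysis together with \lemref{kernel-arc}, and charge each of the $O(\opt)$ iterations $n$ ray-shooting queries at $O(\log n)$ each using the structure of \cite{cg-vip-89,hs-parss-95} after $O(n\log n)$ preprocessing. The bookkeeping points you flag (arcs of angular extent less than $\pi$, the seam at $v_1'$) are exactly the details the paper handles implicitly, so no gap remains.
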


\subsection{Computing an optimal solution}
\seclab{optimal:i:c}

Let $\GS$ be the blocking set computed by the above greedy algorithm.
For each line $\Line \in \LS$ we compute its intersection points with
the lines of $\GS$.  For each such intersection point $\xi$, if $\xi$
lies inside $\SP$, let $\seg_\xi \in \CTY{\PP}{\Line}$ be the segment
that contains $\xi$. Let $\SS_1$ be the set of resulting $O(n\opt)$
segments. Let
\begin{equation*}
    \SS_2 =  \bigcup_{\Line \in \GS} \CTY{\Line}{\PP}
\end{equation*}
be the set of all segments induced by the lines of $\GS$.  Set
$\SS = \SS_1 \cup \SS_2$. The computes the set
$\Gamma = \Set{\intervalX{\seg}}{\seg \in \SS }$, and then computes
the minimal size arc cover $\CS$ of $\dirs$ by the arcs of $\Gamma$.
The returned set is
$\OS=\{\Line\in\Line \mid \intervalX{\seg}\in\CS, \seg\subset\Line\}$.

\begin{figure}[ht]
    \centering%
    \begin{tabular}{*{3}{c}}
      \includegraphics[page=1,width=0.4\linewidth]{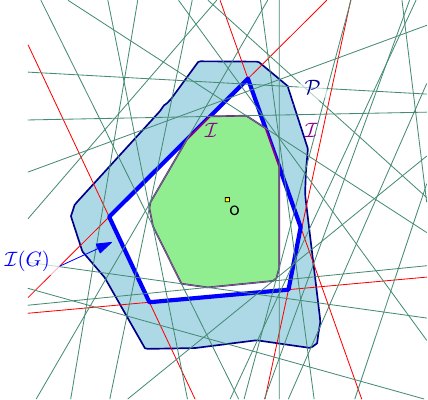}
      &
        \qquad
      & \includegraphics[page=2,width=0.4\linewidth]{figs/filtering}
      \\
      (A)&&(B)
    \end{tabular}
    \caption{(A) A hitting set $\GS$ of size at most $\opt+1$.  (B)
       Illustration of the proof of
       \protect\lemref{reduced-arc-cover}.}
    \figlab{filtering}
\end{figure}

\begin{lemma}
    \lemlab{reduced-arc-cover}%
    The set $\Gamma$ contains an arc cover of size $\opt$.
\end{lemma}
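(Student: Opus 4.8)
The plan is to take an optimal blocking set $\OS^* \subseteq \LS$ of size $\opt$ (which exists by definition, and by Lemma~\ref{lemma:kernel-arc} corresponds to an optimal arc cover drawn from $\Xi$), and show that the specific arcs it contributes to the optimal arc cover are in fact \emph{already present} in the pruned set $\Gamma$ computed by the algorithm. The crucial observation is the one used in the proof of Lemma~\ref{lemma:kernel-arc}(ii): without loss of generality, every line $\Line \in \OS^*$ appears as an edge of the face $\faceoX{\OS^*}$ containing the origin, and for each such line the relevant blocking segment is $\seg_\Line \in \CTY{\Line}{\SP}$, the connected component of $\Line \cap \SP$ that contains the edge of $\faceoX{\OS^*}$ lying on $\Line$. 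So it suffices to prove that for each $\Line \in \OS^*$, the segment $\seg_\Line$ — or at least a segment whose induced arc $\intervalX{\cdot}$ covers the same part of $\dirs$ — lies in $\SS = \SS_1 \cup \SS_2$.

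First I would dispose of the easy case: if $\Line \in \GS$ (the greedy blocking set), then every component of $\Line \cap \SP$, in particular $\seg_\Line$, is already in $\SS_2 = \bigcup_{\Line \in \GS} \CTY{\Line}{\PP}$, so the corresponding arc is in $\Gamma$. The interesting case is $\Line \in \OS^* \setminus \GS$. Here I would argue that $\seg_\Line$ must cross one of the greedy lines of $\GS$ \emph{inside} $\SP$: the two endpoints of the edge $e_\Line$ of $\faceoX{\OS^*}$ lying on $\Line$ lie on the boundary of $\faceoX{\OS^*}$, hence are separated from $\origin$ by cuts, but this needs to be converted into a statement relating $\Line$ to the \emph{greedy} face $\faceoX{\GS}$ rather than to $\faceoX{\OS^*}$. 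The right statement is: since $\faceoX{\GS} \subseteq \interX{\SP}$ (the termination condition) and $\Line$ blocks some direction $u$ — meaning $\Line$ separates $\origin$ from $\SP(u)$ — the segment $\origin\,\SP(u)$ exits $\faceoX{\GS}$ before reaching $\partial\SP$, so it crosses $\partial\faceoX{\GS}$, i.e., it crosses some line $\Line' \in \GS$ at a point inside $\SP$; and moreover this crossing point lies on $\seg_\Line$ (the component of $\Line\cap\SP$ met by $\origin\,\SP(u)$ is exactly $\segfnY{u}{\Line}$, which one checks equals $\seg_\Line$ for the directions $u$ in question). Then the intersection point $\xi = \Line \cap \Line'$ lies inside $\SP$, so by construction $\seg_\xi \in \CTY{\Line}{\SP}$ was added to $\SS_1$, and since $\xi \in \seg_\xi$ and $\xi \in \seg_\Line$, we get $\seg_\xi = \seg_\Line$. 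Hence $\intervalX{\seg_\Line} \in \Gamma$. Doing this for every $\Line \in \OS^*$ shows $\Gamma$ contains the arc cover $\Set{\intervalX{\seg_\Line}}{\Line \in \OS^*}$ of size $\opt$.

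The main obstacle I anticipate is the bookkeeping in the crossing argument: a line $\Line$ may meet $\partial\SP$ many times, so $\Line \cap \SP$ has several components, and I must be careful that the component $\seg_\Line$ picked out by "the edge of $\faceoX{\OS^*}$ on $\Line$" is the \emph{same} component $\seg_\xi$ picked out by "the point where $\Line$ meets a greedy line inside $\SP$." The clean way to handle this is to fix a direction $u$ that $\Line$ blocks \emph{via the edge $e_\Line$} — i.e., $u$ such that the ray $\origin\,\SP(u)$ crosses $e_\Line$ itself (such $u$ exist precisely because $e_\Line$ is a genuine edge of the star-shaped face $\faceoX{\OS^*}$, so it subtends a nonempty arc of directions as seen from $\origin$, and along those directions the ray leaves $\faceoX{\OS^*} \supseteq \{\origin\} $ through $e_\Line$). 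For such $u$, $\segfnY{u}{\Line}$ is the component containing $e_\Line$, namely $\seg_\Line$; and since $\faceoX{\GS} \subseteq \interX\SP$ while $\Line$ separates $\origin$ from $\SP(u) \in \partial\SP$, the ray $\origin\,\SP(u)$ must cross $\partial\faceoX{\GS}$ strictly before crossing $\Line$ is irrelevant — what matters is it crosses $\partial\faceoX{\GS}$ at all, at some point $y$ interior to $\SP$, lying on some $\Line' \in \GS$; pushing along the ray past $y$ one then reaches $\Line$ at a point of $\seg_\Line$ still interior to $\SP$, and continuing, $\Line$ meets $\Line'$ (or another greedy line) inside $\SP$ — this last step is where a short topological argument about the ray traversing nested convex regions pins down that $\seg_\Line$ itself carries one of the recorded intersection points $\xi$. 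I would also remark that the arcs need not literally be equal as \emph{arcs} if one worries about $\intervalX{\seg_\Line}$ versus sub-arcs, but since $\seg_\xi = \seg_\Line$ as segments, the induced arcs coincide exactly, so no such subtlety arises.
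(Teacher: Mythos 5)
There is a genuine gap, and it sits exactly where you flagged it: the claim that for every $\Line \in \OS^* \setminus \GS$ the designated segment $\seg_\Line$ must contain an intersection point with a line of $\GS$ inside $\SP$. Nothing in your sketch establishes this. From ``the ray $\origin\,\SP(u)$ crosses some greedy line $\Line'$ at a point inside $\SP$ and later crosses $\Line$ at a point of $\seg_\Line$'' you cannot conclude that $\Line$ meets $\Line'$ (or any greedy line) on $\seg_\Line$, nor even inside $\SP$: the two lines may be parallel (think of $\Line$ and $\Line'$ as a deep and a shallow chord cutting across the same narrow spike of $\SP$), or they may meet far outside $\SP$, or at a point lying on a different connected component of $\Line\cap\SP$ than $\seg_\Line$. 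The canonical-form normalization (each line of $\OS^*$ contributes an edge of $\faceoX{\OS^*}$) does not rescue this: it only guarantees that $e_\Line$ lies beyond some greedy line $\Line'$, i.e.\ that $\seg_\Line$ lies entirely in a face of $\ArrX{\GS}$ other than $\faceoX{\GS}$ --- which is precisely the situation where $\seg_\Line$ touches no line of $\GS$ and hence is in neither $\SS_1$ nor $\SS_2$. So your plan of showing that the \emph{given} optimal cover is literally contained in $\Gamma$ cannot be completed as stated; this stronger statement is at best unproven and, as far as I can see, not true in general.

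The paper's proof sidesteps this by an exchange argument, which is why the lemma only asserts that $\Gamma$ contains \emph{some} arc cover of size $\opt$, not that it contains the arcs of every optimal solution. Concretely: if an optimal arc $\intervalX{\seg}$ has $\seg$ disjoint from all lines of $\GS$, then $\seg$ lies on the far side of some $\Line'\in\GS$ (every segment $\origin\pb$, $\pb\in\seg$, crosses $\Line'$ inside $\SP$ by star-shapedness, and by continuity all these crossing points lie on a single component $\seg'\in\CTY{\Line'}{\SP}$), so $\intervalX{\seg}\subseteq\intervalX{\seg'}$ with $\seg'\in\SS_2$; replacing $\intervalX{\seg}$ by $\intervalX{\seg'}$ keeps an optimal cover while strictly increasing the number of its arcs that lie in $\Gamma$, and iterating yields an optimal cover inside $\Gamma$. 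Your argument already contains all the ingredients for this replacement step (the ray-crossing observation and the component bookkeeping); what is missing is the willingness to swap arcs rather than insisting that $\seg_\Line$ itself appears in $\SS$. With that modification your write-up becomes essentially the paper's proof; without it, the key step is a non sequitur.
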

\begin{proof}
    Suppose for the sake of contradiction that $\Gamma$ does not
    contain an arc cover of size $\opt$.  Let $\CS \subset \Xi$ be an
    arc cover of size $\opt$. Then $\CS$ contains an arc
    $\intervalX{\seg}$ such that $\seg$ lies on a line of
    $\LS\setminus\GS$ and $\seg$ does not intersect any line of $\GS$,
    i.e., it lies in the interior of a face of $\ArrX{\GS}$, the
    arrangement of $\GS$.

    If $\seg$ lies in the face corresponding to $\faceoX{\GS}$, then
    $\seg$ must intersect $\partial \faceoX{\GS}$, as the endpoints of
    $\seg$ lies on $\partial \SP$ and $\IS(\GS) \subseteq \SP$,
    contradicting the assumption that $\seg$ does not intersect any
    line of $\GS$.

    Next, suppose $\seg$ lies in some other face of $\ArrX{\GS}$. Let
    $\pa$ be an endpoint of $\seg$.  The segment $\pa \origin$ must
    intersect a line $\Line' \in \GS$ at a point $\pb$. In particular,
    let $\seg' \in \Line' \cap \SP$ be the segment of $\Line$
    containing $\pb$. Clearly, $\seg'$ is a blocker for all the points
    on $\seg$, so we can obtain another optimal solution by replacing
    $\intervalX{\seg}$ with $\intervalX{\seg'}$ (see
    \figref{filtering}), and this solution has one more arc of
    $\Gamma$, a contradiction.

    Hence, we can conclude that $\Gamma$ contains an optimal arc
    cover.
\end{proof}

Computing the set $\GS$ takes $O(n \opt \log n)$ time.  Observe that
$\cardin{\SS_1} = O(n \opt)$, as each line of $\LS$ induces at most
$\opt+1$ segments in this set.  Similarly, as $\cardin{\GS} = \opt+1$,
we have that $\cardin{\SS_2} = O(n \opt)$. It follows that computing
$\SS_1$ and $\SS_2$ requires $O(n \opt)$ ray-shooting queries in
$\SP$, and these queries overall take $O(n \opt \log n)$ time. Hence,
we obtain the following:

\begin{lemma}
    \lemlab{opt-hs}%
    Let $\LS$ be a set of $n$ lines in the plane, and let $\SP$ be a
    polygon with $O(n)$ vertices that is star shaped with respect to
    $\origin$ and that contains $\faceoX{\LS}$.  Then a blocking set
    from $\LS$ of $\SP$ of size $\opt$ can be computed, in
    $O(\opt n \log n)$ time, where $\opt$ is the size of the optimal
    solution.
\end{lemma}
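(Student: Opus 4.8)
The plan is to combine the almost-optimal greedy construction of \lemref{opt-hs-1} with the reduction to arc cover and the pruning argument of \lemref{reduced-arc-cover}, so that the expensive quadratic step of building all of $\Xi$ is never needed. First I would run the greedy algorithm of \lemref{opt-hs-1} to obtain a blocking set $\GS \subseteq \LS$ with $|\GS| \le \opt+1$, which by that lemma takes $O(\opt n \log n)$ time; note its hypotheses match ours verbatim, so this step is immediate. The point of having $\GS$ is that it is a small ``scaffold'': every line of $\LS$ crosses the $\opt+1$ lines of $\GS$ in at most $\opt+1$ points, so intersecting $\LS$ with the arrangement $\ArrX{\GS}$ chops each line into only $O(\opt)$ relevant pieces inside $\SP$, giving a candidate family $\SS = \SS_1 \cup \SS_2$ of only $O(\opt n)$ segments rather than $\Theta(n^2)$.

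Next I would verify the three cost claims for assembling $\SS$ and $\Gamma$. Each line of $\LS$ contributes at most $\opt+1$ intersection points with $\GS$, hence at most $\opt+1$ segments to $\SS_1$, so $|\SS_1| = O(\opt n)$; and since $|\GS| = \opt+1$, each of its lines crosses $\partial\SP$ a linear number of times but we only need, per line of $\LS$, the segment of $\CTY{\Line}{\SP}$ containing a given intersection point, which is found by a ray-shooting query, while $\SS_2 = \bigcup_{\Line\in\GS}\CTY{\Line}{\SP}$ has total size $O(\opt n)$ as well. Using the $O(\log n)$-query ray-shooting structure on $\SP$ referenced in the excerpt, all $O(\opt n)$ queries to locate the containing segments run in $O(\opt n \log n)$ total time; translating each segment $\seg \in \SS$ to its arc $\intervalX{\seg}$ is $O(1)$ per segment. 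Then \thmref{arc-cover} applied to $\Gamma$ (which has $O(\opt n)$ arcs) computes a smallest arc cover $\CS$ in $O(|\Gamma|\log|\Gamma|) = O(\opt n \log n)$ time, and the returned line set $\OS = \LS(\CS)$ has the same size as $\CS$.

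Finally I would argue correctness and size. By \lemref{reduced-arc-cover}, the family $\Gamma$ contains an arc cover of size exactly $\opt$, so the minimum arc cover $\CS$ that \thmref{arc-cover} returns has $|\CS| \le \opt$; and $|\CS| \ge \opt$ since, by \lemref{kernel-arc}(i), $\LS(\CS)$ is a blocking set of $\SP$ and $\opt$ is by definition the minimum blocking-set size. Hence $|\OS| = |\CS| = \opt$, and $\OS$ is a valid blocking set, which is exactly the claim. Summing the four stages — greedy scaffold, segment extraction via ray shooting, arc construction, and optimal arc cover — every stage is $O(\opt n \log n)$, giving the stated bound. The main obstacle, and the only genuinely nontrivial ingredient, is \lemref{reduced-arc-cover}: that pruning to the $\GS$-induced segments cannot destroy an optimal solution. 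That is already proved in the excerpt (by the replacement argument: any arc of an optimal cover lying in the interior of a face of $\ArrX{\GS}$ can be swapped for an arc of a $\GS$-segment that blocks at least as much), so here it is invoked as a black box; everything else is bookkeeping on sizes and query counts.
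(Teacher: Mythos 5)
Your proposal matches the paper's proof essentially step for step: run the greedy algorithm of \lemref{opt-hs-1} to get the scaffold $\GS$ of size at most $\opt+1$, extract the $O(\opt n)$ candidate segments $\SS_1\cup\SS_2$ via ray shooting in $O(\opt n\log n)$ time, invoke \lemref{reduced-arc-cover} to guarantee an optimal cover survives the pruning, and finish with \thmref{arc-cover} on the $O(\opt n)$ arcs. The argument and the bookkeeping (including the lower bound $|\CS|\ge\opt$ via \lemref{kernel-arc}) are correct and identical in substance to the paper's Section~\ref{sec:optimal:i:c}.
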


\section{Computing optimal \TPDF{$\eps$}{eps}-kernel}
\seclab{opt-kernel}

Let $\PS$ be a set of $n$ points in $\reals^2$ and
$\eps\in (0,1)$ a parameter.  We describe an
$O(n\opt_\eps\log n)$-time algorithm for computing an $\eps$-kernel of
size $\opt_\eps$. We use polarity to construct a set $\LS$ of $n$ lines and a star polygon $\SP$ that
contains $\faceoX{\LS}=\bigcap_{\Line\in\CS} \hnX{\Line}$. An $\eps$-kernel of $\PS$ corresponds to a blocking set in $\LS$ for $\SP$.

\begin{definition}[$\eps$-shifted supporting line]
    \deflab{eps:supporting:line}%
    For a direction $u\in\dirs$ and a parameter $\eps>0$, let
    $\LineSY{u}{\eps}$ be the boundary line of
    $\hnSY{u}{\eps} = \hnDirMY{u}{ (\eps/2) \pwY{u}{\PS} }$, see
    \defref{supporting:line}. Let $\hpSY{u}{\eps}$ be the (closed)
    complement halfplane to $\hnSY{u}{\eps}$.
\end{definition}

Set $\HH_\eps=\Set{\smash{\hpSY{u}{\eps}}}{ u \in \dirs}$.  The
following lemma is immediate from the definition of $\eps$-kernel.

\begin{lemma}
    Given a point set $\PS$ in $\reals^2$ and a parameter
    $\eps\in(0,1)$, a subset $\CS\subseteq\PS$ is an $\eps$-kernel of
    $\PS$ if and only if $\hpSY{v}{\eps} \cap \CS \ne \emptyset$ for
    all $u\in\dirs$, i.e., $\CS$ is a hitting set of $\HH_\eps$.
\end{lemma}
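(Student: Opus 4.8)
The plan is to unwind the definitions on both sides and reduce the equivalence to the symmetry of $\dirs$ under $u \mapsto -u$. First I would make the halfplane $\hpSY{u}{\eps}$ explicit. Write $M(u) = \max_{p \in \PS} \DotProd{u}{p}$ and $m(u) = \min_{p \in \PS} \DotProd{u}{p}$, so that $\pwIY{u}{\PS} = [m(u), M(u)]$ and $\pwY{u}{\PS} = M(u) - m(u)$. By \defref{supporting:line}, the supporting line $\extLX{u}$ is $\Set{x}{\DotProd{u}{x} = M(u)}$, and since $\CHX{\PS} \subseteq \hnDirX{u}$ we have $\hnDirX{u} = \Set{x}{\DotProd{u}{x} \le M(u)}$. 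Translating this halfplane towards the origin by distance $(\eps/2)\pwY{u}{\PS}$, as in \defref{eps:supporting:line}, gives $\hnSY{u}{\eps} = \Set{x}{\DotProd{u}{x} \le M(u) - (\eps/2)\pwY{u}{\PS}}$, and hence its complement halfplane is $\hpSY{u}{\eps} = \Set{x}{\DotProd{u}{x} \ge M(u) - (\eps/2)\pwY{u}{\PS}}$.

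Consequently, for a fixed direction $u$, the condition $\hpSY{u}{\eps} \cap \CS \ne \emptyset$ is equivalent to the inequality $\max_{p \in \CS} \DotProd{u}{p} \ge M(u) - (\eps/2)\pwY{u}{\PS}$; call this $(\star_u)$. On the other side, since $\CS \subseteq \PS \subseteq \CHX{\PS}$, by \defref{approx:and:kernel} the set $\CS$ is an $\eps$-kernel exactly when $\pwIY{u}{\CS} \supseteq (1-\eps)\pwIY{u}{\PS}$ for every $u$; unfolding the definition of the $(1-\eps)$-shrink of an interval, this containment says precisely that $\min_{p \in \CS}\DotProd{u}{p} \le m(u) + (\eps/2)\pwY{u}{\PS}$ and $\max_{p \in \CS}\DotProd{u}{p} \ge M(u) - (\eps/2)\pwY{u}{\PS}$. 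The second of these is $(\star_u)$. For the first, use the identities $\pwIX{u} = -\pwIX{-u}$ and $\pwY{u}{\PS} = \pwY{-u}{\PS}$ from the preliminaries, so that $m(u) = -M(-u)$ and $\min_{p\in\CS}\DotProd{u}{p} = -\max_{p\in\CS}\DotProd{-u}{p}$: negating the first inequality turns it into $\max_{p\in\CS}\DotProd{-u}{p} \ge M(-u) - (\eps/2)\pwY{-u}{\PS}$, which is exactly $(\star_{-u})$.

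Putting these together, $\CS$ is an $\eps$-kernel iff both $(\star_u)$ and $(\star_{-u})$ hold for every $u \in \dirs$, and since $\dirs = -\dirs$ this is the same as saying $(\star_u)$ holds for all $u \in \dirs$, i.e., $\hpSY{u}{\eps} \cap \CS \ne \emptyset$ for all $u \in \dirs$, i.e., $\CS$ is a hitting set of $\HH_\eps$. I do not expect a real obstacle here — the statement is essentially a repackaging of the definitions — so the only care needed is bookkeeping: matching the interval-shrink convention (each endpoint of $J$ moves inward by $(\eps/2)\lenX{J}$) with the supporting-line translation distance $(\eps/2)\pwY{u}{\PS}$, and observing that the single family $\Set{\hpSY{u}{\eps}}{u \in \dirs}$ simultaneously encodes the upper-endpoint constraint at $u$ and the lower-endpoint constraint at $u$ (the latter being the upper-endpoint constraint at $-u$). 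It is also worth remarking that $\pwIY{u}{\CS}$ coincides with the projection of $\CHX{\CS}$ in direction $u$, so the hitting-set characterization agrees with the ``convex hull of $\CS$ approximates that of $\PS$'' description of an $\eps$-kernel.
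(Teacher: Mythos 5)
Your argument is correct and is exactly the definitional unwinding that the paper has in mind: it states this lemma as ``immediate from the definition of $\eps$-kernel'' and gives no further proof, and your computation of $\hpSY{u}{\eps}$, the reduction of the interval containment to the two endpoint inequalities, and the use of the symmetry $u \mapsto -u$ (with $\pwIX{u}=-\pwIX{-u}$, $\pwY{u}{\PS}=\pwY{-u}{\PS}$) is precisely the bookkeeping that makes that claim rigorous. No gap; your remark that the single family $\HH_\eps$ encodes both endpoint constraints via antipodal directions is the key observation, and it also quietly fixes the statement's typo of mixing $v$ and $u$.
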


The problem of computing an $\eps$-kernel thus reduces to computing a
minimum-size hitting set of the infinite set $\HH_\eps$. It will be
convenient to use the polarity transform and work in the
mapped plane, so we first describe the polar of $\eps$-kernel and
then describe the algorithm.

\RegVer{\paragraph*{Polarity.}}
\DCGVer{\subparagraph*{Polarity.}}

For a point $\pa \neq \origin$, its \emphi{inversion}, through the
unit circle, is the point $\pa^{-1} = \pa / \normX{\pa}^2$. Observe
that $\pa, \pa^{-1}, \origin$ are collinear,
$\normX{\pa} \normX{\pa^{-1}} =1$, and $\pa$ and $\pa^{-1}$ are on the
same side of the origin on this line.  We use the \emphw{polarity
   transform}, which maps a point $\pa=(a,b) \neq \origin$ to the
line
\begin{equation*}
    \polarX{\pa}%
    \equiv%
    ax+by-1=0
    \equiv%
    \DotProd{\pa}{ (x,y) } - 1 = 0
    \equiv%
    \DotProd{\pa}{ (x,y) - \frac{\pa}{\normX{\pa}^2} }  = 0.
\end{equation*}
Namely, the line $\polarX{\pa}$ is orthogonal to the vector
$\origin \pa$, and the closest point on $\polarX{\pa}$ to the origin
is $\pa^{-1}$. Geometrically, a point $\pa$ is being mapped to the line
passing through the inverted point $\pa^{-1}$ and orthogonal to the
vector $\origin \pa^{-1}$. Similarly, for a line $\Line$, its
\emphw{polar} point $\polarX{\Line}$ is $\pb^{-1}$, where $\pb$ is the
closest point to the origin on $\Line$. Observe that
$\polarX{\pth{\polarX{\Line}}} = \Line$ and
$\polarX{\pth{\polarX{\pa}}} = \pa$ for any line $\Line$ and any point
$\pa$.

\begin{figure}[ht]
    \phantom{}%
    \hfill%
    \begin{minipage}{0.35\linewidth}
        \includegraphics{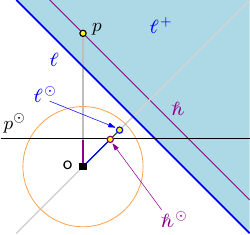}\hfill
    \end{minipage}
    \hfill%
    \begin{minipage}{0.35\linewidth}
        \includegraphics[width=0.99\linewidth]{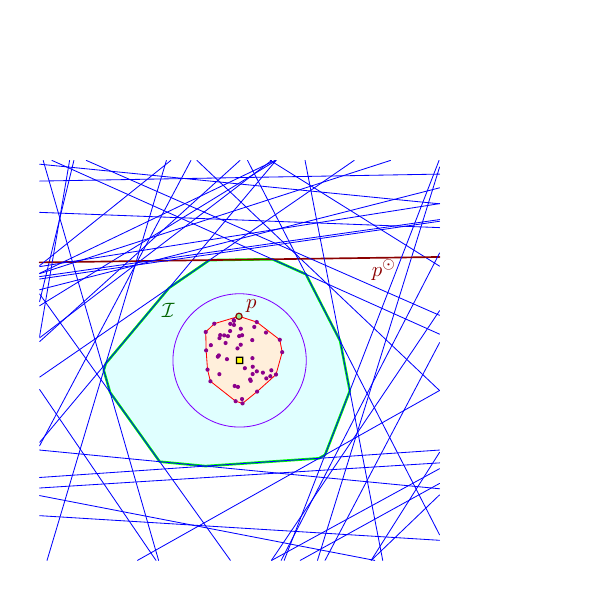}
    \end{minipage}
    \hfill%
    \phantom{}%

    \captionof{figure}{%
       Left: A point $\pa$ lies in the halfplane $\hpX{\Line}$ $\iff$
       $\polarX{\pa}$ intersects the segment $\origin
       \polarX{\Line}$. %
       \\
       Right: A convex hull of a point set, and the corresponding
       ``polar'' polygon formed by the intersection of halfplanes.}

    \figlab{polarity:1}%
\end{figure}

If a point $\pa$ lies on a line $\Line$ then
$\polarX{\Line}\in \polarX{\pa}$. If $\pa$ lies in the halfplane
$\hpX{\Line}$ (by \defref{h:n}, we have that $\hpX{\Line}$ does not
contain $\origin$) if and only if $\polarX{\pa}$ intersects the
segment $\origin \polarX{\Line}$, see \figref{polarity:1} (left).  Set
$\polarX{\PS}=\Set{\polarX{\pa} }{\pa\in\PS}$ and
$\faceo := \faceoX{\polarX{\PS}} = \bigcap_{\pa\in \PS}
\hnX{\polarX{\pa}}$.  Then the polygon $\faceo$ is the polar of
$\CHX{\PS}$, namely:
\begin{compactenumI}[leftmargin=1.3cm]
    \medskip%
    \item If $\pa \in \PS$ is a vertex of $\CHX{\PS}$ then
    $\polarX{\pa}$ contains an edge of $\faceo$, see
    \figref{polarity:1} (right).

    \smallskip%
    \item The polar of line $\Line$ missing (resp. intersecting)
    $\CHX{\PS}$ is a point lying in (resp.\ out) $\faceo$.

    \smallskip%
    \item For a point $\pa \in \CHX{\PS}$,
    $\faceo \subset \hnX{\polarX{\pa}}$.
\end{compactenumI}%
\medskip%

Consider any direction $u\in\dirs$. Let $\pa_u$ be the extremal point
of $\PS$ in direction $u$, and let $\Line_u$ be the corresponding
supporting line, see \defref{supporting:line}.  The point
$\polarX{\Line_u}$ lies on the edge of $\faceo$ supported by
$\polarX{\pa_u}$, and $\polarX{\Line_u} / \normX{\polarX{\Line_u}}=u$.
Similarly, the polar of the shifted supporting line $\LineSY{u}{\eps}$
(see \defref{eps:supporting:line}), is the point
$\polarX{\LineSY{u}{\eps}}$ which lies outside $\faceo$ on the ray
induced by $u$ (starting at the origin).

\begin{figure}[ht]
    (A) \quad%
    \includegraphics[page=1,width=0.35\linewidth]{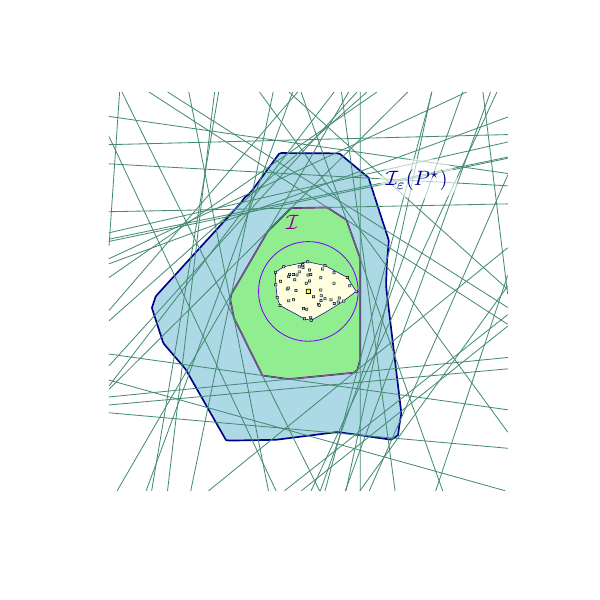}%
    \hfill%
    (B)%
    \quad%
    \includegraphics[page=2,width=0.35\linewidth]{figs/pos}\\
    \hfill%
    \phantom{}%
    \caption{(A) $\CHX{\PS}$, $\IS(\polarX{\PS})$, %
       $\IS_\eps(\polarX{\PS})$.  (B) $\eps$-kernel $C$ and its polar
       $\polarX{\CS}$;
       $\CHX{\PS}\subseteq\IS{\polarX{\PS}}
       \subseteq\IS_\eps(\polarX{\CS})$. %
    }
        \figlab{polar-ker}
\end{figure}

\RegVer{\paragraph*{Kernel and polarity.}}
\DCGVer{\subparagraph*{Kernel and polarity.}}

Returning to $\eps$-kernels, let $\ndiag$ be the refinement of the
normal diagram of $\CHX{\PS}$, see \defref{refinement}.  Recall that $\ndiag$ is centrally symmetric.
The supporting lines $\Line_u$ and $\Line_{-u}$ support the same pair of
vertices of $\CHX{\PS}$ for all directions $u$ lying inside an
interval of $\ndiag$.
For each interval $\gamma \in \ndiag$, let $-\gamma$ denote
its antipodal interval. For each interval $\gamma\in\ndiag$, let
$\pa_\gamma$ be the supporting vertex of $\CHX{\PS}$ for all
directions in $\gamma$.

Let
$\sPntY{\pa}{\gamma,\eps} =
(1-\eps/2)\pa_\gamma+(\eps/2)\pa_{-\gamma}$.  It can be verified that
the line $\Line_{v,\eps}$ for $v\in\gamma$
passes through $\sPntY{\pa}{\gamma,\eps}$. Therefore the polar of the
set of lines $\Set{\Line_{v,\eps}}{ v \in \gamma }$ is a segment
$e_\gamma$ that lies on the line $\polarX{(\sPntY{\pa}{\gamma,\eps})}$
and outside $\faceo$.  The sequence
$\langle e_\gamma \mid \gamma\in\Gamma\rangle$ forms the boundary of a
polygon $\IS_\eps (\polarX{\PS})$ that is star shaped with respect to
$\origin$ and that contains $\faceo$ in its interior.  See
\figref{polar-ker}.  Putting everything together, we obtain the
following lemma, which characterizes the $\eps$-kernel after polarity.

\begin{lemma}
    \lemlab{polar-kernel}%
    Let $\PS$ be a set of $n$ points in $\reals^2$ and $\eps\in(0,1)$
    a parameter. The star-shaped polygon $\IS_\eps(\polarX{\PS})$ can
    be computed in $O(n \log n)$ time.  Furthermore, a subset
    $\CS\subseteq\PS$ is an $\eps$-kernel of $\PS$ if and only if
    $\polarX{\CS}$ is a blocking set for $\IS_\eps(\polarX{\PS})$
    (see \figref{polar-ker}).
\end{lemma}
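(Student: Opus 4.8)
The plan is to establish the two assertions separately: first the $O(n\log n)$ construction of $\IS_\eps(\polarX{\PS})$, then the if-and-only-if equivalence between $\eps$-kernels of $\PS$ and blocking sets of $\IS_\eps(\polarX{\PS})$. For the construction, I would proceed as follows. Compute $\CHX{\PS}$ in $O(n\log n)$ time, and from it the refined normal diagram $\ndiag$ (by walking around the hull while maintaining the antipodal vertex, as in the standard rotating-calipers traversal), noting $|\ndiag|\le 2n$ by \defref{refinement}. For each interval $\gamma\in\ndiag$ I would identify $\pa_\gamma$ and $\pa_{-\gamma}$ (the supporting vertex and its antipodal vertex), form the point $\sPntY{\pa}{\gamma,\eps}=(1-\eps/2)\pa_\gamma+(\eps/2)\pa_{-\gamma}$, and take its polar line $\polarX{(\sPntY{\pa}{\gamma,\eps})}$. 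One must check that $\Line_{v,\eps}$ for $v\in\gamma$ indeed passes through $\sPntY{\pa}{\gamma,\eps}$: the shifted supporting line is the translate of $\Line_v$ toward $\origin$ by $(\eps/2)\pwY{v}{\PS}$, and since $\pwY{v}{\PS}=\DotProd{v}{\pa_\gamma}-\DotProd{v}{\pa_{-\gamma}}$ (the width is realized by the antipodal pair), the point $\sPntY{\pa}{\gamma,\eps}$ lies on $\Line_v$ shifted by exactly that amount — a short computation with inner products. Then the image of $\Set{\Line_{v,\eps}}{v\in\gamma}$ under polarity is the arc of directions $\gamma$ lifted onto the line $\polarX{(\sPntY{\pa}{\gamma,\eps})}$, giving a segment $e_\gamma$ on that line; concatenating the $e_\gamma$ in angular order around $\origin$ yields a closed polygonal chain with $O(n)$ vertices, and this is $\partial\IS_\eps(\polarX{\PS})$. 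Star-shapedness with respect to $\origin$ follows because each point of the chain is a positive multiple of its direction (the polar of a line not through $\origin$ lies on the ray of that line's normal direction), and $\faceo\subset\IS_\eps(\polarX{\PS})$ holds because each shifted line $\Line_{v,\eps}$ lies strictly between $\origin$ and $\Line_v$, so its polar point lies strictly outside $\faceo$ on the ray in direction $v$, by property (ii) of the polarity dictionary. All of this is $O(n\log n)$.

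For the equivalence, I would chain together facts already in the excerpt. By the lemma just before the ``Polarity'' paragraph, $\CS$ is an $\eps$-kernel of $\PS$ iff $\CS$ is a hitting set of $\HH_\eps=\Set{\hpSY{u}{\eps}}{u\in\dirs}$, i.e., for every $u$ there is a point $\pa\in\CS$ with $\pa\in\hpSY{u}{\eps}=\hpX{\LineSY{u}{\eps}}$. Now apply the polarity dictionary: $\pa\in\hpX{\Line}$ (with $\origin\notin\hpX{\Line}$) iff $\polarX{\pa}$ crosses the segment $\origin\polarX{\Line}$, from \figref{polarity:1} (left). Taking $\Line=\LineSY{u}{\eps}$, whose polar $\polarX{\LineSY{u}{\eps}}$ is the point of $\partial\IS_\eps(\polarX{\PS})$ on the ray in direction $u$ — call it $\IS_\eps(\polarX{\PS})(u)$ in the notation of \secref{arc-reduction} — we get: $\CS$ is an $\eps$-kernel iff for every $u\in\dirs$ some line $\polarX{\pa}$, $\pa\in\CS$, intersects the segment $\origin\,\IS_\eps(\polarX{\PS})(u)$. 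By the definition of ``blocks'' and ``blocking set'' in \secref{arc-reduction}, this says precisely that $\polarX{\CS}=\Set{\polarX{\pa}}{\pa\in\CS}$ is a blocking set for $\IS_\eps(\polarX{\PS})$. Since polarity is an involution on points and lines, the correspondence $\CS\leftrightarrow\polarX{\CS}$ is a bijection preserving cardinality, which completes the equivalence.

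The main obstacle is the bookkeeping in the construction step: verifying carefully that the polar images of the shifted supporting lines, as $v$ ranges over an interval $\gamma$, trace out a single segment $e_\gamma$ on the line $\polarX{(\sPntY{\pa}{\gamma,\eps})}$, that consecutive segments share endpoints (so the chain is connected and simple), and that the resulting region is exactly star-shaped about $\origin$ with $\faceo$ in its interior — in particular that no ``wrap-around'' or self-intersection occurs. This is where one uses that $\ndiag$ is centrally symmetric and that the width in direction $v$ equals $\DotProd{v}{\pa_\gamma-\pa_{-\gamma}}$ throughout $\gamma$; once that identity is in hand, the fact that $\sPntY{\pa}{\gamma,\eps}$ is the common point of all the shifted lines $\{\Line_{v,\eps}\}_{v\in\gamma}$ is immediate, and the rest is routine convexity. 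The equivalence half, by contrast, is essentially a dictionary lookup once the preliminary hitting-set lemma and the polarity facts are invoked.
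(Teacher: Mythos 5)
Your proposal is correct and follows essentially the same route as the paper, which establishes the lemma via the preceding discussion: the reduction of $\eps$-kernels to hitting the halfplanes $\HH_\eps$, the observation that all shifted supporting lines $\Line_{v,\eps}$ for $v\in\gamma$ pass through $\sPntY{\pa}{\gamma,\eps}$ (your inner-product verification is exactly the identity the paper leaves implicit), the resulting segments $e_\gamma$ assembling into the star-shaped polygon $\IS_\eps(\polarX{\PS})$ via the refined normal diagram, and the polarity dictionary translating ``$\pa$ hits $\hpX{\LineSY{u}{\eps}}$'' into ``$\polarX{\pa}$ blocks the direction $u$.'' The only difference is that you spell out the connectivity and bookkeeping of the chain of segments, which the paper treats as immediate.
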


Computing the smallest set $\CS \subseteq \PS$ thus reduces to the
star-polygon-cover problem. Using \lemref{opt-hs} and that there is an
$\eps$-kernel of size $O( \eps^{-1/2})$~\cite{ahv-aemp-04}, we obtain
the following:

\begin{theorem}
    \thmlab{opt-kernel}%
    Let $\PS$ be a set of $n$ points in $\reals^2$, and let
    $\eps \in (0,1)$ a parameter.  An optimal $\eps$-kernel of $\PS$
    of size $\opt$ can be computed in $O( \opt n \log n)$ time.  In
    the worst case, $\opt = O(\eps^{-1/2})$, and the running time is
    $O( \eps^{-1/2} n \log n)$.
\end{theorem}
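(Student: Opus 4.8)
The plan is to compose the polarity reduction of \lemref{polar-kernel} with the optimal star-polygon-cover algorithm of \lemref{opt-hs}. First I would apply \lemref{polar-kernel} to build, in $O(n\log n)$ time, the line set $\LS=\polarX{\PS}$ and the star-shaped polygon $\SP=\IS_\eps(\polarX{\PS})$; by that lemma a subset $\CS\subseteq\PS$ is an $\eps$-kernel of $\PS$ if and only if $\polarX{\CS}$ is a blocking set of $\LS$ for $\SP$. Since polarity is an involution on points and lines, this correspondence is a size-preserving bijection between $\eps$-kernels of $\PS$ and blocking sets of $\LS$ for $\SP$; in particular the minimum sizes agree. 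So it suffices to run the algorithm of \lemref{opt-hs} on $(\LS,\SP)$, obtain an optimal blocking set $\GS$, and return $\CS=\Set{\polarX{\Line}}{\Line\in\GS}$, which lies in $\PS$ because each $\Line\in\GS\subseteq\polarX{\PS}$ has $\polarX{\Line}\in\PS$.

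Before invoking \lemref{opt-hs} I would verify its hypotheses for $(\LS,\SP)$. The polygon $\SP=\IS_\eps(\polarX{\PS})$ is star-shaped with respect to the origin, has $O(n)$ vertices — its edges $e_\gamma$ are indexed by the intervals of the refined normal diagram $\ndiag$, and $\cardin{\ndiag}\le 2n$ — and contains $\faceoX{\LS}=\faceo$ in its interior; all three facts come from the construction preceding \lemref{polar-kernel}, and the assumption $\eps\in(0,1)$ is exactly what makes each $\sPntY{\pa}{\gamma,\eps}$ a genuine convex combination, so that $\SP$ is a well-defined simple polygon strictly enclosing $\faceo$. With the hypotheses in place, \lemref{opt-hs} returns, in $O(\opt\,n\log n)$ time, a blocking set of size exactly $\opt$, where $\opt$ is the size of the smallest blocking set of $\LS$ for $\SP$, which by the bijection above equals the minimum $\eps$-kernel size $\optX{\eps}$ of $\PS$.

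For the time bound, building $\LS$ and $\SP$ costs $O(n\log n)$ by \lemref{polar-kernel}, the call to \lemref{opt-hs} costs $O(\opt\,n\log n)$, and translating $\GS$ back to $\CS$ is $O(\opt)$ further polarity inversions, so the total is $O(\opt\,n\log n)$. For the worst-case statement, every point set in $\reals^2$ admits an $\eps$-kernel of size $O(\eps^{-1/2})$~\cite{ahv-aemp-04}, hence $\opt=\optX{\eps}=O(\eps^{-1/2})$ and the running time is $O(\eps^{-1/2}n\log n)$.

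I do not anticipate a real obstacle: this theorem is essentially the concatenation of \lemref{polar-kernel} and \lemref{opt-hs}. The step deserving the most care is the correspondence bookkeeping — checking that an \emph{optimal} blocking set of $\LS$ for $\SP$ pulls back under polarity to an \emph{optimal} $\eps$-kernel of $\PS$, rather than merely to some $\eps$-kernel — but this is immediate from the ``if and only if'' in \lemref{polar-kernel} once one observes that polarity restricted to $\PS$ is a bijection onto $\polarX{\PS}=\LS$. A secondary point is to confirm that $\SP$ being a simple $O(n)$-vertex polygon is enough for the ray-shooting structure underlying \lemref{opt-hs} to meet its stated bounds, which it is.
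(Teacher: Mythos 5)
Your proposal is correct and is essentially the paper's own argument: the theorem is obtained by composing \lemref{polar-kernel} (the $O(n\log n)$ polarity reduction to a blocking-set instance) with \lemref{opt-hs} (the $O(\opt\, n\log n)$ star-polygon-cover algorithm), and using the $O(\eps^{-1/2})$ bound from \cite{ahv-aemp-04} for the worst case. The bookkeeping you flag — that polarity gives a size-preserving bijection so optimality is preserved — matches the paper's (implicit) reasoning.
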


\subsection{Quadratic lower bound}
Below we show that there exists a set $\PS$ of points such that there
are quadratic number of intersections between $\polarX{\PS}$ and
$\IS_\eps(\polarX{\PS})$.  This suggest that our somewhat more
involved algorithm (that is using the greedy algorithm to prune the
set of arcs used) is necessary even in this case. It will be more
convenient to use the duality transform instead of polarity for
describing the lower-bound construction.

\RegVer{\paragraph*{Duality and $\eps$-kernel.}} %
\DCGVer{\subparagraph*{Duality and $\eps$-kernel.}} %
The duality transform provides a similar mapping to polarity.  The
\emphic{dual point}{dual!point} to the line $\Line \equiv y= ax + b$
is the point $\dualX{\Line} = (a, -b)$. Similarly, for a point
$\pa = (c,d)$ its \emphic{dual line}{dual!line} is
$\dualX{\pa} \equiv y = c x - d$. Namely, for $\pa = (a,b )$, the dual
line is
\begin{math}
    \dualX{\pa} \equiv y = a x - b,
\end{math}
and for a line
\begin{math}
    \Line \equiv
    y = c' x + d' %
\end{math}
the dual point is
\begin{math}
    \dualX{\Line} = (c', -d').
\end{math}
The following interpretation of kernels in the dual is standard, and
goes back to the original work of Agarwal \etal \cite{ahv-aemp-04}. As
such, we state the problem in these settings without proving the
equivalence.

For a set of lines
$\LS = \dualX{\PS} = \Set{\dualX{\pa}}{\pa \in \PS}$ in the plane
(i.e., $\LS$ is a set of affine functions from $\Re$ to $\Re$), let
\begin{equation*}
    \UY{\LS}{x} = \max_{f \in \LS} f(x)
    \qquad\text{and}\qquad%
    \LY{\LS}{x} = \min_{f \in \LS} f(x),
\end{equation*}
be the upper and lower envelopes of $\LS$, respectively.  The function
$\UX{x}$ is convex, while $\LX{x}$ is concave.  The \emphi{extent} of
$\LS$ is
\begin{equation*}
    \eY{\LS}{x} = \UY{\LS}{x} - \LY{\LS}{x}.
\end{equation*}
For a fixed $\eps \in (0,1)$, the \emphw{$\eps$-upper envelope} and
\emphw{$\eps$-lower envelope} are
\begin{align*}
  &\uY{\LS}{x}%
    =%
    \UY{\LS}{x} - \frac{\eps}{2} \eY{\LS}{x}%
    =%
    \Bigl(1-\frac{\eps}{2}\Bigr)\UY{\LS}{x} + \frac{\eps}{2}  \LY{\LS}{x}\\
  \text{and} \qquad
  &  \lY{\LS}{x}%
    =%
    \LY{\LS}{x} + \frac{\eps}{2} \eY{\LS}{x}%
    =%
    \frac{\eps}{2}\UY{\LS}{x} + \Bigl(1-\frac{\eps}{2}\Bigr)  \LY{\LS}{x},
\end{align*}
respectively. Unfortunately, these functions are not necessarily
convex, as demonstrated in \figref{counter:example}.

Computing an optimal $\eps$-kernel for $\PS$ is equivalent to
computing a set of lines $\LSA \subseteq \LS$, such that
$ \UY{\LSA}{x}$ lies above $\uY{\LS}{x}$ (and of course below
$\UY{\LS}{x}$), for all $x$. And similarly, $\LY{\LSA}{x}$ lies below
$\lY{\LS}{x}$, for all $x$.

\begin{figure}
    \centerline{%
       \includegraphics[page=4]%
       {figs/level_shrink}%
    }%
    \caption{Lower and upper envelopes, and their $\eps$-approximations.}
    \figlab{counter:example}
\end{figure}

\RegVer{\paragraph*{Lower-bound construction.}} %
\DCGVer{\subparagraph*{Lower-bound construction.}} %
Here we show that in the worst case the set
$\bigcup_{\Line \in \LS} (\CTY{ \Line}{ \PP})$ can have quadratic
size.
In particular, we construct a set of lines $\LS$, where the lines of
$\LS$ have quadratic number of intersections with $\uX{\cdot}$ and
$\lX{\cdot}$.

Consider the parabolas $f(x) = \frac{2}{\eps}(x^2 + 1)$ and
$g(x) = -\frac{1}{1-\eps / 2}(x^2 + 1)$.  Fix parameters $n$ and
$\eps$.  Let $\pa_i = \bigl(i/2n, f(i/2n)\bigr)$ and
$\pb_i = \bigl(i/2n,g(i/2n)\bigr)$, for $i=0, \ldots, 2n$. For a pair of distinct points $p, q\in\reals^2$, let $\Line(p,q)$ denote the line passing through $p$ and $q$. Let
\begin{align*}
  \DCGVer{&}
            \LS_f =
            \Set{ \Line( \pa_i, \pa_{i+2}) }{ i=0,2, 2n-2 }
            \DCGVer{\\}
  \qquad\text{and}\qquad%
  \DCGVer{&}
            \LS_g%
            =%
            \Set{ \Line( \pb_i, \pb_{i+2}) }{ i=1,3, 2n-3 }.
\end{align*}
The upper envelope of $\LS_f$ in the range $[0,1]$ is above $f(x)$,
except for touching it at the points $\pa_0, \pa_2, \ldots,
\pa_{2n}$. Similarly, the lower envelope of $\LS_g$, in the range
$I = [1/2n,1-1/2n]$ lies below $g$, except for touching it at the
points $\pb_1, \pb_3, \ldots, \pb_{2n-1}$.

It is easy to verify that the lines of $\LS_f$ and $\LS_g$ do not
intersect each other in the range $x \in [0,1]$. As such, the upper
envelope (resp. lower envelope) of $\LS = \LS_f \cup \LS_g$ in this
range is realized by the upper envelope (resp. lower envelope) of
$\LS_f$ (resp. $\LS_g$).

Consider a value $x \in \{ 1/2n, 3/2n, \ldots, (2n-1)/2n \}$. We have
that $\UY{\LS}{x} > f(x)$ and $\LY{\LS}{x}= g(x)$. As such, we have
\begin{align*}
  \uY{\LS}{x}%
  &=%
    \frac{\eps}{2}\UY{\LS}{x} + \Bigl(1-\frac{\eps}{2}\Bigr)  \LY{\LS}{x}
    \bm{>}%
    \frac{\eps}{2}f(x) + \Bigl(1-\frac{\eps}{2}\Bigr)  g(x)%
    \DCGVer{\\&}
  =%
  x^2 + 1 - (x^2 + 1)%
  =%
  0.
\end{align*}
Similarly, for $x \in \{ 2/2n, 4/2n, \ldots, (2n-2)/2n \}$, we have
$\UY{\LS}{x} = f(x)$ and $\LY{\LS}{x}< g(x)$. As such, we have
\begin{align*}
  \uY{\LS}{x}%
  &=%
    \frac{\eps}{2}\UY{\LS}{x} + \Bigl(1-\frac{\eps}{2}\Bigr)  \LY{\LS}{x}
    \bm{<}%
    \frac{\eps}{2}f(x) + \Bigl(1-\frac{\eps}{2}\Bigr)  g(x)%
    =%
    0.
\end{align*}

We thus obtain the following.

\begin{lemma}
    For any $\eps > 0$ and for any $n\ge 1$, there exists a set of
    $2n$ lines in $\reals^2$ whose $\eps$-upper envelope
    crosses the $x$-axis at least $2n-2$ times.
\end{lemma}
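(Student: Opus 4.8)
The plan is to turn the explicit construction set up just above into a short intermediate‑value‑theorem argument. Recall the two parabolas $f(x)=\frac{2}{\eps}(x^2+1)$ and $g(x)=-\frac{1}{1-\eps/2}(x^2+1)$; they were chosen precisely so that $\frac{\eps}{2}f(x)+\bigl(1-\frac{\eps}{2}\bigr)g(x)=(x^2+1)-(x^2+1)=0$ identically, i.e. the affine combination of envelopes defining $\uY{\LS}{\cdot}$ vanishes on the two parabolas. I would then take the sample points $\pa_i=(i/2n,f(i/2n))$ and $\pb_i=(i/2n,g(i/2n))$ for $i=0,\dots,2n$, let $\LS_f$ be the secants through consecutive even‑indexed $\pa$'s and $\LS_g$ the secants through consecutive odd‑indexed $\pb$'s, and set $\LS=\LS_f\cup\LS_g$, a family of at most $2n$ lines (pad with one far‑away irrelevant line if one insists on exactly $2n$).

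The first real step is to read off the envelopes of the two families. Since $f$ is strictly convex, each chord of its graph lies on or above the graph between its endpoints and strictly above at interior points, so on $[0,1]$ the upper envelope $\UY{\LS_f}{\cdot}$ coincides with $f$ exactly at the even sample abscissae $0,2/2n,\dots,1$ and is strictly larger in between; symmetrically, since $g$ is strictly concave, on $I=[1/2n,\,1-1/2n]$ the lower envelope $\LY{\LS_g}{\cdot}$ coincides with $g$ exactly at the odd sample abscissae $1/2n,3/2n,\dots,(2n-1)/2n$ and is strictly smaller in between. A short computation writing out a secant line through $\pa_i,\pa_{i+2}$ explicitly shows that every line of $\LS_f$ is strictly positive on $[0,1]$, and likewise every line of $\LS_g$ is strictly negative on $[0,1]$; consequently no line of one family crosses a line of the other over $[0,1]$, so on this range the upper envelope of $\LS$ equals $\UY{\LS_f}{\cdot}$ and the lower envelope of $\LS$ equals $\LY{\LS_g}{\cdot}$.

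Now the sign pattern of $\uY{\LS}{\cdot}=\frac{\eps}{2}\UY{\LS}{\cdot}+\bigl(1-\frac{\eps}{2}\bigr)\LY{\LS}{\cdot}$ follows from the identity above. At an odd sample abscissa $x$ we have $\UY{\LS}{x}>f(x)$ (strict, since $x$ lies strictly between two even abscissae) and $\LY{\LS}{x}=g(x)$, hence $\uY{\LS}{x}>\frac{\eps}{2}f(x)+\bigl(1-\frac{\eps}{2}\bigr)g(x)=0$. At an interior even sample abscissa $x=2j/2n$ with $1\le j\le n-1$ we have $\UY{\LS}{x}=f(x)$ and $\LY{\LS}{x}<g(x)$ (strict, since $x\in I$ lies strictly between two odd abscissae), hence $\uY{\LS}{x}<0$. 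Thus $\uY{\LS}{\cdot}$ — continuous, being a fixed convex combination of two piecewise‑linear envelopes — has alternating sign at the $2n-1$ consecutive abscissae $1/2n,2/2n,\dots,(2n-1)/2n$, so the intermediate value theorem yields a zero in each of the $2n-2$ gaps, and the $\eps$-upper envelope crosses the $x$-axis at least $2n-2$ times.

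The only points needing care are (i) the explicit secant estimate showing $\LS_f>0$ and $\LS_g<0$ on $[0,1]$, which is what lets the two envelopes decouple there, and (ii) the bookkeeping of strict versus non‑strict inequalities at the ends of the sampled range, so that $\UY{\LS}{x}>f(x)$ really is strict at every odd sample abscissa (including $x=1/2n$ and $x=(2n-1)/2n$) and $\LY{\LS}{x}<g(x)$ is strict at every interior even sample abscissa; everything else is just convexity/concavity of chords together with the single linear identity that makes the combined envelope vanish on the parabolas.
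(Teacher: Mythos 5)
Your proposal is correct and follows essentially the same route as the paper: the same pair of parabolas $f$ and $g$ chosen so that $\tfrac{\eps}{2}f+\bigl(1-\tfrac{\eps}{2}\bigr)g\equiv 0$, the same secant families $\LS_f,\LS_g$, the observation that the two families decouple on $[0,1]$ (one family positive, the other negative there), and the alternating-sign computation at the abscissae $1/2n,\dots,(2n-1)/2n$, with the intermediate value theorem supplying the $2n-2$ crossings. Your added care about strict versus non-strict touching and about padding the family (the construction as written has $2n-1$ lines) only makes explicit details the paper leaves implicit.
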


Next, we replicate the $x$-axis by sufficiently close (almost
parallel) $n$ lines that lie between the lower and upper envelopes of
$\LS$, and we add them to $\LS$. Then there are $\Omega(n^2)$
intersection points between $\uZ{\LS}$ and the lines of $\LS$.  We
thus get the following result.

\begin{lemma}
    \lemlab{lower-bound}%
    There exists a set $\LS$ of $n$ lines in $\reals^2$ such that the number of
    intersection points between $\IS_\eps(\LS)$ and $\LS$ is
    $\Omega(n^2)$.%
\end{lemma}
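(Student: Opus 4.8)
The plan is to take the construction that has already been built in this section—the two families of lines $\LS_f$ and $\LS_g$ supported by points sampled from the parabolas $f$ and $g$, for which the $\eps$-upper envelope $\uZ{\LS}$ crosses the $x$-axis at least $2n-2$ times (the previous lemma)—and then thicken the $x$-axis into a bundle of $\Theta(n)$ nearly-parallel lines that each hover strictly between the lower and upper envelopes of $\LS$. First I would fix the construction of the previous lemma using, say, $n/4$ points on each parabola, so that $\LS_f \cup \LS_g$ uses at most $n/2$ lines and the $\eps$-upper envelope makes $\Omega(n)$ sign changes in the interval $[0,1]$; at each sign change the graph of $\uZ{\LS}$ crosses the horizontal axis transversally, so there is a small vertical window around $y=0$ that $\uZ{\LS}$ sweeps through on each side of the crossing.

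Next I would add a family $\LS_0$ of $n/2$ lines of the form $y = \delta_j x + c_j$ with slopes $\delta_j$ tiny and intercepts $c_j$ chosen in a narrow band around $0$ — small enough that every line of $\LS_0$ lies strictly below $\UY{\LS}{x}$ and strictly above $\LY{\LS}{x}$ for all $x \in [0,1]$ (this is possible because $f$ and $g$ are bounded away from $0$ on $[0,1]$ once we rescale, so the extent envelope leaves a fixed-width corridor around the axis), and spread out enough that the band they occupy fully straddles the vertical windows identified above. Because $\uZ{\LS}$ (for the enlarged $\LS = \LS_f \cup \LS_g \cup \LS_0$) still agrees with the earlier $\uZ{\cdot}$ up to the negligible perturbation from adding nearly-horizontal lines sandwiched inside the extent — in particular it still makes $\Omega(n)$ oscillations crossing the corridor — each of the $\Omega(n)$ oscillating pieces of $\uZ{\LS}$ must cross each of the $\Omega(n)$ nearly-horizontal lines of $\LS_0$ at least once. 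That yields $\Omega(n^2)$ intersection points between $\uZ{\LS}$ and the lines of $\LS$; translating through the duality/polarity correspondence stated in the remark, these become $\Omega(n^2)$ intersection points between $\IS_\eps(\LS)$ (the star-shaped polygon $\IS_\eps(\polarX{\PS})$, i.e.\ its dual analogue) and the lines of $\LS$, giving the claimed bound and renaming $n$ appropriately.

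The main obstacle — the only place where care is genuinely required — is to make the two quantitative requirements on the band occupied by $\LS_0$ compatible: the band must be narrow enough to stay inside the extent corridor everywhere on $[0,1]$ (so that adding $\LS_0$ does not disturb the envelopes, and in particular does not destroy the oscillations of $\uZ{\LS}$), yet the vertical windows through which $\uZ{\LS}$ oscillates must be wide enough, and positioned consistently enough, that a single band of width comparable to those windows is simultaneously crossed by all the oscillating arcs. Concretely I would bound below the vertical amplitude of each oscillation of $\uZ{\LS}$ near each crossing: since consecutive sample points $\pa_i$ on $f$ are $O(1/n)$ apart and $f$ has curvature $\Theta(1/\eps)$, the chord lies $\Theta(1/(\eps n^2))$ above $f$ at the midpoints, and a symmetric estimate for $g$ shows $\uZ{\LS}$ swings by $\Theta(1/n^2)$ about $0$; choosing the band of $\LS_0$ to have total width $\Theta(1/n^2)$ (centered at $0$, with the $n/2$ lines essentially horizontal and their intercepts equally spaced in that band) makes it thin enough to fit well inside the $\Theta(1)$-height extent corridor while still being traversed by every oscillating arc. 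Once these estimates are pinned down, the counting of $\Omega(n^2)$ crossings and the dualization are routine, and the statement with the correct leading constant absorbed into the $\Omega(\cdot)$ follows.
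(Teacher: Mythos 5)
Your proposal is correct and follows essentially the same route as the paper: the paper also takes the parabola-chord construction of the preceding lemma and then ``replicates the $x$-axis'' by $\Theta(n)$ almost-parallel lines sandwiched between the envelopes, obtaining $\Omega(n^2)$ crossings with the $\eps$-upper envelope. Your quantitative bookkeeping (the $\Theta(1/n^2)$ oscillation amplitude and the matching band width) just makes explicit the ``sufficiently close'' choice that the paper leaves implicit.
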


\section{Optimal Weak Kernel}
\seclab{weak:opt:kernel}

The above results dealt with the stronger notion of a kernel, but the
original work of Agarwal~\etal~\cite{ahv-aemp-04} defined a weaker
notion of a kernel, see \defref{weak:kernel}. In this section, we
present an $O(n^2\log n)$-time algorithm for computing an optimal weak
$\eps$-kernel, by reducing it to computing a smallest arc cover, with
some additional properties, in a set of $O(n^2)$ \emph{unit arcs}
(i.e., arcs on the unit circle).

Let $\PS$ be a set of $n$ points in $\reals^2$ and $\eps \in (0,1)$ a
parameter.  We parametrize $\dirs$ with the orientation in the range
$[-\pi,\pi]$ (with the two endpoints of this interval being glued together),
and let $u(\theta)=(\cos\theta,\sin\theta)$.
Recall
that a subset $\CS\subseteq \PS$ is an weak $\eps$-kernel of $\PS$ if
\begin{equation}
    \eqlab{weak-ker}
    \pwY{u(\theta)}{\CS}\ge (1-\eps)\pwY{u(\theta)}{\PS}
\end{equation}
for all $\theta\in[-\pi,\pi]$. Since
$\pwY{u(\theta)}{\PS}=\pwY{u(-\theta)}{\PS}$, it suffices to satisfy
\Eqref{weak-ker} for the angular interval $[-\pi/2,\pi/2]$. However,
it will be convenient to work with the entire $\dirs$, so let
\begin{equation*}
    \eY{\PS}{\theta}
    =
    \pwY{u(\theta/2)}{\PS} \quad \text {for } \theta\in[-\pi,\pi].
\end{equation*}
A subset $\CS\subseteq \PS$ is a \emphi{weak $\eps$-kernel} if and only if
\[
    \forall \theta\in [-\pi,\pi]%
    \quad%
    \eY{\CS}{\theta} \geq (1-\eps)\eY{\PS}{\theta} .
\]
For a pair $1\leq i<j\leq n$ and $\theta\in[-\pi,\pi]$, we define
$\gamma_{ij} \in [-\pi,\pi] \rightarrow \reals_{\ge 0}$ as
\begin{equation*}
    \gamma_{ij}(\theta)
    :=
    \cardin{\DotProd{u(\theta/2)}{\pa_i-\pa_j}}
    =%
    \cardin{(a_i-a_j)\cos (\theta/2)+(b_i-b_j)\sin (\theta/2)},
\end{equation*}
where $p_i=(a_i,b_i)$. Set
$\Gamma=\{\gamma_{ij}\mid 1 \leq i < j \leq n\}$. It is easily seen that
$\UY{\Gamma}{\theta}=\eY{\PS}{\theta}$.  For a pair $1\leq i<j\leq n$,
we define %
\begin{equation*}
    I_{ij}
    =%
    \Set{ \theta \in [-\pi,\pi]}{\gamma_{ij}(\theta) \geq
       (1-\eps)\UZ{\Gamma}(\theta)}.
\end{equation*}

\begin{lemma}
    \lemlab{weak-interval}
    The set $I_{ij}$ is a single connected circular arc.
\end{lemma}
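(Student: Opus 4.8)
The plan is to fix a pair $i<j$ and understand the set $I_{ij}$ by comparing the single sinusoid $\gamma_{ij}(\theta) = |(a_i-a_j)\cos(\theta/2) + (b_i-b_j)\sin(\theta/2)|$ against $(1-\eps)\UZ{\Gamma}(\theta)$. The key observation is that on the half-circle where the absolute value does not force a sign change, $\gamma_{ij}(\theta) = c\cos(\theta/2 - \phi_{ij})$ for an appropriate amplitude $c = \dY{\pa_i}{\pa_j}$ and phase $\phi_{ij}$; in particular $\gamma_{ij}$ is a translate (in $\theta$) of a fixed cosine, and crucially $\gamma_{ij}$ is itself one of the functions whose upper envelope defines $\UZ{\Gamma} = \eY{\PS}{\cdot}$. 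So $\gamma_{ij}(\theta)\le \UZ{\Gamma}(\theta)$ always, and $I_{ij}$ is precisely the sublevel set where $\UZ{\Gamma}(\theta) \le \gamma_{ij}(\theta)/(1-\eps)$, i.e. where the (scaled) graph of $\gamma_{ij}$ catches up to the upper envelope.

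The main step is a convexity/unimodality argument. First I would reparametrize by $t = \theta/2 \in [-\pi/2,\pi/2]$ so that each $\gamma_{ij}$ becomes $|\DotProd{u(t)}{\pa_i - \pa_j}|$ and $\eY{\PS}{\theta}$ becomes $\pwY{u(t)}{\PS}$, the directional width — a well-behaved, piecewise-sinusoidal, even function of $t$. The inequality $\gamma_{ij}(\theta)\ge(1-\eps)\eY{\PS}{\theta}$ becomes $\DotProd{u(t)}{\pa_i - \pa_j}^2 \ge (1-\eps)^2 \pwY{u(t)}{\PS}^2$ away from the sign change (and the sign change itself happens at a single $t$, where $\gamma_{ij}=0$, so that point is never in $I_{ij}$ and does not disconnect it). The function $\pwY{u(t)}{\PS}$ is the support function difference and, restricted to each arc of the refined normal diagram $\ndiag$, equals $\DotProd{u(t)}{\pa_\gamma - \pa_{-\gamma}}$, a pure sinusoid; while $\gamma_{ij}$ is also a pure sinusoid (on the relevant half-circle). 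Comparing two sinusoids of the same frequency $\cos(t-\alpha)$ vs $\cos(t-\beta)$: their difference is again a sinusoid, hence the set where one dominates the other is a single interval on a half-period. I would argue that the set $\Set{t}{\gamma_{ij}(t)\ge (1-\eps)\pwY{u(t)}{\PS}}$ is an intersection over the arcs $\gamma\in\ndiag$ of such single-interval sets, but because $\pwY{u(t)}{\PS}$ is the pointwise max of these sinusoids, the inequality $\gamma_{ij}(t) \ge (1-\eps)\pwY{u(t)}{\PS}$ holds at $t$ iff it holds against every contributing sinusoid at $t$ — and the geometry forces this region to be connected: as $t$ moves away from the direction roughly aligned with $\pa_i - \pa_j$, $\gamma_{ij}$ decreases (on the half-circle) while the envelope can only stay comparably large, so once the inequality fails it keeps failing. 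Concretely, I would show $\gamma_{ij}(t)^2 - (1-\eps)^2\pwY{u(t)}{\PS}^2$, on each normal-diagram arc, is a (possibly degenerate) sinusoid of frequency $2$ in $t$ plus a constant, hence has at most one sign change from $+$ to $-$ per arc in the relevant range, and that these local intervals chain together into one arc because the endpoints match up at the arc boundaries.

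The expected main obstacle is handling the absolute value / sign change in $\gamma_{ij}$ and making the "chaining across normal-diagram arcs" rigorous rather than hand-wavy: one must check that the right endpoint of $I_{ij}$ in one arc of $\ndiag$ coincides with or precedes the left endpoint in the next, so the union does not split into two arcs. I would resolve this by working with the squared inequality $\DotProd{u(t)}{\pa_i-\pa_j}^2 \ge (1-\eps)^2 \pwY{u(t)}{\PS}^2$ globally (legitimate since both sides are nonnegative), noting the left side $\DotProd{u(t)}{\pa_i-\pa_j}^2 = \tfrac{1}{2}\dY{\pa_i}{\pa_j}^2(1+\cos(2t-2\phi_{ij}))$ is a single global sinusoid in $2t$ with no sign issue, and that $\pwY{u(t)}{\PS}^2$ is, on each normal-diagram arc, also such a sinusoid; the difference being a sinusoid in $2t$ (over an interval of $2t$-length at most $2\pi$) has at most one maximal subinterval where it is nonnegative, and a short argument using that $\pwY{u(t)}{\PS}$ is the upper envelope (so the sublevel set of the ratio is an intersection of intervals that all contain the "best" direction for the pair, hence a single interval) closes the gap. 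Finally I would translate back from $t$ to $\theta$, which is just a linear rescaling and therefore preserves connectedness.
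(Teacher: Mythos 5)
Your reduction to comparing sinusoids is fine, and your observation that each pairwise region $A_{kl}=\{\theta:\gamma_{ij}(\theta)\ge(1-\eps)\gamma_{kl}(\theta)\}$ is a single circular arc (after squaring, the difference is a constant plus a single harmonic in $\theta$) is correct. But the entire content of the lemma is that the intersection $I_{ij}=\bigcap_{k<l}A_{kl}$ of these arcs is again a single arc, and neither of your two arguments for this step holds up. The ``short argument'' -- that every $A_{kl}$ contains the best direction for the pair $(i,j)$, hence the intersection is one interval -- fails on both counts: if $\pa_k-\pa_l$ is much longer than and nearly parallel to $\pa_i-\pa_j$ (say difference vectors $(1,0)$ and $(10,0)$), then at the direction maximizing $\gamma_{ij}$ one has $\gamma_{ij}<(1-\eps)\gamma_{kl}$, so $A_{kl}$ misses that direction entirely; and even if all the arcs did share a common point, an intersection of circular arcs through a common point need not be connected (two arcs, each the circle minus a small gap, with disjoint gaps, meet in two components) -- that Helly-type step requires arcs shorter than a semicircle or a common cut point outside all arcs, neither of which you have.

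The per-normal-diagram-arc version has the same hole. On one interval of $\ndiag$ the quantity $\gamma_{ij}(\theta)^2-(1-\eps)^2\pwY{u(\theta/2)}{\PS}^2$ is a constant plus one harmonic, so its sign pattern along that interval can be $+\,-\,+$ (this is possible precisely when the direction orthogonal to $\pa_i-\pa_j$ falls inside the interval), giving two local components; ``at most one sign change from $+$ to $-$'' does not exclude this. And even when each interval contributes a single subinterval, the claim that ``these local intervals chain together into one arc because the endpoints match up at the arc boundaries'' is exactly the statement to be proved: nothing you wrote rules out the inequality holding on two non-adjacent intervals of $\ndiag$ and failing in between. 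The paper closes precisely this gap by the substitution $x=\tan(\theta/2)$, under which $\gamma_{ij}$ becomes the cone $\tgamma_{ij}(x)=\alpha_{ij}\lvert x-\beta_{ij}\rvert$ with apex on the $x$-axis: the set $\tI_{ij}$ has at most one component on each of the two rays of the cone (a linear function versus a convex envelope), and either no scaled cone $(1-\eps)\tgamma_{uv}$ is steeper than $\tgamma_{ij}$, in which case the surviving components are unbounded and glue at $\theta=\pi$, or some steeper scaled cone lies above one entire ray of $\tgamma_{ij}$, killing that side's component altogether. If you want to avoid the tangent substitution you would need a genuinely new ingredient playing this role (for instance, monotonicity of $\cos(t-\phi_p)/\cos(t-\phi_q)$ between consecutive poles, followed by a case analysis mirroring the paper's); as written, the connectedness step is missing.
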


\begin{proof}
    It is convenient to reparameterize $\gamma_{ij}$. More precisely,
    we define the function
    $\tgamma_{ij}: \reals \rightarrow \reals_{\ge 0}$ as
    \begin{equation}
	\tgamma_{ij}(x) =
        \abs{(a_i-a_j) + (b_i-b_j)x}
        \quad \mbox{for $x\in\reals$.}
    \end{equation}
    Set
    \begin{equation*}
        \TG = \Set{ \tgamma_{ij} }{ 1 \leq i < j \leq n}.
    \end{equation*}
    The analog of $I_{ij}$ is the set
    \[
        \tI_{ij}
        =%
        \Set{x\in\reals}{\tgamma_{ij}(x) \geq
           (1-\eps)\UY{\TG}{x}}.\Bigr.
    \]
    Note that $\cos (\theta/2) \geq 0$ for $\theta \in [-\pi,\pi]$,
    and as such
    \begin{align*}
      &
        \frac{1}{\sqrt{1+\tan^2
        \frac{\theta}{2}}}\tgamma_{ij}\bigl(\tan \tfrac{\theta}{2}\bigr)
        =%
        \cos \tfrac{\theta}{2}
        \tgamma_{ij}\bigl(\tan \tfrac{\theta}{2}\bigr)
        =%
        \vabs{(a_i-a_j) + (b_i-b_j)
        \tan \tfrac{\theta}{2}}
        \cos {\tfrac{\theta}{2}}
      \\&\qquad%
      =%
      \vabs{(a_i-a_j)\cos {\tfrac{\theta}{2}} + (b_i-b_j)
      \sin \tfrac{\theta}{2}}
      = %
      \gamma_{ij}(\theta).
    \end{align*}
    Observe that
    \begin{align*}
      \tau = \tan (\theta/2)\in \tI_{ij}%
      &\iff%
        \tgamma_{ij}(\tau) \geq
        (1-\eps)\UY{\TG}{\tau}
      \\&%
      \iff%
      \tgamma_{ij}(\tau) \geq
      (1-\eps)\max_{\tgamma \in \TG} \tgamma(\tau)%
      \\&%
      \iff%
      \frac{1}{\sqrt{1+\tau^2 }}
      \tgamma_{ij}(\tau) \geq
      (1-\eps)\max_{\tgamma \in \TG} \frac{1}{\sqrt{1+\tau^2
      }} \tgamma(\tau)%
      \\&%
      \iff%
      \gamma_{ij}(\theta)
      \geq
      (1-\eps)\max_{\gamma \in \Gamma}  \gamma(\theta)%
      \\&%
      \iff%
      \gamma_{ij}(\theta)
      \geq
      (1-\eps)\UZ{\Gamma}(\theta)
      \\&
      \iff%
      \theta\in I_{ij}.
    \end{align*}

    The graph of $\tgamma_{ij}$ is a cone with axis of symmetry around
    the $y$-axis and apex on the $x$-axis -- specifically, there are
    two numbers $\alpha_{ij}, \beta_{ij}$ such that
    $\tgamma_{ij}(x) = \alpha_{ij}\cardin{ x- \beta_{ij}}$. The number
    $\alpha_{ij}$ is the \emphi{slope} of $\tgamma_{ij}$. The function
    $(1-\eps)\UZ{\TG}$ is a convex chain, which is the upper envelope
    of the functions $(1-\eps)\tgamma_{ij}$, see
    \figref{weak-interval} (A).

    \begin{figure}[h]
        \includegraphics[page=1,width=0.45\linewidth]{\si{figs/weak_ker}}
        \hfill
        \includegraphics[page=2,width=0.45\linewidth]{\si{figs/weak_ker}}%
        \\[-0.4cm]
        \phantom{}\hfill(A)\hfill\hfill(B)\hfill\phantom{}
       \caption{Illustration of the proof of
          \lemref{weak-interval}. (A) Upper envelope $\UXX{\TG}$,
          and lower-bound curve $(1-\eps)\UXX{\TG}$. (B) A cone
          with higher slope ``buries'' at least one leg of the other
          cone.}
       \figlab{weak-interval}
   \end{figure}

   The graph of $\tgamma_{ij}$ is composed of two rays.  The set
   $\tI_{ij}$ is the (projection of the) intersection of the graph of
   $\tgamma_{ij}$ with a convex region. Thus $\tI_{ij}$ is potentially
   the union of two intervals (potentially infinite rays).  If
   $\tI_{ij}$ does not contain any finite interval, i.e., consists of
   two rays, then $I_{ij}$ is a single arc containing the orientation
   $\pi$. So assume that $\tI_{ij}$ contains a finite interval, see
   \figref{weak-interval} (B).  This implies that there are indices
   $u,v$, such that $(1-\eps) \tgamma_{uv}$ has higher slope than
   $\tgamma_{ij}$. But then $(1-\eps)\tgamma_{uv}$ is completely above
   one of the two rays forming the image of $\tgamma_{ij}$, implying
   that $\tI_{ij}$ can only be a single interval in this case. This in
   turn implies that $I_{ij}$ consists of a single arc.
\end{proof}

\RegVer{\paragraph{Computing the set $\IS$.}}
\DCGVer{\subparagraph*{Computing the set $\IS$.}} %
As a reminder $\IS = \Set{I_{ij}}{1 \leq i < j \leq n}$. To compute
it, we work in the line space, first computing the function
$\UY{\TG}{\cdot}$. This can be done in $O(n \log n)$ time by computing
the upper and lower envelopes of the lines of $\dualX{\PS}$, and then
merging them to get the extent function, which is
$\UY{\TG}{\cdot}$. We then shrink it down to get the function
$f(x) = (1-\eps)\UY{\TG}{x}$. Given a function
$\tgamma_{i,j} \in \TG$, to compute $\tI_{ij}$, we need to compute the
intersection of its two rays with the graph $f(x)$, which is a convex
polygonal line of complexity $O(n)$. This can be done using standard
techniques in $O( \log n)$ time per query. Mapping $\tI_{ij}$ to the
angular space results in $I_{i,j}$. Doing this for all $i,j$, results
in the set $\IS$. The overall running time is $O(n^2 \log n)$.

\RegVer{\paragraph{A $2$-approximation algorithm.}}
\DCGVer{\subparagraph*{A $2$-approximation algorithm.}} %
 Using the algorithm of
\thmref{arc-cover}, we compute, in $O(n^2\log n)$ time, a minimum arc
cover $\JS \subseteq \IS$.  Each interval $I_{ij}\in \JS$ corresponds
to two points $p_i, p_j$ of $\PS$. Set
$\CS := \{p_i,p_j \mid I_{ij}\in\JS \}$.
\begin{lemma}
    \lemlab{weak-cover}%
    $\CS$ is an weak $\eps$-kernel of size at most twice the optimal
    size.
\end{lemma}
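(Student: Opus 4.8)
The plan is to establish two things: first, that $\CS$ as constructed is a weak $\eps$-kernel, and second, that $|\CS| \le 2\cdot\optWX{\eps}$. For the first part, I would argue directly from the definition. We need $\eY{\CS}{\theta} \ge (1-\eps)\eY{\PS}{\theta}$ for every $\theta \in [-\pi,\pi]$. Fix such a $\theta$. Since $\JS$ is an arc cover of $\dirs$, there is some $I_{ij} \in \JS$ with $\theta \in I_{ij}$, which by definition of $I_{ij}$ means $\gamma_{ij}(\theta) \ge (1-\eps)\UZ{\Gamma}(\theta) = (1-\eps)\eY{\PS}{\theta}$. But $p_i, p_j \in \CS$, so $\eY{\CS}{\theta} = \pwY{u(\theta/2)}{\CS} \ge |\DotProd{u(\theta/2)}{\pa_i - \pa_j}| = \gamma_{ij}(\theta)$, since the directional width of a set is at least the length of the projection of any pair of its points. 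Chaining these inequalities gives $\eY{\CS}{\theta} \ge (1-\eps)\eY{\PS}{\theta}$, as desired. This shows $\CS$ is a weak $\eps$-kernel.

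For the size bound, I would go the other direction: take an optimal weak $\eps$-kernel $\CS^\ast \subseteq \PS$ with $|\CS^\ast| = \optWX{\eps}$, and exhibit an arc cover of $\IS$ whose size is at most $|\CS^\ast|$, so that the minimum arc cover $\JS$ satisfies $|\JS| \le |\CS^\ast|$, whence $|\CS| \le 2|\JS| \le 2\optWX{\eps}$. The natural candidate cover is $\Set{I_{ij}}{p_i, p_j \in \CS^\ast}$ — but this has up to $\binom{|\CS^\ast|}{2}$ arcs, far too many. The fix is: for each $\theta$, since $\CS^\ast$ is a weak $\eps$-kernel, the antipodal pair of $\CS^\ast$ realizing the directional width in direction $u(\theta/2)$, say $p_i, p_j$, satisfies $\gamma_{ij}(\theta) = \pwY{u(\theta/2)}{\CS^\ast} \ge (1-\eps)\eY{\PS}{\theta}$, i.e., $\theta \in I_{ij}$. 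So the arcs $\Set{I_{ij}}{p_i, p_j \in \CS^\ast}$ do cover $\dirs$. Now I want a subcover using few arcs. The key observation is that the set of pairs from $\CS^\ast$ that are antipodal (i.e., realize the width in some direction) forms the refined normal diagram of $\CHX{\CS^\ast}$, which has at most $2|\CS^\ast|$ intervals; hence at most $2|\CS^\ast|$ of the arcs $I_{ij}$ with $p_i,p_j\in\CS^\ast$ suffice to cover $\dirs$. This would give $|\JS| \le 2|\CS^\ast|$ and hence $|\CS| \le 2|\JS|$, which is off by a factor of two from what we want.

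To recover the clean factor of $2$ rather than $4$, I would instead count more carefully: each arc $I_{ij}$ in the cover contributes two points to $\CS$, but consecutive antipodal pairs in the normal diagram of $\CHX{\CS^\ast}$ share a point, so a cover by $k$ such arcs uses at most... actually, the cleanest route is: take $\JS$ to be the minimum arc cover over all of $\IS$ (not just arcs from $\CS^\ast$), so $|\JS| \le$ (size of any cover). Since the normal-diagram arcs of $\CHX{\CS^\ast}$ form a cover of size at most $|\CS^\ast|$ (the number of edges of $\CHX{\CS^\ast}$ equals the number of vertices, and each edge-normal-interval gives one antipodal arc — one should check the refinement only doubles this, but a cover can be chosen from the $|\CS^\ast|$ edge arcs directly), we get $|\JS| \le |\CS^\ast| = \optWX{\eps}$, and therefore $|\CS| \le 2|\JS| \le 2\optWX{\eps}$.

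The main obstacle I anticipate is precisely this bookkeeping in the second part: pinning down that there is an arc cover of $\dirs$ drawn from $\IS$ of size at most $\optWX{\eps}$ (not $2\optWX{\eps}$), which requires relating antipodal pairs of $\CHX{\CS^\ast}$ to edges rather than to the finer refinement, and verifying that for each such pair the containing $\theta$-arc lies in $\IS$. Everything else — the direction-by-direction inequality in the first part, and the final "multiply by two" — is routine. I would expect the authors' proof to replace my normal-diagram argument with a slicker direct charging of the optimal weak kernel's points against arcs in the cover.
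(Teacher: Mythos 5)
The first half of your argument (that $\CS$ is a weak $\eps$-kernel) is exactly the paper's, and is fine. The second half also follows the paper's overall strategy -- build an arc cover out of the optimal weak kernel $\CS^*$ and bound $|\JS|$ by its size -- but the one step that actually produces the factor $2$ is missing. What you prove cleanly is only that the arcs coming from $\CS^*$ admit a cover of size at most $2|\CS^*|$ (one arc per interval of the refined normal diagram $\ndiag(\CS^*)$), which yields $|\CS|\le 4\,\optWX{\eps}$. Your attempted repair, ``a cover can be chosen from the $|\CS^*|$ edge arcs directly,'' is both unproved and aimed at the wrong objects: the arcs $I_{ij}$ that are guaranteed to satisfy $\gamma_{ij}(\theta)\ge(1-\eps)\UY{\Gamma}{\theta}$ are those indexed by \emph{antipodal (width-realizing) pairs} of $\CS^*$, not by hull edges; and one arc per \emph{vertex} normal interval does not work either, because within a single vertex's normal interval the opposite supporting vertex changes, so no single fixed pair $(i,j)$ need dominate $(1-\eps)\UZ{\Gamma}$ throughout that interval.

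The correct counting, which is how the paper closes the gap, uses two facts you did not exploit: (i) $\ndiag(\CS^*)$ is centrally symmetric, so its (at most) $2|\CS^*|$ intervals come in antipodal pairs, and both intervals of a pair are associated with the \emph{same} antipodal point pair $p_i,p_j$; and (ii) the parametrization $\theta\mapsto u(\theta/2)$, $\theta\in[-\pi,\pi]$, identifies each antipodal pair of directions with a single value of $\theta$, so each antipodal pair of intervals of $\ndiag(\CS^*)$ pulls back to a single arc of the $\theta$-circle. For $\theta$ in that arc the supporting pair satisfies $\gamma_{ij}(\theta)=\eY{\CS^*}{\theta}\ge(1-\eps)\UY{\Gamma}{\theta}$, i.e.\ $\theta\in I_{ij}$. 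Hence one arc per antipodal point pair suffices, giving a cover $\JS^*$ with $|\JS^*|=|\CS^*|$, whence $|\JS|\le\optWX{\eps}$ and $|\CS|\le 2|\JS|\le 2\,\optWX{\eps}$. Without this symmetry/parametrization argument your write-up only supports the weaker $4\,\optWX{\eps}$ bound.
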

\begin{proof}
    Since $\JS$ is an arc cover, for any $\theta\in[-\pi,\pi]$, there
    is pair $\pa_i,\pa_j\in\CS$ such that
    $\gamma_{ij}(\theta) \ge (1-\eps)\UY{\Gamma}{\theta}$. Therefore
    $\eY{\CS}{\theta} \ge (1-\eps)\UY{\Gamma}{\theta}
    =(1-\eps)\eY{\PS}{\theta}$, implying that $C$ is a weak
    $\eps$-kernel.

    Conversely, let $\CS^*$ be an optimal weak $\eps$-kernel. We
    construct an arc cover $\JS^*$ as follows.  The points in $\CS^*$
    are in convex position.  Consider $\ndiag = \ndiag(\CS^*)$ the
    refined normal diagram of $\CS$, which is a centrally symmetric
    partition of $\dirs$ into $2\cardin{\CS^*}$ intervals such that
    each pair of antipodal intervals of is associated with an
    antipodal pair of points $p_i, p_j\in\CS^*$. For each such pair
    $p_i,p_j$, we add the interval $I_{ij}$ to $\JS^*$;
    $\cardin{\JS^*} = \cardin{\CS^*}$.  For $\theta\in [-\pi,\pi]$,
    suppose $p_i,p_j$ is the supporting pair in directions
    $u(\theta/2)$ and $-u(\theta/2)$, respectively.  Then
    $\gamma_{ij}(\theta) = \eY{\CS^*}{\theta}$.  Since
    $\eY{\CS^*}{\theta} \ge (1-\eps)\UY{\Gamma}{\theta}$,
    $\theta\in I_{ij} \in \JS^*$.  Hence, $\JS^*$ is an arc cover.

    We can thus conclude that $\cardin{\CS} \leq 2\cardin{\CS^*}$.
\end{proof}

\RegVer{\paragraph{An exact algorithm.}}
\DCGVer{\subparagraph*{An exact algorithm.}}

The above algorithm is a $2$-approximation because it uses two
potentially new points for each interval. We can change the arc-cover
problem to account for this. We label every arc in $\IS$ by two
indices $i,j \in \IRX{n}$ -- indices of the pair of points in $\PS$
that define it.  An arc cover $\JS\subset\IS$ of $\dirs$ is
\emphi{admissible} if every pair of intersecting arcs in $\JS$ share
exactly one label. For any admissible arc cover $\JS$, the size of the
set $\Set{p_i,p_j}{I_{ij}\in\JS}$ is at most
$\cardin{\JS}$. Furthermore, the arc cover constructed from a weak
kernel in the proof of \lemref{weak-cover} is admissible. Therefore it
suffices to compute a minimum-size admissible arc cover in $\IS$.

To compute the smallest admissible arc cover, we follow the ideas in
the algorithm of for the arc-cover \secref{c:arc:cover}. While
$\cardin{\IS} = O(n^2)$, there must be a direction $u \in \dirs$ that
is covered by at most $O( n^2/\opt)$ intervals of $\IS$, where $\opt$
is the size of the optimal weak $\eps$-kernel. Let $\JS \subseteq \IS$
be the set of intervals covering $u$ ($u$ and $\JS$ can be computed in
$O(n^2 \log n)$ time). For each one of these intervals, we now perform
the greedy algorithm, as in \secref{c:arc:cover}. The only difference
is that instead of having a global data structure for all intervals,
we break them into $n$ groups. Specifically, for $i=1,\ldots, n$, let
$\IS_i\subset \IS$ be the set of all arcs $I$ with $i$ being one of
the two indices in its label.  Now, we build the necessary
data-structure used in \secref{c:arc:cover} for each such group. Now,
if the current interval is $I_{ij}$, the algorithm uses the
data-structures for $\IS_i$ and $\IS_j$ to generate two candidate
intervals to be used by the greedy algorithm. The algorithm uses the
one that extends further clockwise. The rest of the algorithm is the
same as in \secref{c:arc:cover}. This algorithm computes the smallest
admissible circular arc cover $\JS^*$. We return the set
$\Set{p_i,p_j}{I_{ij}\in\JS^*}$, which in view of the above discussion
is an optimal weak $\eps$-kernel.  Putting everything together we
obtain the following:

\begin{theorem}
    \thmlab{weak:kernel}%
    Given a set $\PS$ of $n$ points in the plane and a parameter
    $\eps \in (0,1)$, an optimal weak $\eps$-kernel
    of $\PS$ can be computed in $O(n^2\log n)$ time.
\end{theorem}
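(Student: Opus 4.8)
The plan is to reduce the optimal weak $\eps$-kernel problem to a \emph{constrained} circular arc cover problem and then adapt the greedy-with-restarts machinery behind \thmref{arc-cover} (and \apndref{c:arc:cover}) to the constrained setting. First I would set up the reduction: parametrize $\dirs$ by orientation $\theta\in[-\pi,\pi]$ and, using $\pwY{u(\theta)}{\PS}=\pwY{u(-\theta)}{\PS}$, work with the folded extent $\eY{\PS}{\theta}=\pwY{u(\theta/2)}{\PS}$; for each pair $i<j$ put $\gamma_{ij}(\theta)=|\DotProd{u(\theta/2)}{\pa_i-\pa_j}|$, so that $\UY{\Gamma}{\theta}=\eY{\PS}{\theta}$ for $\Gamma=\{\gamma_{ij}\}$, and let $I_{ij}=\Set{\theta}{\gamma_{ij}(\theta)\ge(1-\eps)\UY{\Gamma}{\theta}}$. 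Then a subset $\CS\subseteq\PS$ is a weak $\eps$-kernel iff $\Set{I_{ij}}{\pa_i,\pa_j\in\CS}$ covers $\dirs$. The first technical step is \lemref{weak-interval}: each $I_{ij}$ is a single arc. I would prove it by reparametrizing $x=\tan(\theta/2)$, so $\gamma_{ij}$ becomes a cone $\tgamma_{ij}(x)=\alpha_{ij}|x-\beta_{ij}|$ and $(1-\eps)\UXX{\TG}$ a convex chain; if the set $\tI_{ij}$ contained a bounded interval, some cone of larger slope would lie entirely above one leg of $\tgamma_{ij}$, forcing $\tI_{ij}$ to be a single interval, hence $I_{ij}$ a single arc.

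Next I would note that the plain arc-cover algorithm on $\IS=\{I_{ij}\}$ (with $O(n^2)$ arcs, so $O(n^2\log n)$ time) gives only a $2$-approximation, since one charges two points per chosen arc; the tightness of this bound follows from taking an optimal weak kernel $\CS^*$, using the centrally symmetric refined normal diagram $\ndiag(\CS^*)$ to select, for each antipodal pair of directions, the arc $I_{ij}$ of the supporting antipodal pair, and verifying that the resulting family is a cover of size $|\CS^*|$ (this is \lemref{weak-cover}). To get an exact bound I would introduce \emph{admissible} arc covers: label $I_{ij}$ by $\{i,j\}$ and require that every two intersecting arcs of the cover share a label. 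Then $\Set{p_i,p_j}{I_{ij}\in\JS}$ has size at most $|\JS|$, while the cover built from $\CS^*$ above is admissible, so the minimum admissible cover has size exactly $\opt$, the optimal weak-kernel size; it therefore suffices to compute a minimum-size admissible arc cover of $\dirs$ in $\IS$.

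Finally I would adapt the greedy-plus-restart algorithm of \apndref{c:arc:cover}. The least-covered-direction argument still applies ($\IS$ covers $\dirs$ since $\PS$ is a weak $\eps$-kernel of itself), so there is a direction $u$ covered by $O(n^2/\opt)$ arcs, computable in $O(n^2\log n)$ time; we try each such arc as the start of a greedy pass. The new wrinkle is that the greedy ``extend as far clockwise as possible'' step must respect admissibility: from an arc labeled $\{i,j\}$ the next arc must contain index $i$ or index $j$. I would therefore build, for each index $i$, a separate copy of the appendix data structure over $\IS_i=\{I\in\IS\mid i\text{ is a label of }I\}$; each arc lies in exactly two groups, so all preprocessing costs $O(n^2\log n)$. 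At an arc $I_{ij}$ the algorithm queries the structures for $\IS_i$ and $\IS_j$, producing two candidate extensions, and keeps the one reaching further. Each greedy pass uses $O(\opt)$ queries, and we run $O(n^2/\opt)$ passes, for a total of $O(n^2\log n)$.

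The main obstacle — the step needing the most care — is exactly this last point: confirming that the correctness argument for greedy-with-restarts survives the admissibility restriction, i.e.\ that some starting arc among those covering the lightly-covered direction lies in an optimal admissible cover, and that greedy launched from it is never forced into a non-admissible step. The antipodal structure of $\ndiag(\CS^*)$ — consecutive intervals share a pivot point, so consecutive optimal arcs share a label — is what should make this go through, but verifying it rigorously (and that the ``further clockwise'' tie-breaking preserves optimality in the labeled setting) is where the real work lies; the rest is a routine repackaging of \thmref{arc-cover}, \lemref{weak-interval}, and \lemref{weak-cover}.
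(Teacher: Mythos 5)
Your proposal follows essentially the same route as the paper: the same reduction to the arcs $I_{ij}$ via the folded extent, the same proof sketch of \lemref{weak-interval} through the $\tan(\theta/2)$ reparametrization, the same $2$-approximation via \lemref{weak-cover}, the same notion of admissible arc covers, and the same per-index data structures $\IS_i$ with two candidate extensions per greedy step and restarts from a lightly covered direction. The delicate point you flag---that the greedy-with-restarts correctness argument survives the admissibility constraint---is treated no more rigorously in the paper itself, which simply asserts it, so your proposal matches the paper both in approach and in level of detail.
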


\section{Core of a point set}
\seclab{core}

We now define the core of a point set and show how it can be used to construct an $\eps$-kernel of small size.

\subsection{Definition and properties}
\seclab{core:def}

For a direction $v \in\dirs$, we define $\slabY{v}{\PS}$ to be the
minimum slab bounded by two lines that are orthogonal to $v$, and that
contains $\PS$, i.e., $\slabY{v}{\PS}= \hnDirX{v} \cap
\hnDirX{-v}$. Equivalently,
\begin{equation*}
    \slabY{v}{\PS}%
    =%
    \Set{\pa \in \Re^2}{\! \DotProd{\pa}{v} \in \pwIY{v}{\PS} }.
\end{equation*}
The shrunken copy of $\slabY{v}{\PS}$ is
\begin{equation*}
    (1-\eps)\slabY{v}{\PS}%
    =%
    \Set{\pa \in \Re^2}{\! \DotProd{\pa}{v} \in (1-\eps)\pwIY{v}{\PS} }.
\end{equation*}

For a parameter $\eps \in (0,1)$, the \emphi{$\eps$-core} of $\PS$ is
the set
\begin{equation*}
    \coreY{\PS}{\eps}%
    =%
    \bigcap_{v \in \dirs} (1-\eps)\slabY{v}{\PS}.
\end{equation*}
See \figref{core} and \figref{core:2} for examples.

\begin{figure}
    \phantom{} \hfill%
    \includegraphics[page=1,width=0.3\linewidth]{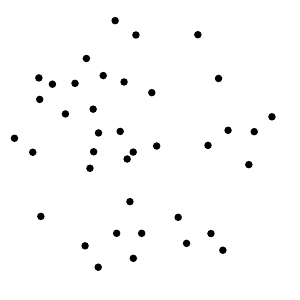}
    \hfill%
    \includegraphics[page=2,width=0.3\linewidth]{figs/eps_core}
    \hfill%
    \includegraphics[page=3,width=0.3\linewidth]{figs/eps_core}
    \hfill%
    \phantom{}
    \caption{A point set, a shrunken slab, and the resulting
       $0.2$-core.}
    \figlab{core}
\end{figure}

\begin{figure}
    \phantom{} \hfill%
    \includegraphics[page=1]{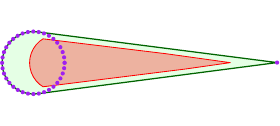}%
    \hfill%
    \includegraphics[page=2]{figs/core_2}%
    \hfill%
    \phantom{}
    \caption{Left: A point set and its $\eps$-core, for
       $\eps=0.2$. Right: The $\eps$-core is strictly contained inside
       a set that is an $\eps$-kernel in this case, as the projection
       interval on the $x$-axis that the $0.1$-kernel needs to contain
       is wider than the $\eps$-core projection interval.}
    \figlab{core:2}
\end{figure}

For an affine map $T$ in $\reals^2$, it is known that
$T(\coreY{\PS}{\eps}) = \coreY{T(\PS)}{\eps}$, see
\lemref{affine}. The following property of $\coreY{\PS}{\eps}$ follows
from its definition:
\begin{lemma}
    \lemlab{core:4:approx}%
    For any $\eps>0$ and for any $\eps$-kernel $\QS\subseteq\PS$,
    $\coreY{\PS}{\eps} \subseteq \CHX{\QS}$.
\end{lemma}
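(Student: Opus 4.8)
The goal is to show that the $\eps$-core of $\PS$ is contained in the convex hull of any $\eps$-kernel $\QS$. The natural approach is to argue directly from the slab-intersection definition of $\coreY{\PS}{\eps}$: since $\coreY{\PS}{\eps} = \bigcap_{v\in\dirs}(1-\eps)\slabY{v}{\PS}$, it suffices to show $\CHX{\QS} \supseteq (1-\eps)\slabY{v}{\PS}$ fails to be the right statement — rather, I want to show that $\coreY{\PS}{\eps}$, which is much smaller than any single shrunken slab, lands inside $\CHX{\QS}$. So instead I would use the contrapositive together with a separating-line argument: if a point $\pa \notin \CHX{\QS}$, then there is a direction $v$ in which $\pa$ is strictly "outside" $\CHX{\QS}$, and I will show this forces $\pa \notin (1-\eps)\slabY{v}{\PS}$, hence $\pa\notin\coreY{\PS}{\eps}$.

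First I would set up the separating direction. Suppose $\pa \notin \CHX{\QS}$. By the separating hyperplane theorem there is a direction $v\in\dirs$ with $\DotProd{v}{\pa} > \max_{q\in\QS}\DotProd{v}{q}$. Now I invoke the $\eps$-kernel property: since $\QS$ is an $\eps$-kernel, $\pwIY{v}{\QS} \supseteq (1-\eps)\pwIY{v}{\PS}$, which in particular gives $\max_{q\in\QS}\DotProd{v}{q} \ge \max_{p\in\PS}\DotProd{v}{p} - (\eps/2)\pwY{v}{\PS}$, i.e. the right endpoint of $\pwIY{v}{\QS}$ is at least the right endpoint of $(1-\eps)\pwIY{v}{\PS}$. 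Combining, $\DotProd{v}{\pa}$ exceeds the right endpoint of $(1-\eps)\pwIY{v}{\PS}$, which by the definition of the shrunken slab means $\pa \notin (1-\eps)\slabY{v}{\PS}$. Therefore $\pa\notin\coreY{\PS}{\eps}$, and taking the contrapositive yields $\coreY{\PS}{\eps}\subseteq\CHX{\QS}$.

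The main (and essentially only) technical point to be careful about is the bookkeeping between the two equivalent descriptions of the shrunken slab and of the $\eps$-approximation condition: the slab $\slabY{v}{\PS}$ is two-sided ($\hnDirX{v}\cap\hnDirX{-v}$) while the separation argument only controls one side, and I must make sure that violating the one relevant inequality (on the $v$-extreme side) is enough to leave the shrunken slab — which it is, since $(1-\eps)\slabY{v}{\PS} = \Set{\pa}{\DotProd{\pa}{v}\in(1-\eps)\pwIY{v}{\PS}}$ is itself a two-sided constraint and failing either side suffices. I would also note that $\CHX{\QS}\subseteq\CHX{\PS}$ automatically (as $\QS\subseteq\PS$), so no containment direction is lost. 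No delicate estimates arise; the proof is short once the contrapositive/separation setup is in place, which is presumably why the lemma is stated as following "from the definition."
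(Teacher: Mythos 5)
Your argument is correct, and it is exactly the reasoning the paper leaves implicit when it says the lemma "follows from its definition": the kernel condition forces $\max_{q\in\QS}\DotProd{v}{q}$ to be at least the right endpoint of $(1-\eps)\pwIY{v}{\PS}$ for every $v$, so every supporting halfplane of $\CHX{\QS}$ contains the corresponding shrunken slab's constraint, and hence $\coreY{\PS}{\eps}\subseteq\CHX{\QS}$ (your contrapositive/separating-direction formulation is just the dual phrasing of this). No gaps; the one-sided versus two-sided bookkeeping you flag is handled correctly.
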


As demonstrated by \figref{core:2}, the exact converse of the above
lemma does not hold.  The next lemma proves a weaker converse property
of $\coreY{\PS}{\eps}$.
\begin{lemma}
    \lemlab{core:approx}%
    Let $\PSA$ be a set of $n$ points in $\reals^2$.  The polygon
    $\coreY{\PSA}{\eps}$ is a $4\eps$-approximation of $\PSA$ (see
    \defref{approx:and:kernel}).
\end{lemma}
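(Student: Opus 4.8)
I want to show that for every direction $u \in \dirs$, the projection interval $\pwIY{u}{\coreY{\PSA}{\eps}}$ contains $(1-4\eps)\pwIY{u}{\PSA}$. Fix $u$. The slab $\slabY{u}{\PSA}$ is itself one of the slabs intersected in the definition of $\coreY{\PSA}{\eps}$, so $\coreY{\PSA}{\eps} \subseteq (1-\eps)\slabY{u}{\PSA}$, which already gives the \emph{upper} containment $\pwIY{u}{\coreY{\PSA}{\eps}} \subseteq (1-\eps)\pwIY{u}{\PSA} \subseteq \pwIY{u}{\PSA}$; the core is an inner approximation, as claimed. The real content is the reverse inclusion: I need a point of $\coreY{\PSA}{\eps}$ whose $u$-coordinate is within $2\eps\,\pwY{u}{\PSA}$ of the maximum $u$-coordinate over $\PSA$ (and symmetrically for the minimum). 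Equivalently, the shrunken slabs in \emph{other} directions $v$ do not cut away too much near the two faces of $\slabY{u}{\PSA}$.

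\textbf{Main steps.} First I would reduce to a point in convex position: it suffices to prove the statement when $\PSA$ is the vertex set of $\CHX{\PSA}$, since $\coreY{\PSA}{\eps}$ and all projection widths depend only on $\CHX{\PSA}$. By affine invariance of the core (stated in the excerpt) and of $\eps$-approximation (the lemma from \cite{ahv-aemp-04} quoted above), I may normalize: apply an affine map so that the ``John position'' / bounding-box normalization holds — concretely, make $\slabY{u}{\PSA}$ the slab $\{\pa : \DotProd{\pa}{u} \in [-1,1]\}$ after a translation, and additionally scale the orthogonal direction so that $\PSA$ is sandwiched in a box of bounded aspect ratio (e.g. choose the affine map so that $[-1,1]^2$ contains $\PSA$ and $\PSA$ touches all four sides). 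Now I need to exhibit a point $z$ with first coordinate $\geq 1-2\eps$ lying in every shrunken slab $(1-\eps)\slabY{v}{\PSA}$. The natural candidate is $z = (1-\eps)\,w$ where $w$ is the vertex of $\CHX{\PSA}$ extreme in direction $u$ (so $w$ has first coordinate $1$), or more robustly the point obtained by pulling the extreme face slightly toward the center. For a direction $v$, write $\pwIY{v}{\PSA} = [\ell_v, r_v]$ with $r_v - \ell_v = \pwY{v}{\PSA} =: \omega_v$; membership of $z$ in $(1-\eps)\slabY{v}{\PSA}$ means $\DotProd{z}{v} \in [\ell_v + (\eps/2)\omega_v,\ r_v - (\eps/2)\omega_v]$. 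Since $w \in \CHX{\PSA} \subseteq \slabY{v}{\PSA}$ we have $\DotProd{w}{v} \in [\ell_v, r_v]$, and $\DotProd{z}{v} = (1-\eps)\DotProd{w}{v}$, so I must bound how far $(1-\eps)\DotProd{w}{v}$ can drift outside the $(\eps/2)\omega_v$-shrunk interval. The key estimate is $|\DotProd{w}{v}| = |\DotProd{w}{v} - 0|$, and $0$ (the centroid-type origin, which lies in $\interX{\CHX{\PSA}}$ by the running assumption) satisfies $0 \in \slabY{v}{\PSA}$, so $|\DotProd{w}{v}| \leq \max(|\ell_v|, |r_v|) \leq \omega_v + \min(|\ell_v|,|r_v|)$. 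Here the normalization pays off: in a bounded-aspect-ratio box the center $\origin$ can be taken within a constant fraction of $\omega_v$ of the interval $[\ell_v, r_v]$, so $|\DotProd{w}{v}| = O(\omega_v)$, whence $\eps|\DotProd{w}{v}| = O(\eps\,\omega_v)$, and choosing the normalization constants correctly makes this at most $(\eps/2)\omega_v$ on the correct side. Finally, translating the per-direction estimate back through the inverse affine map and summing the two one-sided bounds at the two faces of $\slabY{u}{\PSA}$ yields a total loss of at most $2\eps\,\omega_u + 2\eps\,\omega_u = 4\eps\,\omega_u$ — equivalently $\pwIY{u}{\coreY{\PSA}{\eps}} \supseteq (1-4\eps)\pwIY{u}{\PSA}$ — which is exactly the $4\eps$-approximation claim.

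\textbf{The main obstacle.} The delicate point is getting the \emph{constant} $4$ (rather than some larger $O(1)$) out of the normalization. The argument above is clean only if I can place the ``center'' $\origin$ so that for \emph{every} direction $v$ simultaneously it sits close to the middle of $\pwIY{v}{\PSA}$ relative to $\omega_v$; in general the best one can do via an affine normalization is the John-ellipsoid/bounding-box bound, which controls $\min_v \omega_v / \max_v \omega_v$ but controls the \emph{offset} of $\origin$ within each interval only up to a factor like $1/2$ per direction after the box normalization. So the careful bookkeeping is: (i) choose the affine map $T$ so that $T(\PSA) \subseteq [-1,1]^2$ with a touching point on each side; (ii) observe that then for every $v$, the interval $\pwIY{v}{T(\PSA)}$ has $0$ within distance $\tfrac12\omega_v$ of its midpoint — actually within $\tfrac12\omega_v$ of \emph{every} point of the interval suffices for the crude bound $|\DotProd{w}{v}| \le \omega_v$; (iii) then $\eps|\DotProd{w}{v}| \le \eps\,\omega_v$ feeds a loss of $\eps\,\omega_v$ on each side from the scaling $z=(1-\eps)w$, and a second $\eps\,\omega_v$ on each side from the $(\eps/2)$-shrink of the two slab faces in direction $u$ itself — again totaling $4\eps\,\omega_u$. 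I expect the write-up to spend most of its effort justifying step (ii) precisely (via the fact that a point of $\CHX{\PSA}$ realizing the $v$-width extreme, together with $\origin \in \interX{\CHX{\PSA}}$ and the touching points of the box, pin down $\DotProd{w}{v}$), and then the rest is the short algebraic accounting I sketched, pushed through the affine invariance lemmas so that it holds for the original $\PSA$.
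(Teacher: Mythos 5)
Your high-level plan (affine normalization, then exhibit for each direction a witness point of the core near the extreme vertex) is the right shape, but the two load-bearing steps both have genuine gaps. First, the normalization: forcing $T(\PSA)\subseteq[-1,1]^2$ with a touching point on each side does \emph{not} make the set fat. A thin sliver along the diagonal of the box (e.g.\ a triangle with vertices near $(-1,-1)$ and $(1,1)$) touches all four sides, yet in the direction $v$ normal to the sliver its width $\omega_v$ is arbitrarily small while $|\DotProd{w}{v}|$-type quantities and the offset of the box center inside $\pwIY{v}{T(\PSA)}$ are not controlled in units of $\omega_v$; the origin can even lie essentially on $\partial\CHX{T(\PSA)}$. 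So your step (ii), and the ensuing accounting in which a drift of $\eps|\DotProd{w}{v}|$ is compared to the clearance $(\eps/2)\omega_v$ \emph{for every $v$ simultaneously}, does not go through. This is exactly why the paper normalizes via John's ellipsoid theorem instead: after mapping the largest inscribed ellipse to the unit disk one has $\diskY{\origin}{1}\subseteq\CHX{\PS}\subseteq\diskY{\origin}{2}$, hence every directional width lies in $[2,4]$ and the $(1-\eps)$-shrinking of \emph{every} slab moves each bounding line by at most $(\eps/2)\pwY{v}{\PS}\le 2\eps$ in absolute distance --- the uniform exchange between absolute and relative error that your box position fails to provide.

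Second, even granting a fat position, your witness $z=(1-\eps)w$ (central scaling of the extreme vertex toward the origin) need not lie in the core. Membership in $(1-\eps)\slabY{v}{\PS}$ requires the $v$-projection of the witness to clear the nearby endpoint of $\pwIY{v}{\PS}$ by $(\eps/2)\omega_v$, whereas scaling moves it by only $\eps\,|\DotProd{w}{v}|$, which can be too small (in John position take $\DotProd{w}{v}=r_v=1$ while $\omega_v=3$: you need clearance $1.5\eps$ but gain only $\eps$) and, in a skewed position, can even be in the wrong proportion when $w$ projects near the endpoint on the same side as the origin. The paper avoids this by choosing a different witness: it moves the point an \emph{absolute} distance $4\eps$ toward the origin and certifies core membership by showing the disk of radius $2\eps$ around the moved point lies inside $\CHX{\PS}$ (for points far from the origin this uses a tangent-line angle bound $\ge 30^\circ$, valid because $\|\pa\|\le 2$); this yields that every point of $\CHX{\PS}$ is within Hausdorff distance $4\eps$ of the core, and the directional statement $\pwIY{u}{\coreY{\PS}{\eps}}\supseteq(1-4\eps)\pwIY{u}{\PS}$ then follows since $\pwY{u}{\PS}\ge 2$. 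To repair your write-up you would need to (i) switch to the John-type normalization and (ii) replace the central-scaling witness by an absolute-translation witness with a disk-containment certificate --- at which point you have essentially reconstructed the paper's proof.
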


\begin{proof}
    Let $\ellipse$ be the largest area ellipse inside $\CHX{\PSA}$. By
    John's ellipsoid theorem~\cite{h-gaa-11},
    $\CHX{\PSA} \subseteq 2\ellipse$.  Consider the affine map $T$
    that maps $\ellipse$ to $\diskY{\origin}{1}$, the disk of radius
    $1$ centered at the origin $\origin$. Let $\PS = T(\PSA)$. Then
    $\diskY{\origin}{1} \subseteq \CHX{\PS} \subseteq
    \diskY{\origin}{2}$. By \lemref{core:4:approx}, we have
    $U = \coreY{\PS}{\eps} = T(\coreY{\PSA}{\eps})$.

    For any direction $v\in\dirs$, consider the halfplane
    $\hnDirX{v}$ and its translation $\hnDirMY{v}{2\eps}$ (see
    \defref{supporting:line}). Observe that
    $\CHX{\PS}\subseteq \hnDirX{v}$ and
    $\diskY{\origin}{1} \subseteq \hnDirX{v}$.  We have that
    \begin{equation*}
        \pth{\hnDirMY{v}{2\eps}} \cap
        \pth{\hnDirMY{-v}{2\eps}}
        \subseteq%
        (1-\eps)\slabY{v}{\PS},
    \end{equation*}
    as the slab $(1-\eps)\slabY{v}{\PS}$ is formed by translating
    $\hnDirX{v}$ and $\hnDirX{-v}$ by distance at most
    $(\eps/2)\diamX{\PS} \leq (\eps/2) 4 \leq 2\eps$. Hence,
    \begin{equation*}
        L%
        =%
        \bigcap_{v \in \dirs} \pth{\hnDirMY{v}{2\eps}}
        \subseteq%
        \bigcap_{v \in \dirs} (1-\eps)\slabY{v}{\PS}
        =%
        U.
    \end{equation*}

    We claim that the distance of any point $\pa \in \CHX{\PS}$ from
    $U$ is at most $4\eps$.  If
    $\diskY{\pa}{2\eps} \subseteq \CHX{\PS}$, then
    $\pa \in \hnDirMY{v}{2\eps}$ for all $v$, which implies that
    $\pa \in L \subseteq U$. This argument implies that
    $\diskY{\origin}{1-2\eps} \subseteq U$. As such, any point
    $\pa \in \CHX{\PS} \cap \diskY{\origin}{1+2\eps}$ is in distance
    at most $4\eps$ from $U$.

    \begin{figure}[h]
        \phantom{}\hfill%
        \includegraphics[page=1,width=0.45\linewidth]{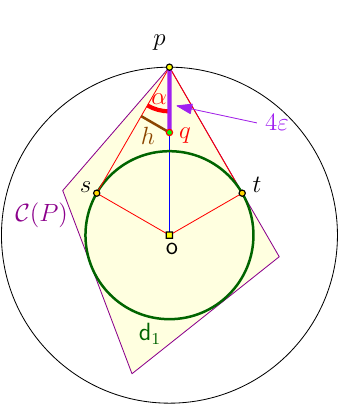}%
        \hfill%
        {\includegraphics[page=2,width=0.45\linewidth]{figs/good_approx}}
        \hfill\phantom{}%
        \caption{Illustration of the proof of
           \protect\lemref{core:approx}.}
        \figlab{proof}
    \end{figure}%

    Let $\pa$ be any point in
    $\CHX{\PS} \setminus \diskY{\origin}{1+2\eps}$.  Let $\pb$ be the
    point at distance $4\eps$ from $\pa$ along the segment
    $\pa \origin$, see \figref{proof}. Let $\pc$ and $\pd$ be the two
    points of $\diskY{\origin}{1}$ at which the tangent from $p$
    touches $\diskY{\origin}{1}$.  As $\dY{\pa}{\origin} \leq 2$, the
    angle $\alpha = \angle \origin\pa\pc \ge 30^\circ$. As such, the
    distance of $\pb$ from $\pa\pc$ is at least
    $4\eps \sin \alpha \geq 2\eps$. We conclude that
    $\diskY{\pb}{2\eps} \subseteq \CHX{\PS}$. This implies that
    $\pb \in L \subseteq U$, which implies the claim, as
    $\dY{\pa}{\pb} =4\eps$.

    Consider any direction $v$, and consider the two points $\pa, \pb$
    that are extreme in $\PS$ in direction $\pm v$. Let $\pa'$ and
    $\pb'$ be the nearest-points to $\pa$ and $\pb$, respectively, in
    $U$. Since projection (on a direction) is a contraction, we have
    that
    \begin{math}
        \cardin{ \DotProd{\pa - \pa'}{v} }%
        \leq%
        \cardin{ \pa - \pa' }%
        \leq%
        4\eps.
    \end{math}
    Similarly, $\abs{\DotProd{\pb - \pb'}{v}} \leq 4\eps$. We conclude
    that
    \begin{equation*}
        \pwY{v}{U}%
        \geq
        \abs{\DotProd{\pa' - \pb'}{v}}%
        \geq%
        \abs{\DotProd{\pa - \pb}{v}} - 8\eps%
        =%
        \pwY{v}{\PS} -8 \eps
        \geq (1-4\eps)
        \pwY{v}{\PS},
    \end{equation*}
    as $\pwY{v}{\PS} \geq \pwY{v}{\diskY{\origin}{1}} \geq 2$.
\end{proof}

\begin{remark}
    The bound of $1-4\eps$ on the approximation factor is most likely
    not tight. Proving a sharp bound on the approximation factor is an
    interesting problem for further research.
\end{remark}

\begin{corollary}
    \corlab{core-ker}%
    Let $\eps \in (0,1)$ be a parameter, and let $\CS \subseteq \PS$
    be a set of points such that $\coreY{\PS}{\eps}\subset\CHX{\CS}$.
    Then $\CS$ is a $4\eps$-kernel of $\PS$.
\end{corollary}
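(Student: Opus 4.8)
The plan is to derive this immediately from \lemref{core:approx} together with the fact that projection onto a direction is monotone under set inclusion. First I would unwind the goal: by \defref{approx:and:kernel}, proving that $\CS$ is a $4\eps$-kernel of $\PS$ means proving that $\CS\subseteq\CHX{\PS}$ (which is clear since $\CS\subseteq\PS$) and that $\pwIY{u}{\CS}\supseteq (1-4\eps)\pwIY{u}{\PS}$ for every direction $u\in\dirs$. Since the projection interval of a finite point set coincides with the projection interval of its convex hull, we have $\pwIY{u}{\CS}=\pwIY{u}{\CHX{\CS}}$, so it suffices to lower-bound $\pwIY{u}{\CHX{\CS}}$.

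Next I would chain the inclusions $\coreY{\PS}{\eps}\subseteq\CHX{\CS}\subseteq\CHX{\PS}$, where the left inclusion is the hypothesis of the corollary and the right one follows from $\CS\subseteq\PS$. Monotonicity of the projection map then gives $\pwIY{u}{\coreY{\PS}{\eps}}\subseteq\pwIY{u}{\CHX{\CS}}$ for every $u$. Now \lemref{core:approx} states that $\coreY{\PS}{\eps}$ is a $4\eps$-approximation of $\PS$, i.e.\ $\pwIY{u}{\coreY{\PS}{\eps}}\supseteq (1-4\eps)\pwIY{u}{\PS}$ for all $u$. Combining the two displayed containments yields $\pwIY{u}{\CHX{\CS}}=\pwIY{u}{\CS}\supseteq (1-4\eps)\pwIY{u}{\PS}$ for all $u$, which together with $\CHX{\CS}\subseteq\CHX{\PS}$ is exactly the statement that $\CS$ is a $4\eps$-approximation, hence a $4\eps$-kernel, of $\PS$.

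I do not expect a real obstacle here: the corollary is essentially a transitivity statement, and the only point requiring a little care is to keep the argument at the level of containment of (shrunken) projection \emph{intervals} rather than just of directional widths, since both \defref{approx:and:kernel} and \lemref{core:approx} are already phrased in terms of interval containment, so the monotonicity step matches them verbatim. Any affine-normalization subtlety (making the point set fat) has already been absorbed into the proof of \lemref{core:approx}, so it need not be revisited; invoking that lemma as a black box is sufficient.
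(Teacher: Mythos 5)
Your proposal is correct and follows exactly the route the paper intends: the paper states the corollary without proof as an immediate consequence of \lemref{core:approx}, and your argument is precisely that deduction, chaining $(1-4\eps)\pwIY{u}{\PS}\subseteq\pwIY{u}{\coreY{\PS}{\eps}}\subseteq\pwIY{u}{\CHX{\CS}}=\pwIY{u}{\CS}$ via monotonicity of projection under the hypothesis $\coreY{\PS}{\eps}\subseteq\CHX{\CS}$. No gaps; keeping the argument at the level of shrunken projection intervals, as you do, is the right way to match \defref{approx:and:kernel}.
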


Many of the known algorithms for computing an $\eps$-kernel of a point
set $\PS$ basically compute a set $C$ of points such that
$\coreY{\PS}{\eps/4}\subset\CHX{\CS}$~\cite{ahv-aemp-04,c-fcscds-06}.

\subsection{Computing the core}
\seclab{compute:core}

We now describe an algorithm for computing $\coreY{\PS}{\eps}$. Our
algorithm will also imply that $\coreY{\PS}{\eps}$ is a convex polygon
with at most $2n$ vertices.

We first compute $\CC = \CHX{\PS}$ the convex-hull of $\PS$. After
having computed $\CC$, we compute the refinement $\ndiag(\PS)$ of the
normal diagram of $\CC$.  Recall that for all direction $u$ in an
interval of $\ndiag(\PS)$, the boundary lines of $\slabY{u}{\PS}$
passes through the same antipodal pair.

For an interval $I=[u,v]$ of $\ndiag(\PS)$, observe that
$R_I := \bigcap_{w \in I} (1-\eps)\slabY{w}{\PS}$ is the parallelogram
$(1-\eps)\slabY{u}{\PS}\cap (1-\eps)\slabY{v}{\PS}$. Indeed, let
$\pa'$ (resp., $\pb'$) be the point along $\pa \pb$ at distance
$(\eps/2)\dY{\pa}{\pb}$ from $\pa$ (reps., $\pb$), i.e.,
$\pa' = (1-\tfrac\eps2)\pa+\tfrac\eps2 \pb$ and
$\pb' = \tfrac\eps2\pa+(1-\tfrac\eps2)\pb$. Now, for any $w \in I$,
the two lines bounding $(1-\eps)\slabY{w}{\PS}$ pass through the
points $\pa',\pb'$. As such,
$R_I =(1-\eps)\slabY{u}{\PS}\cap (1-\eps)\slabY{v}{\PS}$. See
\figref{shrank}. Hence, $\coreY{\PS}{\eps} = \bigcap_{I\in\ndiag(\PS)} R_I$.

\begin{figure}
    \centerline{\includegraphics{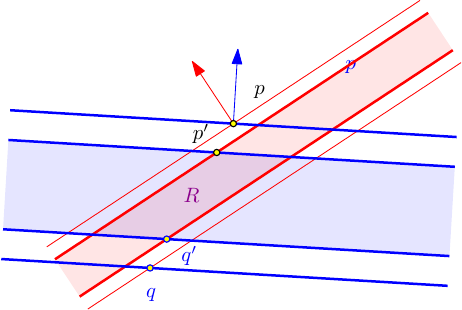}}
    \caption{The rotating calipers and the intersection of the
       shrunken slabs.}
    \figlab{shrank}
\end{figure}%

These shrunken slabs $(1-\eps)\slabY{v}{\PS}$ can be computed for each
endpoint $v$ of an interval of $\ndiag(\PS)$, each of which is an
intersection of two halfplanes. Next, we compute the intersection of
these at most $2n$ halfplanes in $O(n \log n)$
time~\cite{bcko-cgaa-08}. (We note that though $\ndiag$ has $2n$
endpoints, $\slabY{v}{\PS}=\slabY{-v}{\PS}$, so there are only $2n$
halfplanes.)  However, since these halfplanes are computed in sorted
order of their normals (by the rotating caliper algorithm), the
intersection can be computed in linear time. The rotating caliper
algorithm itself runs in linear time once $\CHX{\PS}$ is provided. We
thus obtain the following:

\begin{lemma}
	\lemlab{core:algo}
    Given a set $\PS$ of $n$ points in the plane and a parameter
    $\eps \in (0,1)$, $\coreY{\PS}{\eps}$ is a convex polygon with at
    most $2n$ vertices, and it can be computed in $O(n \log n)$
    time. The running time can be improved to $O(n)$ if $\CHX{\PS}$ is
    provided.
\end{lemma}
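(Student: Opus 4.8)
The plan is to establish the three claims in the lemma — the polygonal nature of $\coreY{\PS}{\eps}$, the bound of $2n$ on its vertex count, and the running-time bounds — all as consequences of the structural observation already developed in the text: that $\coreY{\PS}{\eps}$ is a finite intersection of halfplanes, one pair per interval of $\ndiag(\PS)$. First I would recall that $\CC=\CHX{\PS}$ can be computed in $O(n\log n)$ time (or is given), and that the refined normal diagram $\ndiag(\PS)$ has $|\ndiag(\PS)|\le 2n$ intervals, computable in $O(n)$ additional time by the rotating-caliper sweep. The key step, already justified above, is that for an interval $I=[u,v]\in\ndiag(\PS)$ the region $R_I=\bigcap_{w\in I}(1-\eps)\slabY{w}{\PS}$ equals the single parallelogram $(1-\eps)\slabY{u}{\PS}\cap(1-\eps)\slabY{v}{\PS}$, because the four bounding lines of the shrunken slabs over the whole interval all pass through the two fixed points $\pa'=(1-\tfrac\eps2)\pa+\tfrac\eps2\pb$ and $\pb'=\tfrac\eps2\pa+(1-\tfrac\eps2)\pb$ determined by the antipodal pair $\pa,\pb$ associated to $I$. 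Consequently $\coreY{\PS}{\eps}=\bigcap_{I\in\ndiag(\PS)}R_I$ is the intersection of $2|\ndiag(\PS)|\le 4n$ halfplanes — and since $\slabY{v}{\PS}=\slabY{-v}{\PS}$, antipodal intervals contribute identical slabs, so there are really only $2n$ distinct halfplanes.

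Next I would invoke the standard fact that the intersection of $m$ halfplanes in the plane is a (possibly unbounded, but here bounded since it lies in $\CHX{\PS}$) convex polygon with at most $m$ edges, hence at most $m$ vertices; with $m\le 2n$ this gives the bound of $2n$ vertices and establishes that $\coreY{\PS}{\eps}$ is a convex polygon. For the time bound, one can compute the intersection of $2n$ halfplanes in $O(n\log n)$ time by the standard divide-and-conquer or incremental algorithm~\cite{bcko-cgaa-08}; adding the $O(n\log n)$ cost of computing $\CHX{\PS}$ first, the total is $O(n\log n)$. For the improved $O(n)$ bound when $\CHX{\PS}$ is given: the rotating-caliper traversal produces the endpoints $v$ of the intervals of $\ndiag(\PS)$ in sorted angular order, so the $2n$ halfplanes are generated already sorted by the direction of their outward normals; the intersection of halfplanes presented in sorted slope order can be computed in linear time by a single Graham-scan-style pass (maintaining the upper and lower chains of the boundary), and the caliper traversal itself is linear given the hull. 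This yields the $O(n)$ bound.

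The main obstacle — really the only nontrivial point — is being careful about the claim $R_I=(1-\eps)\slabY{u}{\PS}\cap(1-\eps)\slabY{v}{\PS}$, i.e., that intersecting the shrunken slabs over the continuum of directions $w\in I$ gains nothing beyond intersecting the two at the endpoints. This is exactly the content of the preceding paragraph in the text (the two bounding lines of $(1-\eps)\slabY{w}{\PS}$ pivot around the fixed points $\pa',\pb'$ as $w$ ranges over $I$, so the shrunken slab for an interior $w$ is a "wedge-interpolation" of the two endpoint slabs and contains their intersection), and I would simply cite it; the remaining steps are entirely routine appeals to known planar-geometry primitives. I should also note in passing that $\coreY{\PS}{\eps}$ is nonempty and bounded — nonempty because, e.g., by \lemref{core:approx} it $4\eps$-approximates $\PS$ and in particular (for $\eps<1/4$) has positive width in every direction, and bounded because $\coreY{\PS}{\eps}\subseteq(1-\eps)\slabY{v}{\PS}$ is contained in a bounded slab for each $v$ — so "convex polygon" is the correct description rather than "convex polyhedral region."
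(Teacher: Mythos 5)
Your proposal is correct and follows essentially the same route as the paper: reduce $\coreY{\PS}{\eps}$ to the intersection of the $2n$ halfplanes obtained from the shrunken slabs at the endpoints of $\ndiag(\PS)$ (using the parallelogram observation $R_I=(1-\eps)\slabY{u}{\PS}\cap(1-\eps)\slabY{v}{\PS}$ and the identification $\slabY{v}{\PS}=\slabY{-v}{\PS}$), then intersect halfplanes in $O(n\log n)$ time in general, and in $O(n)$ time when $\CHX{\PS}$ is given since the rotating-caliper sweep emits them sorted by normal direction. The additional remarks on boundedness and nonemptiness are fine side observations and do not change the argument.
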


Finally, we describe an algorithm for computing the smallest subset of
$\PS$ that contains $\coreY{\PS}{\eps}$.

\subsection{Minimal approximation of a convex polygon}
\seclab{reduction}

Let $\PS$ be a set of $n$ points in $\reals^2$, and let
$\InPoly \subseteq \CHX{\PS}$ be a convex polygon with $m$ vertices.
The task at hand is to compute a minimum set of points
$\CS \subseteq \PS$ such that $\InPoly \subseteq \CHX{\CS}$.  Without
loss of generality, assume that $\origin\in\interX{\InPoly}$ and that
no point of $\PS$ lies in the interior of $\InPoly$, as such a point
cannot be part of $\CS$.

\begin{figure}[h]
    \phantom{}%
    \hfill%
    (A)%
    \quad%
    \includegraphics[scale=0.8]{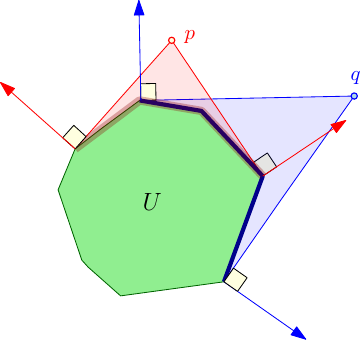} \hfill%
    (B)%
    \quad%
    \includegraphics[page=2]{figs/visible_interval} \\
    \hfill%
    \phantom{}
    \caption{%
       (A) Angular intervals $\AIX{\pa}$ and $\AIX{\pb}$.
       (B) The arc set $\ICS$.
    }
    \figlab{interval:graph}
\end{figure}

For a point $\pa \in \PS$, its \emphw{angular interval} is defined to
be
$\AIX{\pa} = \Set{ v \in \dirs}{ \InPoly \subset \hnDirY{v}{\pa} }$,
the set of directions for which the halfplane $\hnDirY{v}{\pa}$
contains $\InPoly$ (see \defref{h:n}).
The endpoints of $\AIX{\pa}$ are outer normals to
the two tangents of $\InPoly$ from $\pa$. Set
$\ICS = \Set{ \AIX{\pa}}{ \pa \in \PS}$. The set $\ICS$ can be
computed, in $O(n\log m)$ time, by computing the tangents to $\InPoly$
from each point of $\PS$, see \figref{interval:graph}.

\begin{lemma}
    \lemlab{cover:reduction}%
    For a subset $\CS$ of $\PS$, $\CHX{\CS} \supseteq \InPoly$ if and
    only if $\,\ICS_\CS=\Set{\AIX{\pa}}{ \pa \in \CS}$ covers
    $\,\dirs$.
\end{lemma}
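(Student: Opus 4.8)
The plan is to prove both directions of the equivalence by relating the statement ``$\CHX{\CS} \supseteq \InPoly$'' to a statement about supporting halfplanes. The key observation is that a convex set containing $\InPoly$ is equivalent to: for every direction $v\in\dirs$, the supporting halfplane of $\CHX{\CS}$ with outer normal $v$ contains $\InPoly$. Since the supporting halfplane of $\CHX{\CS}$ in direction $v$ is exactly $\hnDirY{v}{\pa}$ where $\pa$ is the extremal point of $\CS$ in direction $v$ (translating by duality from \defref{supporting:line} applied to $\CS$ rather than $\PS$), the condition ``$\InPoly \subseteq \hnDirY{v}{\pa_v}$ for the extremal point $\pa_v$'' holds if and only if \emph{some} point $\pa\in\CS$ has $v\in\AIX{\pa}$, i.e., if and only if $v$ is covered by $\ICS_\CS$.

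First I would prove the ``only if'' direction. Suppose $\CHX{\CS}\supseteq\InPoly$. Fix any direction $v\in\dirs$ and let $\pa_v\in\CS$ be the extremal point of $\CS$ in direction $v$. Then $\CHX{\CS}\subseteq \hnDirY{v}{\pa_v}$ (this is the analog of the observation $\CHX{\PS}\subset\hnDirX{v}$ in \defref{supporting:line}, now applied to the set $\CS$), and since $\InPoly\subseteq\CHX{\CS}$, we get $\InPoly\subseteq\hnDirY{v}{\pa_v}$, which is precisely the statement $v\in\AIX{\pa_v}$. Hence $v$ is covered by $\ICS_\CS$; since $v$ was arbitrary, $\ICS_\CS$ covers $\dirs$.

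Next I would prove the ``if'' direction. Suppose $\ICS_\CS$ covers $\dirs$. I claim $\InPoly\subseteq\CHX{\CS}$. Since $\CHX{\CS}$ is a closed convex set, it equals the intersection of all halfplanes containing it, and in fact $\CHX{\CS} = \bigcap_{v\in\dirs}\hnDirY{v}{\pa_v}$ where $\pa_v$ is the extremal point of $\CS$ in direction $v$ (each such halfplane is a supporting halfplane, and every supporting halfplane of a polygon arises this way). It therefore suffices to show $\InPoly\subseteq\hnDirY{v}{\pa_v}$ for every $v$. Fix $v$; since $\ICS_\CS$ covers $\dirs$, there is some $\pa\in\CS$ with $v\in\AIX{\pa}$, i.e., $\InPoly\subseteq\hnDirY{v}{\pa}$. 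But $\hnDirY{v}{\pa}\subseteq\hnDirY{v}{\pa_v}$ because $\pa_v$ is at least as extreme as $\pa$ in direction $v$ (the bounding line of $\hnDirY{v}{\pa_v}$ is a translate of that of $\hnDirY{v}{\pa}$, pushed away from $\origin$). Hence $\InPoly\subseteq\hnDirY{v}{\pa_v}$, completing the proof.

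The main technical point to be careful about is the representation $\CHX{\CS}=\bigcap_{v\in\dirs}\hnDirY{v}{\pa_v}$ and the monotonicity $\hnDirY{v}{\pa}\subseteq\hnDirY{v}{\pa_v}$ for the extremal point $\pa_v$; both are elementary facts about convex polygons and supporting lines, but they must be stated cleanly since the whole equivalence hinges on them. One should also note that the reduction (no point of $\PS$ lies strictly inside $\InPoly$, $\origin\in\interX{\InPoly}$) guarantees $\AIX{\pa}$ is a well-defined proper arc for each relevant $\pa$, so $\ICS_\CS$ is genuinely a set of arcs in the sense required by \thmref{arc-cover}.
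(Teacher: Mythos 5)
Your argument is sound in substance and stays close to the paper's. The ``only if'' direction is essentially identical: the paper takes the supporting halfplane of $\InPoly$ with outer normal $v$ and a point of $\CS$ not interior to it, you take the extremal point $\pa_v$ of $\CS$ in direction $v$; same content. In the ``if'' direction you argue directly, representing $\CHX{\CS}$ as the intersection of its supporting halfplanes and using monotonicity, whereas the paper argues by contradiction: it picks a vertex $\pb$ of $\InPoly$ lying outside $\CHX{\CS}$, chooses $v$ so that $\hnDirY{v}{\pb}$ supports $\CHX{\CS\cup\InPoly}$, and observes that no $\pa\in\CS$ can then satisfy $v\in\AIX{\pa}$. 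Both routes work; yours is more symmetric with the first direction, the paper's avoids invoking the full halfplane representation.

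One step does need patching, because the paper anchors $\hnDirY{v}{\pa}$ to the \emph{origin} (it is, by definition, the side containing $\origin$). Consequently the identity $\CHX{\CS}=\bigcap_{v\in\dirs}\hnDirY{v}{\pa_v}$ is not an unconditional ``elementary fact'': if $\origin\notin\CHX{\CS}$ (e.g.\ $\CS$ a single point far from the origin), the right-hand side contains the whole ray from that point through $\origin$ and strictly contains $\CHX{\CS}$. Likewise, $\hnDirY{v}{\pa}\subseteq\hnDirY{v}{\pa_v}$ can fail when $\DotProd{v}{\pa}<0<\DotProd{v}{\pa_v}$, since the two halfplanes then lie on opposite sides of their parallel bounding lines. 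The fix is short: reading $v\in\AIX{\pa}$ as the outer-normal condition $\DotProd{v}{\pa}\ge\max_{x\in\InPoly}\DotProd{v}{x}$ (the single arc between the two tangent normals, which is what both you and the paper intend), coverage of $\dirs$ by $\ICS_\CS$ together with $\origin\in\interX{\InPoly}$ gives $\max_{\pa\in\CS}\DotProd{v}{\pa}>0$ for every $v$, hence $\origin\in\interX{\CHX{\CS}}$, after which both of your facts hold exactly as used; alternatively, run the same argument with the outer-normal halfplanes $\Set{x}{\DotProd{v}{x}\le\DotProd{v}{\pa_v}}$, for which $\CHX{\CS}$ is unconditionally the intersection and the monotonicity is immediate. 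With that one-line adjustment your proof is complete.
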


\begin{proof}
    Let $C \subseteq \PS$ be such that $\CHX{\CS}\supseteq
    \InPoly$. Consider a direction $v\in\dirs$. Let $\gnDirX{v}$ be
    the supporting halfplane of $\InPoly$ with $v$ as its outer
    normal. Since $\InPoly \subseteq \CHX{\CS}$, there is a point
    $\pa \in \CS$ that does not lie in the interior of
    $\gnDirX{v}$. Then
    $\InPoly\subset \gnDirX{v} \subseteq \hnDirY{v}{\pa}$, implying
    that $v \in \AIX{\pa}$.  That is, the direction $v$ is
    covered. This implies that $\ICS_\CS$ covers $\dirs$.  See
    \figref{cover:by:intervals} (A).

    Conversely, suppose $\ICS_\CS$ covers $\dirs$. If
    $\CHX{\CS}\not\supset \InPoly$, then there is a vertex $\pb$ of
    $\InPoly$ that lies outside $\CHX{\CS}$, and thus $\pb$ is a
    vertex of $\CHX{\CS\cup \InPoly}$. Let $v$ be a direction such
    that $\hnDirY{v}{\pb}$ is a supporting halfplane of
    $\CHX{\CS\cup \InPoly}$. Since $\pb \not\in \CHX{\CS}$, the set
    $\CHX{\CS}$ lies in the interior of
    $\hnDirY{v}{\pb}$. Consequently, there is no point $\pa \in \CS$
    for which $\InPoly \subset \hnDirY{v}{\pa}$, implying that
    $v \not\in \bigcup \ICS_\CS$. This, however, contradicts the
    assumption that $\ICS_\CS$ covers $\dirs$. Hence,
    $\InPoly \subseteq \CHX{\CS}$.  See \figref{cover:by:intervals}
    (B).
\end{proof}

    \begin{figure}[h]
        \phantom{}\hfill(A)\hfill%
        \includegraphics{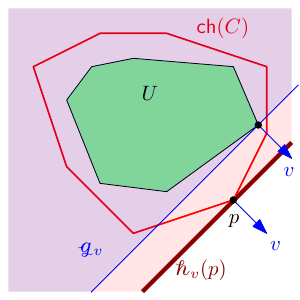}%
        \hfill%
        \hfill%
        (B) \hfill%
        \includegraphics[page=2]{figs/cover_by_intervals} \hfill%
        \phantom{}
        \captionof{figure}{}
        \figlab{cover:by:intervals}
    \end{figure}%

By \lemref{cover:reduction}, the problem of computing the smallest
subset $\dualX{\CS} \subseteq \PS$ with
$\CHX{\dualX{\CS}} \supseteq \InPoly$ reduces to computing a smallest
subset of $\ICS$ that covers $\dirs$. Therefore, using \thmref{arc-cover}, we obtain the following:

\begin{corollary}
    \corlab{inpoly}%
    Let $\PS$ be a set of $n$ points in $\reals^2$, and let
    $\InPoly \subseteq \CHX{\PS}$ be a convex polygon with $m$
    edges. A minimum set $\CS \subseteq \PS $ such that
    $\InPoly \subseteq \CHX{\PS}$ can be computed in $O(m+n \log mn)$
    time.
\end{corollary}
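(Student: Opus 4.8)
The plan is to package $\PS$ and $\InPoly$ into an instance of circular arc cover and invoke \thmref{arc-cover}, with \lemref{cover:reduction} supplying the correctness of the reduction; essentially all the work is in the preprocessing. First I would normalize the input in the way assumed just before the corollary: translate so that $\origin \in \interX{\InPoly}$, and discard every point of $\PS$ lying in $\interX{\InPoly}$ (such a point has empty angular interval and can never belong to a feasible $\CS$). For each remaining point $\pa \in \PS$ I would compute the two tangent lines from $\pa$ to the convex polygon $\InPoly$; their outer normals are the two endpoints of the angular interval $\AIX{\pa}$, which is a single circular arc on $\dirs$ (in fact of length less than $\pi$, being the normal cone of $\pa$ as a vertex of $\CHX{\InPoly \cup \{\pa\}}$). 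Since $\InPoly$ is given as a convex polygon with its $m$ vertices in cyclic order, one such tangent pair can be found by binary search in $O(\log m)$ time after a single $O(m)$ pass over $\InPoly$; hence the whole collection $\ICS = \Set{\AIX{\pa}}{\pa \in \PS}$ is produced in $O(m + n\log m)$ time, as already asserted in the text preceding the corollary.

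Next I would run the algorithm of \thmref{arc-cover} on $\ICS$ to obtain, in $O(n\log n)$ time, a smallest subset $\mathcal{J} \subseteq \ICS$ covering $\dirs$; such a cover exists because the hypothesis $\InPoly \subseteq \CHX{\PS}$ together with \lemref{cover:reduction} already guarantees that $\ICS$ itself covers $\dirs$. I return $\CS = \Set{\pa \in \PS}{\AIX{\pa} \in \mathcal{J}}$. Feasibility is immediate from \lemref{cover:reduction}: since $\ICS_{\CS}$ covers $\dirs$, we get $\CHX{\CS} \supseteq \InPoly$. For optimality, observe that the correspondence is size-preserving in both directions — any feasible $\CS'$ yields, by \lemref{cover:reduction}, the cover $\Set{\AIX{\pa}}{\pa \in \CS'}$ of size at most $|\CS'|$, and conversely any sub-cover of $\ICS$ corresponds to a feasible point set of no larger size — so $|\CS|$ equals the minimum feasible size. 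Adding the three phases gives $O(m + n\log m + n\log n) = O(m + n\log mn)$, which is the claimed bound.

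I do not expect a genuine obstacle: given \lemref{cover:reduction} and \thmref{arc-cover} the argument is short. The one place needing care is the preprocessing accounting — justifying the $O(\log m)$ cost of a tangent query on a convex polygon and that the $O(m)$ term is paid only once — together with a clean treatment of degenerate configurations, such as points lying exactly on $\partial\InPoly$ or distinct points with identical angular intervals; these affect only which representative point is reported, not the optimal count, so they do not change the statement.
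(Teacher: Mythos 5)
Your proposal is correct and follows essentially the same route as the paper: compute each point's angular interval via tangents to $\InPoly$ in $O(m+n\log m)$ total time, invoke \thmref{arc-cover} on the resulting arcs in $O(n\log n)$ time, and use \lemref{cover:reduction} for the size-preserving correspondence in both directions. The extra details you supply (discarding interior points, the arcs having length less than $\pi$, the binary-search tangent queries) are exactly the implicit preprocessing the paper relies on.
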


We first compute $\coreY{\PS}{\eps/4}$ and then compute the smallest
subset of $\PS$ that contains $\coreY{\PS}{\eps/4}$.  By applying
\corref{core-ker}, \lemref{core:algo}, and \corref{inpoly} we obtain
the following result:

\begin{theorem}
    \thmlab{eps-kernel1}%
    Let $\PS$ be a set of $n$ points in
    $\reals^2$, and let $\eps\in (0,1)$ be a parameter.  A
    $\eps$-kernel of $\PS$ of size at most $\opt_{\eps/4}$ can be
    computed in $O(n\log n)$ time.
\end{theorem}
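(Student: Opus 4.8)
The plan is to chain together the three ingredients developed in this section. First I would compute the core $\coreY{\PS}{\eps/4}$ via \lemref{core:algo}; this takes $O(n\log n)$ time and returns a convex polygon $\InPoly\subseteq\CHX{\PS}$ with at most $2n$ vertices. (If needed, I would translate $\PS$ together with $\InPoly$ so that $\origin\in\interX{\InPoly}$, as in \secref{reduction}; this is harmless since the core is translation-invariant and the $\eps$-kernel property is preserved by translation, and $\InPoly$ is full-dimensional because $\eps/4<1$.) Then I would invoke \corref{inpoly} with this $\InPoly$ (so $m = O(n)$) to compute, in $O(m + n\log mn) = O(n\log n)$ time, the smallest subset $\CS\subseteq\PS$ with $\InPoly\subseteq\CHX{\CS}$.

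Two things then need to be checked. For correctness: since $\coreY{\PS}{\eps/4} = \InPoly \subseteq\CHX{\CS}$, \corref{core-ker} applied with parameter $\eps/4$ shows that $\CS$ is a $4(\eps/4)=\eps$-kernel of $\PS$. For the size bound: let $\QS\subseteq\PS$ be an optimal $(\eps/4)$-kernel, so $\cardin{\QS}=\opt_{\eps/4}$; by \lemref{core:4:approx} we have $\coreY{\PS}{\eps/4}\subseteq\CHX{\QS}$, so $\QS$ is a feasible solution to the covering problem solved by \corref{inpoly}, whence $\cardin{\CS}\le\cardin{\QS}=\opt_{\eps/4}$ by minimality of $\CS$. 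Summing the running times gives $O(n\log n)$ overall.

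This is essentially a bookkeeping argument, so I do not expect a genuine obstacle. The one point deserving care is that the two halves of the core characterization pull in opposite directions and both must be used: \lemref{core:4:approx} guarantees that the computed set is no larger than $\opt_{\eps/4}$, while \corref{core-ker} (through \lemref{core:approx}) guarantees that it is a valid $\eps$-kernel --- and one must keep the factor $4$ straight by running the whole pipeline at parameter $\eps/4$. A secondary routine check is that the hypotheses of \corref{inpoly} are met, i.e., $\InPoly\subseteq\CHX{\PS}$ and $\origin\in\interX{\InPoly}$, both handled above.
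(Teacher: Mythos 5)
Your proposal is correct and follows essentially the same route as the paper: compute $\coreY{\PS}{\eps/4}$ via \lemref{core:algo}, cover it with the minimum subset of $\PS$ via \corref{inpoly}, and conclude via \corref{core-ker} (validity) and \lemref{core:4:approx} (size bound $\opt_{\eps/4}$). The paper states this pipeline tersely; your write-up merely makes the size-bound argument and the hypotheses of \corref{inpoly} explicit.
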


\section{Optimal Hausdorff approximation}
\seclab{hausdorff}

In this section, we consider the following variant of the
$\eps$-kernel problem: Given a set $\PS$ of $n$ points in $\reals^2$
and a parameter $\eps$, we wish to compute a subset
$\CS \subseteq \PS$ of the smallest size such that the Hausdorff
distance between $\CHX{\CS}$ and $\CHX{\PS}$ is at most $\eps$.
Recall that the \emphi{Hausdorff distance} between two sets $X, Y
\subseteq \Re^2$ is
\begin{equation*}
    \max\Bigl(\max_{x\in X} \min_{y\in Y} \dY{x}{y},\, \max_{y\in Y}\min_{x\in X}
    \dY{x}{y}\Bigr).
\end{equation*}

\begin{figure}[h]
    \centerline{\includegraphics{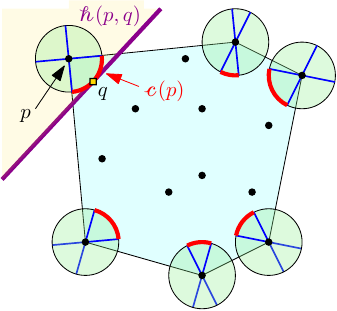}}
    \caption{$\CHX{\PS}$, circular arcs $\mathcalb{c}(\pa)$, and the
       halfplane $\hn(\pa,\pb)$. Every point on a red arc encodes a
       halfplane that must be hit. }
    \figlab{extremal}
\end{figure}

Analogous to the $\eps$-kernel problem, this problem also can be
restated as a hitting-set problem: for any direction $v$, the
halfplane $\hpDirMY{v}{\eps}$, see \defref{supporting:line}, must be
hit by a point of $\CS$.  This set of halfplanes can be interpreted as
follows: For every vertex $\pa$ of $\CHX{\PS}$, we place a unit circle
centered at the vertex of radius $\eps$. Let $\mathcalb{c}(\pa)$ be
the circular arc consisting of all the points $\pb$ on this circle
such that $\pb \pa$ intersects the interior of $\CHX{\PS}$, and that
$\pa$ is extremal in the direction $\pa - \pb$, see
\figref{extremal}. For a point $\pb \in \mathcalb{c}(\pa)$, let
$\hn(\pa,\pb)$ denote the halfplane that is (i) orthogonal to
$\pa \pb$, (ii) its boundary passes through $\pb$, and (iii) it
contains $\pa$.  The set halfplanes that have to be hit by $\CS$ is
exactly
\begin{equation*}
    \HH = \Set{ \hn(\pa,\pb) }%
    { \pa \in \VX{\CHX{\PS}}, \pb \in \mathcalb{c}(\pa)}.
\end{equation*}

\begin{figure}
    \centering
    \begin{tabular}{*{3}{c}}
      \includegraphics[page=1,width=0.2\linewidth]{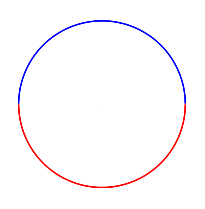}
      \includegraphics[page=2,width=0.2\linewidth]{figs/dual_circle}
      &
        \hspace*{0.25in}
      &
        \includegraphics[width=0.4\linewidth]{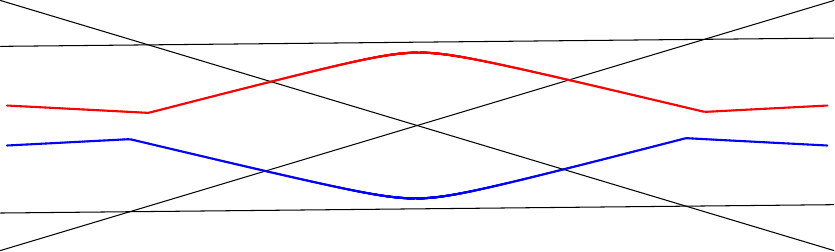}
      \\
      (a)&&(b)
    \end{tabular}
    \caption{(a) A circle and its dual hyperbolic arc. (b) In the
       dual, under the Hausdorff distance, the $\eps$-upper
       envelope and $\eps$-lower envelope are chains of hyperbolic
       arcs.}
    \figlab{dual:circle}
\end{figure}

As in \secref{optimal:i:c}, using polarity, $\HH$ is mapped to a star
shaped region $\PP$ (with respect to the origin).  $\partial \PP$
consists of the points that are polar of the lines bounding the
halfplanes in $\HH$. The problem of computing a hitting set of $\HH$
is equivalent to computing the smallest subset of $\LS = \polarX{\PS}$
that is a blocking set of $\PP$. The latter problem can be reduced to
computing the smallest arc cover in a set $\Xi$ of unit arcs. We can
follow the same approach as in \secref{optimal:i:c} and compute the
smallest-size arc cover in $O(n\opt\log n)$ time, where $\opt$ is the
size of the smallest blocking set, provided we have a data structure
for computing $\shoot{\Line}{u}$ in $O(\log n)$ time after
$O(n\log n)$ preprocessing.  Since $\partial\PP$ is now piecewise
algebraic and $\PP$ is not a polygon, the data structures
in~\cite{hs-parss-95,cg-vip-89} cannot be used directly.  Instead, we
use duality instead of polarity and segment-intersection-searching
data-structure described in~\cite{cg-fca} (which works for our setting
as well).  It is known that the dual of the lines tangent to a
circular arc is a hyperbolic arc (see for example \cite[Lemma
6.5]{chj-oagcd-21}) -- this is illustrated in
\figref{dual:circle}~(a).  The $\eps$-upper and $\eps$-lower envelopes
in the dual plane are now defined as follows:

\[ \uY{\LS}{x} = \UY{\LS}{x}-\eps\sqrt{1+x^2}\quad \mbox{and}\quad
    \lY{\LS}{x} = \LY{\LS}{x}+\eps\sqrt{1+x^2} .\] Each of them is an
$x$-monotone piecewise-hyperbolic curve. Each arc of $\uZ{\LS}$
(resp.\ $\lZ{Y}$) is a piece of the boundary of a concave (resp.\
convex) function.  Therefore $\uZ{\LS}$ can be viewed as a mountain
chain, and $\lZ{\LS}$ as a chain of craters.
See \figref{dual:circle}~(b) for an example.

The data structure in~\cite{cg-fca} can be adapted in a
straightforward manner to answer ray-shooting queries in $\uZ{\LS}$
and $\lZ{\LS}$.  In particular, we can preprocess $\uZ{\LS}$ in
$O(n\log n)$ time so that for a query ray emanating from a point lying
above $\uZ{\LS}$, its first intersection point with $\uZ{\LS}$ can be
computed in $O(\log n)$ time.  The same holds for $\lZ{\LS}$. Hence,
putting everything together, we obtain the following:

\begin{theorem}
    \thmlab{hausdorff:opt}%
    Given a set $\PS$ of $n$ points in the plane, and a parameter
    $\eps>0$, a set $\QS \subseteq \PS$ of size $\optHX{\eps}$ can be
    computed, in $O(n \optHX{\eps} \log n)$ time, such that the
    Hausdorff distance between $\CHX{\PS}$ and $\CHX{\QS}$ is at most
    $\eps$, where $\optHX{\eps}$ is the minimum size of any such set.
\end{theorem}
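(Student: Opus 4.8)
The plan is to mirror, almost verbatim, the pipeline developed for the strong $\eps$-kernel in \secref{opt-kernel} and \secref{star-cover}, substituting the halfplane family $\HH$ (defined via the circular arcs $\mathcalb{c}(\pa)$) for the family $\HH_\eps$ of $\eps$-shifted supporting halfplanes. First I would establish the hitting-set reformulation: a subset $\CS \subseteq \PS$ achieves Hausdorff distance at most $\eps$ between $\CHX{\CS}$ and $\CHX{\PS}$ if and only if $\CS$ hits every halfplane in $\HH$. One direction is immediate ($\CHX{\CS}\subseteq\CHX{\PS}$ always, so only the ``every boundary point of $\CHX{\PS}$ is within $\eps$ of $\CHX{\CS}$'' condition matters, and that condition at a point $\pa-\eps v$ on the boundary of $\hpDirMY{v}{\eps}$ is exactly that some point of $\CS$ lies in $\hpDirMY{v}{\eps}$); for the converse one checks that if $\CS$ hits all of $\HH$ then no point of $\partial\CHX{\PS}$ is at distance $>\eps$ from $\CHX{\CS}$, which is where the restriction of $\mathcalb{c}(\pa)$ to directions in which $\pa$ is actually extremal is used. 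This is the analogue of the lemma preceding \defref{eps:supporting:line}.

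Next I would apply polarity to convert the (infinite) hitting-set instance into a star-polygon blocking-set instance, exactly as in \lemref{polar-kernel}: the polars of the lines bounding the halfplanes of $\HH$ trace out a closed curve $\partial\PP$ that is star-shaped with respect to $\origin$ and contains $\faceoX{\LS}$, where $\LS=\polarX{\PS}$; and $\CS$ is a hitting set of $\HH$ iff $\polarX{\CS}$ is a blocking set of $\PP$. The only structural difference is that $\partial\PP$ is now a chain of algebraic (hyperbolic/conic) arcs rather than a polygon, because each circular arc $\mathcalb{c}(\pa)$ polarizes to an arc of a conic rather than to a line segment. I would then invoke \lemref{kernel-arc} and the greedy pruning argument of \secref{star-cover} (Lemmas~\ref{lemma:opt-hs-1} and~\ref{lemma:opt-hs}) verbatim — these arguments are purely combinatorial/incidence-geometric and never use that $\SP$ is a polygon, only that it is star-shaped and that one can answer the ray-shooting primitive $\shoot{u}{\Line}$. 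So the reduction yields an $O(\opt n\log n)$-time algorithm, once the primitive is available, and correctness (an \emph{exact} optimum, not a bicriteria one) follows from \lemref{reduced-arc-cover} and \lemref{opt-hs}.

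The main obstacle — and the only genuinely new ingredient — is supplying the ray-shooting data structure for the curved region $\PP$, since the polygon-specific structures of~\cite{cg-vip-89,hs-parss-95} no longer apply. My plan here is to pass to the dual (as the paper's remark suggests, duality and polarity are interchangeable for this purpose), where $\partial\PP$ becomes the pair of curves $\uY{\LS}{x}=\UY{\LS}{x}-\eps\sqrt{1+x^2}$ and $\lY{\LS}{x}=\LY{\LS}{x}+\eps\sqrt{1+x^2}$; each is an $x$-monotone piecewise-hyperbolic curve, and each hyperbolic piece bounds a concave (resp.\ convex) function, so $\uZ{\LS}$ is a ``mountain chain'' and $\lZ{\LS}$ a ``chain of craters'' — exactly the input type handled by the segment-intersection/ray-shooting structure of~\cite{cg-fca}. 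I would argue that that structure preprocesses $\uZ{\LS}$ in $O(n\log n)$ time to support, in $O(\log n)$ time, the query: given a ray from a point above $\uZ{\LS}$, find its first intersection with $\uZ{\LS}$ (and symmetrically for $\lZ{\LS}$); the needed property is that the curve, restricted to any sub-x-range, has bounded-description pieces and the overall curve is $x$-monotone with $O(n)$ breakpoints, which is all~\cite{cg-fca} requires. Feeding this primitive into the pruned-greedy-then-arc-cover machinery gives the claimed $O(\optHX{\eps}\, n\log n)$ bound, completing the proof.
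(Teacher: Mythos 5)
Your proposal follows essentially the same route as the paper's own proof: the same hitting-set reformulation via the halfplane family $\HH$, the same polarity reduction to a blocking-set problem for a star-shaped region with curved boundary, the reuse of the greedy pruning and arc-cover machinery of \secref{star-cover}, and the same switch to duality with the piecewise-hyperbolic envelopes $\UY{\LS}{x}-\eps\sqrt{1+x^2}$ and $\LY{\LS}{x}+\eps\sqrt{1+x^2}$ handled by the ray-shooting structure of~\cite{cg-fca}. This matches the paper's argument and yields the claimed $O(n\,\optHX{\eps}\log n)$ bound.
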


\section{Conclusions}

In this paper, we studied the problem of computing optimal kernels in
the plane, both in the strong and weak sense. Surprisingly, this very
natural problem had not received much attention when kernels were
developed around twenty years ago. The problem has surprisingly
non-trivial structure, and getting near linear running time to compute
them exactly required non-trivial ideas and care. A natural open
question is whether an instance-optimal $\eps$-kernel of $n$ points in
$\reals^2$ can be computed in $O(n\log n)$ time (our algorithm is
slower by a factor of $O(1/\sqrt{\eps})$, see \thmref{opt-kernel}).

\DCGVer{\bmhead{Acknowledgments} %

Work by the first author on this paper was partially supported by NSF
grants \si{IIS-18-14493} and CCF-20-07556.  Work by the second author
on this paper was partially supported by an NSF AF award CCF-1907400.

}

\RegVer{%
   \BibTexMode{%
      \DCGVer{%
         \bibliographystyle{plainurl}%
      }%
      \RegVer{%
         \bibliographystyle{alpha}%
      }%
   } }%
\BibTexMode{%
   \bibliography{opt_kernel}%
}

\BibLatexMode{\printbibliography}

\RegVer{%
\begin{figure}[h]
    \centerline{%
       \animategraphics%
       [autoplay,width=0.9\linewidth,loop]%
       {4} %
       {figs/polar_outer} %
       {1} %
       {74} %
    }
    \caption{Animation: Another example of $\CHX{\PS}$,
       $\IS(\polarX{\PS})$, and $\IS_\eps(\polarX{\PS})$.  See also
       \figref{polar-ker}.  }
    \figlab{animation}
\end{figure}
}

\end{document}